\def\showauthornotes{0}
\def\showtableofcontents{1}
\def\showkeys{0}
\def\showdraftbox{0}
\def\showcolorlinks{1}
\def\usemicrotype{1}
\def\showfixme{0}
\newtheorem{theorem}{Theorem}[section]
\newtheorem*{theorem*}{Theorem}
\newtheorem*{proposition*}{Proposition}
\newtheorem{lemma}[theorem]{Lemma}
\newtheorem*{lemma*}{Lemma}
\newtheorem*{conjecture*}{Conjecture}
\newtheorem*{fact*}{Fact}
\newtheorem*{hypothesis*}{Hypothesis}
\theoremstyle{definition}
\newtheorem{definition}[theorem]{Definition}
\theoremstyle{remark}
\newtheorem{claim}[theorem]{Claim}
\newtheorem*{claim*}{Claim}
\newtheorem{remark}[theorem]{Remark}
\newtheorem*{remark*}{Remark}
\newtheorem*{observation*}{Observation}
\newcommand{\savehyperref}[2]{\texorpdfstring{\hyperref[#1]{#2}}{#2}}
\newcommand{\Sref}[1]{\hyperref[#1]{\S\ref*{#1}}}
\let\ffrac=\flatfrac
\newcommand{\Authornote}[2]{{\sffamily\small\color{red}{[#1: #2]}}}
\newcommand{\Authornotecolored}[3]{{\sffamily\small\color{#1}{[#2: #3]}}}
\newcommand{\Authorcomment}[2]{{\sffamily\small\color{gray}{[#1: #2]}}}
\newcommand{\Authorstartcomment}[1]{\sffamily\small\color{gray}[#1: }
\newcommand{\Authorfnote}[2]{\footnote{\color{red}{#1: #2}}}
\newcommand{\Authorfixme}[1]{\Authornote{#1}{\textbf{??}}}
\newcommand{\Authormarginmark}[1]{\marginpar{\textcolor{red}{\fbox{\Large #1:!}}}}
\newcommand{\Authornote}[2]{}
\newcommand{\Authornotecolored}[3]{}
\newcommand{\Authorcomment}[2]{}
\newcommand{\Authorstartcomment}[1]{}
\newcommand{\Authorfnote}[2]{}
\newcommand{\Authorfixme}[1]{}
\newcommand{\Authormarginmark}[1]{}
\newcommand{\Dnote}{\Authornote{D}}
\newcommand{\Dcomment}{\Authorcomment{D}}
\newcommand{\Dfnote}{\Authorfnote{D}}
\newcommand{\Bnote}{\Authornote{B}}
\newcommand{\paren}[1]{(#1)}
\newcommand{\Paren}[1]{\left(#1\right)}
\newcommand{\card}[1]{\lvert#1\rvert}
\newcommand{\Set}[1]{\left\{#1\right\}}
\newcommand{\norm}[1]{\lVert#1\rVert}
\newcommand{\iprod}[1]{\langle#1\rangle}
\newcommand{\Esymb}{\mathbb{E}}
\newcommand{\Psymb}{\mathbb{P}}
\DeclareMathOperator*{\E}{\Esymb}
\DeclareMathOperator*{\ProbOp}{\Psymb}
\renewcommand{\Pr}{\ProbOp}
\newcommand{\textparen}[1]{\text{(#1)}}
\newcommand{\because}[1]{\textparen{because #1}}
\renewcommand{\because}[1]{\textparen{because #1}}
\newcommand{\sge}{\succeq}
\newcommand{\super}[2]{#1^{\paren{#2}}}
\newcommand{\seteq}{\mathrel{\mathop:}=}
\newcommand{\mper}{\,.}
\newcommand{\mcom}{\,,}
\newcommand\bdot\bullet
\newcommand{\Ind}{\mathbb I}
\newcommand{\Ind}{\mathds 1}
\DeclareMathOperator{\poly}{poly}
\DeclareMathOperator{\polylog}{polylog}
\newcommand{\Hoelder}{H\"{o}lder\xspace}
\newcommand{\N}{\mathbb N}
\newcommand{\R}{\mathbb R}
\newcommand{\cA}{\mathcal A}
\newcommand{\cD}{\mathcal D}
\newcommand{\cE}{\mathcal E}
\newcommand{\cL}{\mathcal L}
\newcommand{\cR}{\mathcal R}
\renewcommand{\leq}{\leqslant}
\renewcommand{\le}{\leqslant}
\renewcommand{\geq}{\geqslant}
\renewcommand{\ge}{\geqslant}
\newcommand{\draftbox}{\begin{center}
  \fbox{%
    \begin{minipage}{2in}%
      \begin{center}%
          \Large\textsc{Working Draft}\\%
        Please do not distribute%
      \end{center}%
    \end{minipage}%
  }%
\end{center}
\vspace{0.2cm}}
\newcommand{\draftbox}{}
\let\epsilon=\varepsilon
\numberwithin{equation}{section}
\newcommand\MYcurrentlabel{xxx}
\newcommand{\MYstore}[2]{%
  \global\expandafter \def \csname MYMEMORY #1 \endcsname{#2}%
}
\newcommand{\MYload}[1]{%
  \csname MYMEMORY #1 \endcsname%
}
\newcommand{\MYnewlabel}[1]{%
  \renewcommand\MYcurrentlabel{#1}%
  \MYoldlabel{#1}%
}
\newcommand{\MYdummylabel}[1]{}
\newcommand{\torestate}[1]{%
  \let\MYoldlabel\label%
  \let\label\MYnewlabel%
  #1%
  \MYstore{\MYcurrentlabel}{#1}%
  \let\label\MYoldlabel%
}
\newcommand{\restatetheorem}[1]{%
  \let\MYoldlabel\label
  \let\label\MYdummylabel
  \begin{theorem*}[Restatement of \prettyref{#1}]
    \MYload{#1}
  \end{theorem*}
  \let\label\MYoldlabel
}
\newcommand{\restatelemma}[1]{%
  \let\MYoldlabel\label
  \let\label\MYdummylabel
  \begin{lemma*}[Restatement of \prettyref{#1}]
    \MYload{#1}
  \end{lemma*}
  \let\label\MYoldlabel
}
\newcommand{\restateprop}[1]{%
  \let\MYoldlabel\label
  \let\label\MYdummylabel
  \begin{proposition*}[Restatement of \prettyref{#1}]
    \MYload{#1}
  \end{proposition*}
  \let\label\MYoldlabel
}
\newcommand{\restatefact}[1]{%
  \let\MYoldlabel\label
  \let\label\MYdummylabel
  \begin{fact*}[Restatement of \prettyref{#1}]
    \MYload{#1}
  \end{fact*}
  \let\label\MYoldlabel
}
\newcommand{\restate}[1]{%
  \let\MYoldlabel\label
  \let\label\MYdummylabel
  \MYload{#1}
  \let\label\MYoldlabel
}
\newcommand{\addreferencesection}{
  \phantomsection
  \addcontentsline{toc}{section}{References}
}
\newcommand{\e}{\epsilon}
\let\origparagraph\paragraph
\renewcommand{\paragraph}[1]{\origparagraph{#1.}}
\let\pref=\prettyref
\let\super\varsuper
\renewcommand{\Ind}{\mathds 1}
\DeclareMathOperator*{\pE}{\tilde {\mathbb E}}
\newcommand{\sleq}{\preceq}
\newcommand{\sgeq}{\succeq}
\newcommand{\cor}{\mathrm{Cor}}
\title{Dictionary Learning and Tensor Decomposition via the Sum-of-Squares Method}
\author{%
Boaz Barak\thanks{Microsoft Research.}
\and Jonathan A.\ Kelner\thanks{Department of Mathematics, Massachusetts Institute of Technology.}
\and David Steurer\thanks{Department of Computer Science, Cornell University.}
}
\newcommand{\nips}[2]{#2}
\newcommand{\stoceq}[1]{$$#1$$}
\begin{document}

\maketitle
\draftbox
\thispagestyle{empty}

\begin{abstract}

\Dnote{}

We give a new approach to the dictionary learning (also known as ``sparse coding'') problem of recovering an unknown $n\times m$ matrix $A$ (for $m \geq n$) from examples of the form
\[
y = Ax + e,
\]
where $x$ is a random vector in $\mathbb R^m$ with at most $\tau m$ nonzero coordinates, and $e$ is a random noise vector in $\mathbb R^n$ with bounded magnitude.
For the case $m=O(n)$, our algorithm recovers every column of $A$ within arbitrarily good constant accuracy in time $m^{O(\log m/\log(\tau^{-1}))}$, in particular achieving
polynomial time if $\tau = m^{-\delta}$ for any $\delta>0$, and time $m^{O(\log m)}$ if $\tau$ is (a sufficiently small) constant.
Prior algorithms with comparable assumptions on the distribution required the vector $x$ to be much sparser---at most $\sqrt{n}$ nonzero coordinates---and there were intrinsic barriers preventing these algorithms from applying for denser $x$.

We achieve this by designing an algorithm for \emph{noisy tensor decomposition} that can recover, under quite general conditions, an approximate rank-one decomposition of a tensor $T$, given access to a tensor $T'$ that is $\tau$-close to $T$ in the spectral
norm (when considered as a matrix). To our knowledge, this is the first algorithm for tensor decomposition that works in the constant spectral-norm noise regime, where there is no guarantee
that the local optima of $T$ and $T'$ have similar structures.

Our algorithm is based on a novel approach to using and analyzing the \emph{Sum of Squares} semidefinite programming hierarchy (Parrilo 2000, Lasserre 2001), and it can be viewed as an indication of the utility of this very general and powerful tool for unsupervised learning problems.

\end{abstract}

\medskip
\noindent
{\small \textbf{Keywords:}
sparse coding, dictionary learning, sum-of-squares method, semidefinite programming, machine learning, unsupervised learning, statistical recovery, approximation algorithms, tensor optimization, polynomial optimization.
}

\clearpage

\ifnum\showtableofcontents=1
{
\tableofcontents
\thispagestyle{empty}
 }
\fi

\clearpage

\section{Introduction}
\label{sec:introduction}

The \emph{dictionary learning} (also known as ``sparse coding'') problem is to recover an unknown $n\times m$ matrix $A$ (known as a ``dictionary'') from examples of the form
\begin{equation}
y = Ax + e  \;, \label{eq:sparse-coding:noise}
\end{equation}
where $x$ is sampled from a distribution over  \emph{sparse} vectors in $\R^m$ (i.e., with much fewer than $m$ nonzero coordinates),
and $e$ is sampled from a distribution over noise vectors in $\R^n$ of some bounded magnitude.

This problem has found applications in multiple areas, including computational neuroscience~\cite{OlshausenF97,OlshausenF96,OlshausenF96b}, machine learning~\cite{ArgyriouEP06,RanzatoBL2007}, and computer vision and image processing~\cite{EladA2006,MairalLBHP2008,YangWHY2008}.
The appeal of this problem is that, intuitively, data should be sparse in the ``right'' representation
 (where every coordinate corresponds to a meaningful feature),
and finding this representation can be a useful first step for further processing, just as representing sound or image data in the Fourier or wavelet bases is often a very useful preprocessing step in signal or image processing.
See \cite{SpielmanWW12,AgarwalA0NT13,AroraGM13,AroraBGM14} and the references therein for further discussion of the history and motivation of this problem.

This is a nonlinear problem, as both $A$ and $x$ are unknown, and dictionary learning is a computationally challenging task  even in the noiseless case.
When $A$ is known, recovering $x$ from $y$ constitutes the \emph{sparse recovery / compressed sensing} problem,
which has efficient algorithms~\cite{Donoho2006,CandesRT2006}.
Hence, a common heuristic for dictionary learning is to use alternating minimization, using sparse recovery to obtain a guess for $x$ based on a guess of $A$, and vice versa.

Recently there have been several works giving dictionary learning algorithms with rigorous guarantees on their performance~\cite{SpielmanWW12,AgarwalA0NT13,AgarwalAN13,AroraGM13,AroraBGM14}.
These works differ in various aspects, but they all share a common feature: they give no guarantee of recovery unless the distribution $\{ x \}$ is over \emph{extremely
sparse} vectors, namely having less than $O(\sqrt{n})$ (as opposed to merely $o(n)$) nonzero coordinates.
(There have been other works dealing with the less sparse case, but only at the expense of making strong assumptions on $x$ and/or $A$;
see Section~\ref{sec:relworks} for more discussion of related works.)

In this work we give a different algorithm that can be proven to approximately recover the matrix $A$ even when $x$ is much denser (up to $\tau n$ coordinates for some small constant $\tau>0$ in some settings).
The algorithm works (in the sense of approximate recovery) even with noise, in  the so-called overcomplete case (where $m>n$), and without making any incoherence assumptions on the dictionary.

Our algorithm is based on the \emph{Sum of Squares} (SOS) semidefinite programming hierarchy~\cite{Shor87,Nesterov00,Parrilo00,Lasserre01}.
The SOS algorithm is a very natural method for solving non-convex optimization problems that has found applications in a variety of scientific fields, including control theory~\cite{henrion2005positive}, quantum
information theory~\cite{doherty2002distinguishing}, game theory~\cite{Parrilo2006polynomial}, formal verification~\cite{Harrison2007}, and more.
Nevertheless, to our knowledge this work provides the first rigorous bounds on the SOS algorithm's running time for a natural unsupervised learning problem.

\subsection{Problem definition and conditions on coefficient distribution}
\label{sec:intro:nice}
\label{sec:intro-nice}

In this section we formally define the dictionary learning problem and state our result.
We define a \emph{$\sigma$-dictionary} to be an $m\times n$ matrix $A=(a^1|\cdots|a^m)$ such that
$\norm{a^i}=1$ for all $i$, and $A^\top A \sleq \sigma I$ (where $I$ is the identity matrix).
The parameter $\sigma$ is an analytical proxy for the overcompleteness $m/n$ of the dictionary $A$.
In particular, if the columns of $A$ are in isotropic position (i.e., $A^\top A$ is proportional to the identity),
then the top eigenvalue of $A^\top A$ is its trace divided by $n$, which equals $(1/n)\sum_i \norm{a^i}^2  = m/n$
because all of the $a^i$'s have unit norm.\nips{}{\footnote{%
While we do not use it in this paper, we note that in the dictionary learning problem it is always possible to learn a linear ``whitening transformation'' $B$ from the samples that would place the columns in isotropic position, at the cost of potentially changing the norms of the vectors.
 (There also exists a linear transformation that keeps the vectors normalized~\cite{Barthe98,Forster01}, but we do not know how to learn it from the samples.)}}
In this work we are mostly interested in the case $m=O(n)$, which corresponds to $\sigma = O(1)$.

\paragraph{Nice distributions}
Let  $\{ x \}$ be some distribution over the coefficients in (\ref{eq:sparse-coding:noise}).
We will posit some conditions on low-order moments of $\{ x \}$ to allow recovery.
Let $d$ be some even constant that we will use as a parameter (think of $d=O(1)$).
Consider a $0/1$ vector $x\in \R^m$ with $\tau m$ nonzero coordinates.
Then $\tfrac{1}{m}\sum_{k\in [m]} x_i^d = \tau$ and $(\tfrac{1}{m}\sum x_i^{d/2})^2 = \tau^2$.
In other words, if we select three ``typical''  coordinates $i,j,k$, then
\begin{equation}
x_i^{d/2} x_j^{d/2} \leq \tau x_k^{d/2} \label{eq:single-vec-sparse}.
\end{equation}

Equation (\ref{eq:single-vec-sparse}) will motivate us in defining an analytical proxy for the condition that the distribution $\{ x \}$ over coefficients is $\tau$-sparse.\footnote{%
By using an analytical proxy as opposed to requiring strict sparsity, we are only enlarging the set of distributions under consideration.
However, we will make some additional conditions below, and in particular requiring low order non-square moments to vanish, that although seemingly mild compared to prior works, do restrict the family of distributions.}

Specifically, in the dictionary learning case,
since we are interested in learning \emph{all} column vectors, we want every coordinate $i$ to be typical (for example, if the coefficient $x_i$ is always $0$ or always $1$,
we will not be able to learn the corresponding column vector).
Moreover, a necessary condition for recovery is that every \emph{pair} of coordinates is somewhat typical in the sense that the events that $x_i$ and $x_j$ are nonzero are not perfectly correlated.
Indeed, suppose for simplicity that when $x_i$ is nonzero, it is distributed like an independent standard Gaussian.
Then if those two events were perfectly correlated, recovery would be impossible since the distribution over examples would be identical
if we replaced  $\{a_i,a_j \}$ with any pair  of vectors $\{ \Pi a^i, \Pi a^j  \}$ where $\Pi$ is a rotation in the plane spanned by $\{ a^i,a^j\}$.

Given these considerations, if we normalize our distribution so that $\E x_i^d = 1$ for all $i$,
then it makes sense to assume:\footnote{Our results generalize to the case where $\E x_i^d \in [c,C]$ for some constants $C>c>0$.}
\begin{equation}
\E x_i^{d/2} x_j^{d/2} \leq \tau \;, \label{eq:nice}
\end{equation}
for all $i\neq j$ and for some $\tau \ll 1$.

We can assume without loss of generality that the marginal distribution $\{x_i\}$ is symmetric around zero
(namely $\Pr[x_i = a] = \Pr[ x_i = -a]$ for all $a$), since given two samples $y=Ax+e$ and $y'=Ax'+e'$ we can treat them
as a single sample $y-y'=A(x-x')+e-e'$, and the distribution $x-x'$, which is only slightly less sparse (and slightly more noisy), has this property.
In particular this means we can assume $\E x_i^{2k+1}=0$ for every integer $k$ and $i\in [m]$.
We will strengthen this condition to assume that
\begin{equation}
 \E x^{\alpha} = 0 \label{eq:odd-vanish}
\end{equation} for every non-square monomial $x^\alpha$ of degree at most $d$.
(Here, $\alpha \in \{0,1,\ldots \}^m$ is a \emph{multiindex} and $x^{\alpha}$ denotes the monomial $\prod_i x_i^{\alpha_i}$.
The degree of $x^\alpha$ is $|\alpha|\seteq \sum_i \alpha_i$;
we say that $x^{\alpha}$ is non-square if $x^\alpha$ is not the square of another monomial, i.e.,, if $\alpha$ has an odd coordinate.)

We say that a distribution $\{ x \}$ is \emph{$(d,\tau)$-nice} if it satisfies (\ref{eq:nice}) and (\ref{eq:odd-vanish}).\footnote{We
will also assume that $\E x_i^{2d} \leq n^{c}$ for some constant $c$. This is a very mild assumption, and in some qualitative sense is necessary to avoid pathological cases such as a distribution that outputs the all zero vector
with probability $1-n^{-\omega(1)}$.}
One example for a $(d,\tau)$-nice distribution is the \emph{Bernoulli-Gaussian} distribution, where $x_i = y_iz_i$ with the $y_i$'s being independent $0/1$ random variables satisfying
$\Pr[ y_i = 1] = \tau$ and the  $z_i$'s being independent normally distributed random variables (normalized to satisfy $\E z_i^d = 1/\tau$).
Indeed, in this case, since (using Cauchy-Schwarz) $\E z_i^{d/2} z_j^{d/2} \leq \sqrt{\E z_i^d \E z_j^d} = 1/\tau$,
\[
\E x_i^{d/2} x_j^{d/2} = (\E y_i y_j) (\E z_i^{d/2} z_j^{d/2}) \leq \tau^2 (1/\tau) = \tau \;.
\]

In fact, we can replace here the normal distribution with any distribution satisfying $\E z_i^d =1$,  and also allow some dependence between the variables (in particular encapsulating the models considered by~\cite{AroraGM13}).
As our discussion above and this example demonstrates, the parameter $\tau$ serves as a proxy to the sparsity of $\{ x \}$, where a $(d,\tau)$-nice distribution $\{ x \}$ roughly corresponds to a distribution having at most $\tau n$ coordinates with significant mass.
(For technical reasons, our formal definition of nice distributions, Definition~\ref{def:nice}, is somewhat different but is qualitatively equivalent to the above\nips{}{, see Remark~\ref{rem:niceness-definition}}.)

\begin{remark} Another way to justify this notion of nice distributions is that, as our analysis shows, it is a natural way to ensure that if $a$ is a column of the dictionary then the random variable
$\iprod{a,y}$ for a random sample $y$ from (\ref{eq:sparse-coding:noise}) will be ``spiky'' in the sense that it will have a large $d$-norm compared to its $2$-norm. Thus it is a fairly clean way to enable recovery, especially in the setting (such as ours) where we don't assume orthogonality or even incoherence between the dictionary vectors.
\end{remark}

\paragraph{Modeling noise}
Given a noisy dictionary learning example of the form $y = Ax +e$, one can also view it (assuming we are in the non-degenerate case of $A$ having full rank) as $y= A(x+e')$ for some $e'$ (whose magnitude
is controlled by the norm of $e$ and the condition number of $A$). If $e'$ has sufficiently small magnitude, and is composed of i.i.d random variables (and even under more general conditions), the distribution $\{ x + e' \}$ will be nice as well. Therefore, we will not explicitly model the noise in the following, but rather treat it as part of the distribution $\{ x \}$ which our definition allows to be only ``approximately sparse''.

\subsection{Our results}

Given samples of the form $\{ y = Ax \}$ for a $(d,\tau)$-nice $\{x \}$, with $d$ a sufficiently large constant (corresponding to
having $\tau n$ nonzero entries), we can approximately recover the dictionary $A$ in polynomial
time as long as $\tau \leq n^{-\delta}$ for some $\delta>0$, and in quasipolynomial time as long as $\tau$ is a sufficiently small constant.
Prior polynomial-time algorithms required the distribution to range over vectors with less than $\sqrt{n}$ nonzero entries
 (and it was not known how to improve upon this even using quasipolynomial time).

We define the \emph{correlation} of a pair of vectors $u$, $a$, to be $\cor(u,a) = \iprod{u,a}^2/(\norm{u}\norm{a})^2$.
We say that two sets $S,T$ of vectors are \emph{$\e$-close} if for every $s\in S$ there is $t\in T$ such that $\cor(s,t) \geq 1-\e$, and for every $t\in T$ there is $s\in S$ such that $\cor(s,t) \geq 1-\e$.\footnote{%
This notion corresponds to the sets $\{ s/\norm{s} : s\in S\} \cup \{ -s/\norm{s} : s\in S\}$
and  $\{ t/\norm{t} : t\in T\}\cup \{ -t/\norm{t} : t\in T\}$  being close in Hausdorff distance, which makes sense in our setting, since we can only hope to recover the dictionary columns up to permutation and scaling.}

\begin{theorem}[Dictionary learning] \label{ithm:dict-learn}
For every $\e>0,\sigma \geq 1, \delta>0$ there exists $d$ and a
polynomial-time algorithm $\cR$ such that
for every $\sigma$-dictionary $A=(a^1|\cdots|a^m)$ and $(d,\tau=n^{-\delta})$-nice $\{ x\}$,
given $n^{O(1)}$ samples from from $\{ y = Ax \}$,
$\cR$ outputs with probability at least $0.9$ a set that is $\e$-close to $\{ a^1,\ldots, a^m \}$.
\end{theorem}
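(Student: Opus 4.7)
The plan is to reduce dictionary learning to noisy tensor decomposition and then invoke the SOS-based decomposition procedure that is the paper's main technical contribution. Fix an even constant $d$ to be chosen large enough (depending on $\e, \sigma, \delta$), and collect $N = \poly(n)$ samples $y^{(1)}, \ldots, y^{(N)}$ of the form $y = Ax$. The first step is to form the empirical moment tensor
\[
T' \;=\; \tfrac1N \sum_{t=1}^N (y^{(t)})^{\otimes d} \; \in \; (\R^n)^{\otimes d}.
\]
Using the mild moment bound $\E x_i^{2d} \leq n^c$ together with standard matrix/tensor concentration (viewing $T'$ as a matrix in $\R^{n^{d/2}\times n^{d/2}}$), a polynomial number of samples suffice to guarantee $\|T' - T\|_{\mathrm{spec}} \leq n^{-C}$ with probability $\geq 0.99$, where $T = \E[(Ax)^{\otimes d}]$ and $C$ is as large as we like.

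The second step is to show that $T$ is spectrally close to the ``clean'' rank-one decomposition $T_0 = \sum_{i=1}^m (a^i)^{\otimes d}$ with error controlled by $\tau$. Writing $T = \sum_{\alpha:|\alpha|=d} \E[x^\alpha]\cdot A^{\otimes d}(e_\alpha)$, the niceness condition (\ref{eq:odd-vanish}) kills every non-square multi-index, so only terms where each coordinate of $x$ appears with even multiplicity survive. The diagonal contributions ($\alpha = d\cdot e_i$) give exactly $\sum_i \E[x_i^d]\,(a^i)^{\otimes d} \approx T_0$, while the mixed square contributions are controlled by $\E[x_i^{d/2} x_j^{d/2}] \leq \tau$ from (\ref{eq:nice}) (combined with the $\sigma$-dictionary bound $A^\top A \preceq \sigma I$, which tames the spectral norm of the residual tensor viewed as an $n^{d/2}\times n^{d/2}$ matrix). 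The upshot is $\|T - T_0\|_{\mathrm{spec}} \leq \tau \cdot \sigma^{O(d)}$, and hence $\|T' - T_0\|_{\mathrm{spec}} \leq 2\tau\sigma^{O(d)}$ for $\tau$ as in the theorem.

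The third and main step is to feed $T'$ to the noisy tensor decomposition algorithm promised in the abstract. Conceptually this means setting up the degree-$O(d)$ SOS relaxation of the polynomial optimization problem
\[
\text{maximize } T'(u,u,\ldots,u) \quad \text{subject to } \|u\|^2 = 1,
\]
obtaining a pseudo-expectation $\pE$ on $\R^n$ with $\pE[T'(u,\ldots,u)] \geq 1 - o(1)$ (using $u = a^i$ as a feasible point and the $\sigma$-dictionary bound on cross-correlations $\sum_{j\neq i}\iprod{a^i,a^j}^d$, which is small for $d$ large), and then rounding $\pE$ to recover the columns $a^i$ one at a time (e.g., sampling a random direction, using it to shift the pseudo-distribution, and extracting a vector highly correlated with some $a^i$). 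The decomposition step will require $d$ to satisfy $\sigma^{O(d)}\tau \leq \e^{O(1)}$, i.e., $d = O(\log(1/\e)/\log(\tau^{-1})) = O(1/\delta)$, which yields a polynomial-time algorithm when $\tau = n^{-\delta}$. Standard SOS rounding then guarantees the output vectors have correlation $\geq 1-\e$ with distinct columns of $A$.

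The hard part is the third step, namely the SOS analysis of noisy tensor decomposition: one must (i) certify within the low-degree SOS proof system that the polynomial $T'(u,\ldots,u)$ is close to $\sum_i \iprod{a^i,u}^d$ using only the niceness inequalities and the dictionary bound $A^\top A \preceq \sigma I$, and (ii) argue that any pseudo-distribution achieving near-maximum objective value must be supported (in the pseudo sense) near the set $\{\pm a^i\}$, so that a rounding scheme can extract all of them. Once this noisy tensor decomposition guarantee is in place, combining it with the moment concentration of step 1 and the $\|T' - T_0\|_{\mathrm{spec}}$ bound of step 2 yields the theorem; the sparsity/runtime tradeoff quoted in the theorem is exactly the one in the decomposition primitive, with polynomial time for $\tau = n^{-\delta}$ and quasipolynomial for constant $\tau$.
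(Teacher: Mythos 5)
Your proposal correctly identifies the overall reduction --- form the empirical degree-$d$ moment polynomial from samples, show it is spectrally close to $\sum_i\iprod{a^i,u}^d$ using the niceness conditions, and hand the result to an SOS-based noisy tensor decomposition --- and this matches the paper's structure exactly (Lemma~\ref{lem:reduce-tensor} plus Theorem~\ref{thm:tensor-decomp}). Your steps 1 and 2 are fine: the paper's version of step 2 is even slightly simpler, observing that niceness makes the difference polynomial $\|v\|_d^d + \tau d^d\|v\|_2^d - \E_x\iprod{x,v}^d$ a nonnegative combination of square monomials, hence a sum of squares outright, with no explicit tensor flattening needed.

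The genuine gap is in how you obtain \emph{polynomial} time. You treat the decomposition primitive as a black box and assert that ``the sparsity/runtime tradeoff quoted in the theorem is exactly the one in the decomposition primitive,'' but the noisy tensor decomposition algorithm that takes only the polynomial $P$ as input (Theorem~\ref{thm:tensor-decomp}) runs in time $n^{(1/\e)^{O(1)}(d+\log m)}$ --- the $\log m$ in the exponent comes from the averaging step in Lemma~\ref{lem:tensor-decomp-main}, where a column $c$ with $\pE\iprod{c,u}^k \ge e^{-\e'k}/m$ is found, forcing the SOS degree $k$ to be $\Omega(\log m)$ to absorb the $1/m$ loss. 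Thus choosing $d=O(1/\delta)$ does \emph{not} by itself make the algorithm polynomial; it remains $n^{\Theta(\log n)}$. The paper's actual polynomial-time argument (Section~\ref{sec:polytime}) requires a separate idea: instead of picking the reweighing polynomial $W$ as a product of random Gaussian linear forms, one constructs $W = \prod_j\iprod{y_j,u}^2$ from \emph{fresh samples} $y_j\sim\{y=Ax\}$. The niceness of $\{x\}$ makes each factor already well-correlated with some column $a^i$ after implicit reweighing by $x_i^2$ (Lemma~\ref{lem:reweighing}), which lets the analysis in Theorem~\ref{thm:refined-reweighing} replace the $1/m$ averaging loss by a $\tau$-dependent quantity, so that $k = O(\e^{-1}\log\sigma + \log m/\log(1/\tau))$ suffices; this is a constant when $\tau = n^{-\delta}$. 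Without this refined, sample-aware reweighing, your argument proves only the quasipolynomial Theorem~\ref{thm:dict-learn}, not the polynomial-time claim of the statement.
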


The hidden constants in the $O(\cdot)$ notation may depend on $\e,\sigma, \delta$.
The algorithm can recover the dictionary vectors even in the relatively dense case when  $\tau$ is (a sufficiently small) constant, at the expense of a quasipolynomial
(i.e., $n^{O(\log n)}$) running time.
See Theorems~\ref{thm:dict-learn} and~\ref{thm:dict-learn-poly} for a precise statement of the dependencies between the constants.

\begin{remark} Our algorithm aims to recover the vectors up to $\e$-accuracy, with a running time as in a PTAS that depends (polynomially) on $\e$ in the exponent.  Prior algorithms achieving exact recovery needed to assume much stronger conditions, such as incoherence of dictionary columns.
Because we have not made incoherence assumptions and have only assumed the signals obey an analytic notion of sparsity, exact recovery is not possible, and there are limitations on how precisely one can recover the dictionary vectors (even information theoretically).

We believe that it is important to understand the extent to which dictionary recovery can be performed with only weak assumptions on the model, particularly given that real-world signals are often only approximately sparse and have somewhat complicated distributions of errors.
When stronger conditions are present that make better error guarantees possible, our algorithm can provide an initial solution for local search methods (or other recovery algorithms) to boost the approximate solution to a more precise one.
We believe that understanding the precise tradeoffs between the model assumptions, achievable precision, and running time  is an interesting question for further research.

We also note that approximate recovery is directly useful in some applications (e.g., for learning applications one might only need to know if a feature, which is modeled by the magnitude of $\iprod{a,y}$ for a dictionary column $a$, is ``on'' or ``off'' for a particular sample $y$. By an averaging argument, for a typical sample $y$ and feature $a$, the
events that $\iprod{a,y}$ us large and  $\iprod{\Tilde{a},y}$ is large would have about $1-\e$ correlation, where $\Tilde{a}$ is the approximation we produce for $a$.
\end{remark}

Our main tool is a new algorithm for the \emph{noisy tensor decomposition problem}, which is of interest in its own right.
This is the problem of recovering the set $\{ a^1,\ldots, a^m \}$ of vectors given access to a noisy version of the polynomial
$\sum_{i=1}^m \iprod{a^i,u}^d = \norm{A^\top u}_d^d$ in $\R[u]$, where $A=(a^1|\cdots|a^m)$ is an $n\times m$ matrix.\footnote{%
For a vector $v\in\R^m$ and $p\geq 1$, we define $\norm{v}_p = (\sum_i |v_i|^p)^{1/p}$.
}
We give an algorithm that is worse than prior works in the sense that it requires a higher value of $d$, but can handle a much larger level of noise than these previous algorithms.
The latter property turns out to be crucial for the dictionary learning application.
Our result for noisy tensor decomposition is captured by the following theorem:

\begin{theorem}[Noisy tensor decomposition] \label{thm:noisy-decomp}
For every $\e>0, \sigma \geq 1$, there exists $d,\tau$ and a
probabilistic $n^{O(\log n)}$-time algorithm $\cR$ such that
for every $\sigma$-dictionary $A=(a^1|\cdots|a^m)$,
given a polynomial $P$ such that
\begin{equation}
\norm{A^{\top}u}_d^d - \tau \norm{u}_2^d \sleq  P \sleq  \norm{A^{\top}u}_d^d + \tau\norm{u}_2^d  \;, \label{eq:tensor-noise}
\end{equation}
$\cR$ outputs with probability at least $0.9$ a set $S$ that is $\e$-close to $\{ a^1,\ldots, a^m \}$.
\end{theorem}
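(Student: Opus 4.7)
My plan is to solve a high-degree Sum-of-Squares relaxation of the polynomial $P$ on the unit sphere, and then extract the columns of $A$ by a Gaussian-reweighting rounding scheme. I will take both the tensor order $d$ and the SOS degree $k$ to be $\Theta(\log m/\e)$, which forces the $n^{O(\log n)}$ running time. Concretely, let $\pE$ denote the pseudo-expectation returned by the degree-$O(k)$ SOS relaxation of $\max P(u)$ subject to $\|u\|_2^2 = 1$. Plugging any column $a^i$ into the primal and using $P \sgeq \|A^\top u\|_d^d - \tau \|u\|_2^d$ gives $P(a^i) \geq \|A^\top a^i\|_d^d - \tau \geq 1 - \tau$, so $\pE[P(u)] \geq 1 - \tau$. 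Combining with the matching upper SOS bound and $\pE[\|u\|_2^d]=1$ then yields $\pE\bigl[\sum_i \iprod{a^i,u}^d\bigr] \geq 1 - 2\tau$.

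Next I would prove an SOS ``spikiness'' inequality showing that this forces the pseudo-distribution to be pseudo-concentrated on $u$'s near some column $\pm a^i$. A prototype is $\sum_i \iprod{a^i,u}^d \sleq \sigma \cdot M(u)$, where $M(u)$ is an SOS-friendly surrogate for $\max_i \iprod{a^i,u}^{d-2}$, derived from the $\sigma$-dictionary spectral bound $\sum_i \iprod{a^i,u}^2 \sleq \sigma \|u\|_2^2$ together with Cauchy--Schwarz-type SOS manipulations. To isolate a single column I would sample $g \sim \mathcal{N}(0,I_n)$ and form the reweighted pseudo-expectation $\pE'[f(u)] = \pE[\iprod{g,u}^{2k} f(u)]\,/\,\pE[\iprod{g,u}^{2k}]$. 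With probability $\Omega(1/m)$ a random $g$ is better aligned with some unique $a^{i^*}$ than with any other column by a factor $1 + \Omega(1/\log m)$; the degree-$2k$ reweighting amplifies this gap so that $\pE'$ pseudo-concentrates on $\pm a^{i^*}$ up to error $\e$. I would then output the top eigenvector of $\pE'[u u^\top]$, which by the concentration must be $\e$-close to $a^{i^*}$. Repeating with $\poly(m)$ independent Gaussians $g$ and clustering the outputs yields a set that is $\e$-close to $\{a^1,\ldots,a^m\}$.

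The main obstacle is carrying out the whole of this analysis strictly inside SOS, using only the PSD closeness hypothesis (\ref{eq:tensor-noise}) rather than any structural information about $P$. The key point is that (\ref{eq:tensor-noise}) is a PSD (SOS-proof) closeness, so the noise bounds $\lvert P - \|A^\top u\|_d^d\rvert \sleq \tau \|u\|_2^d$ survive multiplication by arbitrary low-degree SOS polynomials such as the degree-$2k$ reweighter $\iprod{g,u}^{2k}$; one must, however, choose $\tau$ small enough depending on $\e,\sigma,d,k$ so that after all these multiplications the residual noise in $\pE'[u u^\top]$ stays below the spectral gap separating $a^{i^*}$ from the rest. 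Two additional technical points to settle are (i) an SOS-provable form of the spikiness inequality at degree $O(k)$, which is where the $\sigma$-dictionary hypothesis really enters, and (ii) showing that a random Gaussian $g$ produces a uniquely-maximizing column with the needed gap with probability $\Omega(1/m)$, so that $\poly(m)$ samples suffice to hit every column.
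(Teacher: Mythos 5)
Your overall framework (SOS relaxation of $\max P$ on the sphere, an SOS ``spikiness'' inequality via the $\sigma$-dictionary bound, then a random Gaussian reweighting to isolate a column) is the same as the paper's, but there are two places where the proposal as written would not go through.

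\textbf{The reweighting polynomial is too weak.} You propose reweighting by $\iprod{g,u}^{2k}$ for a \emph{single} Gaussian $g\sim\mathcal N(0,I_n)$ and then amplifying a gap of size $1+\Omega(1/\log m)$ by raising to the power $2k$. But with $k=\Theta(\log m)$ (which is forced by the $n^{O(\log n)}$ time budget) this gives an amplification of $(1+\Theta(1/\log m))^{2k}=e^{\Theta(1)}$, i.e.\ only a constant factor. Since there are $m$ competing columns, a constant-factor suppression is nowhere near enough to make $\pE'[uu^\top]$ approximately rank one; in the model case where $\pE$ is uniform on $\{\pm a^1,\ldots,\pm a^m\}$, the mass contributed by the second-best-aligned column remains a constant fraction of that of the best-aligned one, and the top eigenvector of $\pE'[uu^\top]$ need not correlate with any single column. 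To isolate one column among $m$ you need a per-column suppression on the order of $m^{-\Omega(1)}$. The paper achieves this by reweighting with a \emph{product of $\Theta(\log m)$ independent} Gaussian squares, $W=\prod_{j=1}^{k/2}\iprod{\xi^{(j)},u}^2$: with inverse-polynomial probability each factor has alignment $\geq 2$ with the target column while remaining $\leq 1.9$ with the others, so the product achieves a gap $(2/1.9)^{\Theta(\log m)}=m^{\Omega(1)}$. This use of many fresh random directions, rather than one random direction raised to a high power, is the crucial ingredient you are missing.

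\textbf{``Cluster many reweightings'' cannot recover all columns.} The spikiness lemma (averaging over $\|A^\top u\|_k^k$) only guarantees that \emph{some} column $c$ satisfies $\pE\iprod{c,u}^k\geq e^{-\e k}$; the pseudo-distribution might be supported entirely near a single $\pm a^{i^*}$. In that case no Gaussian reweighting, applied to this one fixed $\pE$, can reveal any other column — the information simply isn't present in the low-degree moments. The paper resolves this by an outer iteration: after recovering a vector $s$, it adds the polynomial constraint $\{\iprod{s,u}^2\leq 1-\gamma\}$ and re-solves the SOS program, forcing the new pseudo-distribution to be spiky on a different column. Your ``sample $\poly(m)$ Gaussians and cluster the outputs'' step does not do this and would stall after one column in the worst case.

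A smaller point: the theorem quantifies $d$ and $\tau$ before the dictionary is given, so $d$ cannot scale as $\Theta(\log m/\e)$; in the paper $d=O(\e^{-1}\log\sigma)$ is a constant, and only the SOS degree $k$ grows as $\Theta(\log m)$.

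Everything else — the derivation of $\pE\|A^\top u\|_d^d\geq 1-O(\tau)$ from the PSD closeness hypothesis, the use of $\|A^\top u\|_k^k$ as the SOS surrogate for a max, and outputting a rank-one approximation of $\pE'[uu^\top]$ — is in line with the paper.
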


(We denote $P \sleq Q$ if $Q-P$ is a sum of squares of polynomials. Also, as in Theorem~\ref{ithm:dict-learn}, there are certain conditions under which $\cR$ runs in polynomial time; see \nips{the full version of this paper} {Section~\ref{sec:polytime}}.)

The condition (\ref{eq:tensor-noise}) implies that the input $P$ to $\cR$ is  $\tau$-close to the tensor $\norm{A^\top u}_d^d$,
in the sense that $| P(u) - \norm{A^\top u}_d^d| \leq \tau$ for every unit vector $u$.
This allows for very significant noise, since for a typical vector $u$, we expect $\norm{A^\top u}_d^d$ to be have magnitude roughly $mn^{-d/2}$ which would be \emph{much}
smaller than $\tau$ for every constant $\tau>0$.
Thus, on most of its inputs, $P$ can behave radically differently than $\norm{A^\top u}_d^d$, and in particular have many local minima that do not correspond to local minima of the latter
polynomial.
For this reason, it seems unlikely that one can establish a result such as Theorem~\ref{thm:noisy-decomp} using a local search algorithm.\footnote{The conditions (\ref{eq:tensor-noise})
and $\max_{\norm{u}_2=1} | P(u) - \norm{A^\top u}_d^d| \leq \tau$ are not identical for $d>2$.
Nevertheless, the discussion above applies to both conditions, since (\ref{eq:tensor-noise}) does allow for $P$ to have very different behavior than $\norm{A^\top u}_d^d$.}

We give an overview of our algorithm and its analysis in Section~\ref{sec:overview}. \nips{The full paper contains}{Sections~\ref{sec:dict-learn}, \ref{sec:tensor-decomp-proof} and \ref{sec:sample} contain} the complete formal proofs.
In its current form, our algorithm is efficient only in the theoretical/asymptotic sense, but it is very simple to describe (modulo its calls to the SOS solver), see Figure~\ref{fig:basic-alg}.
We believe that the Sum of Squares algorithm can be a very useful tool for attacking machine learning problems, yielding a first solution to the problem that can later be tailored and optimized.

\Bnote{}

\subsection{Related work} \label{sec:relworks}

Starting with the work of Olshausen and Field~\cite{OlshausenF96,OlshausenF96b,OlshausenF97}, there is a vast body of literature
using various heuristics (most commonly alternating
minimization) to learn dictionaries for sparse coding, and applying this tool to many applications.
Here we focus on papers that gave algorithms with \emph{proven} performance.

\emph{Independent Component Analysis (ICA)}~\cite{comon1994independent} is one method that can be used for the dictionary learning
in the case the random variables $x_1,\ldots,x_n$ are  statistically independent.
For the case of $m=n$ this was shown in~\cite{comon1994independent,FriezeJerrumKannan96,NguyenR09}, while the works~\cite{DeLathauwer07,GoyalVX13} extend it for the overcomplete (i.e. $m>n$) case.
\Bnote{}

Another recent line of works analyzed different algorithms, which in some cases are more efficient or handle more general distributions
than ICA. Spielman, Wang and Wright~\cite{SpielmanWW12} give an algorithm to exactly recover the dictionary in the $m=n$ case.
Agarwal, Anandkumar, Jain, Netrapalli, and Tandon~\cite{AgarwalA0NT13}  and Arora, Ge and Moitra~\cite{AroraGM13} obtain
approximate recovery in the overcomplete (i.e. $m>n$) case,  which can be boosted to exact recovery
under some additional conditions on the sparsity and dictionary~\cite{AgarwalAN13,AroraGM13}.
However, all these works require the distribution $x$ to be over \emph{very} sparse vectors, specifically having less than $\sqrt{n}$ nonzero entries.
As discussed in~\cite{SpielmanWW12,AroraGM13}, $\sqrt{n}$ sparsity seemed like a natural barrier for this problem,
and in fact, Spielman et al~\cite{SpielmanWW12} proved that every algorithm of similar nature to theirs will fail to recover the dictionary when when the coefficient vector can have $\Omega(\sqrt{n \log n})$ coordinates.
The only work we know of that can handle vectors of support larger than $\sqrt{n}$ is the recent paper~\cite{AroraBGM14},
but it achieves this at the expense of making fairly strong assumptions on the structure of the dictionary, in particular assuming some sparsity conditions on $A$ itself.
In addition to the sparsity restriction, all these works had additional conditions on the distribution that are incomparable or stronger than ours, and the works~\cite{AgarwalA0NT13,AroraGM13,AgarwalAN13,AroraBGM14}
make additional assumptions on the dictionary (namely incoherence) as well.  \Bnote{}

The tensor decomposition problem is also very widely studied with a long history (see e.g.,~\cite{tucker1966some,harshman1970foundations,kruskal1977three}). Some recent works providing algorithms and analysis include~\cite{AnandkumarFHKL12,AroraGM12,BhaskaraCMV14,BhaskaraCV14}.
However, these works are in a rather different parameter regime than ours--- assuming the tensor is given with very little noise (inverse polynomial in the spectral norm), but on the other hand requiring very low order moments (typically three or four, as opposed to the large constant or even logarithmic number we use).

As described in Sections~\ref{sec:overview} and~\ref{sec:SOS} below, the main tool we use is the \emph{Sum of Squares} (SOS) semidefinite programming hierarchy~\cite{Shor87,Nesterov00,Parrilo00,Lasserre01}.
We invoke the SOS algorithm using the techniques and framework introduced by Barak, Kelner and Steurer~\cite{BarakKS14}.
In addition to introducing this framework, \cite{BarakKS14}  showed how a somewhat similar technical barrier can be bypassed in a setting related to dictionary learning---
the task of recovering a sparse vector that is planted in a random subspace of $\R^n$ given a basis for that subspace.
Assuming the subspace has dimension at most $d$, \cite{BarakKS14} showed that the vector can be recovered as long as it has less than $\min(\e n , n^2/d^2)$ nonzero coordinates for some constant $\e>0$, thus improving (for $d \ll n^{2/3}$) on the prior work \cite{DemanetH13} that required the vector to be $o(n/\sqrt{d})$ sparse.

\nips{}{
\section*{Organization of this paper}

In Section~\ref{sec:overview} we give a high level overview of our ideas. Sections~\ref{sec:dictionary}--\ref{sec:tensor-decomp-proof}
contain the full proof for solving the dictionary learning and tensor decomposition problems in quasipolynomial time, where the sparsity parameter
$\tau$ is a small constant. In Section~\ref{sec:polytime} we show how this can be improved to polynomial time when $\tau \leq n^{-\delta}$ for some constant $\delta>0$.

}

\section{Overview of algorithm and its analysis} \label{sec:overview}

The dictionary learning problem can be easily reduced to the noisy tensor decomposition problem.
Indeed, it is not too hard to show that for an appropriately chosen parameter $d$, given a sufficiently large number of examples  $y_1,\ldots, y_N$ from the distribution $\{ y = Ax \}$,
the polynomial
\begin{equation}
P= \tfrac{1}{N}\sum_{i=1}^N \iprod{y_i,u}^{2d} \label{eq:empirical-poly}
\end{equation}
will be roughly $\tau$ close (in the spectral norm) to the polynomial $\norm{A^\top u}_d^d$, where $\tau$ is the ``niceness''/``sparsity'' parameter of the distribution $\{ x \}$.
Therefore, if we give $P$ as input to the tensor decomposition algorithm of Theorem~\ref{thm:noisy-decomp}, we will obtain a set that is close to the columns of $A$.\footnote{%
The polynomial (\ref{eq:empirical-poly}) and similar variants have been used before in works on dictionary learning.
\Dnote{}
The crucial difference is that those works made strong assumptions, such as independence of the entries of $\{ x \}$, that ensured this polynomial has a special structure
that made it possible to efficiently optimize over it. In contrast, our work applies in a much more general setting.}

The challenge is that because $\tau$ is a positive constant, no matter how many samples we take, the polynomial $P$ will always be bounded away from the tensor $\norm{A^\top u}_d^d$.
Hence we must use a tensor decomposition algorithm that can handle a very significant amount of noise. This is where the Sum-of-Squares algorithm comes in.
This is a general tool for solving systems of polynomial equations~\cite{Shor87,Nesterov00,Parrilo00,Lasserre01}. Given the SOS algorithm, the description of our tensor decomposition algorithm is
extremely simple (see Figure~\ref{fig:basic-alg} below).  We now describe the basic facts we use about the SOS algorithm, and sketch the analysis of our noisy tensor decomposition algorithm.
See the survey \cite{BarakS14} and the references therein for more detail on the SOS algorithm, and \nips{the full version}{Sections~\ref{sec:dict-learn}, \ref{sec:sample} and \ref{sec:tensor-decomp-proof}}
 for the full description of our algorithm and its analysis (including its variants that take polynomial time at the expense of requiring dictionary learning examples with sparser coefficients).

\subsection{The SOS algorithm}
\label{sec:SOS}

The SOS algorithm is a method, based on semidefinite programming, for solving a system of polynomial equations.
Alas, since this is a non-convex and NP-hard problem, the algorithm doesn't always succeed in producing a solution.
However, it always returns some object, which in some sense can be interpreted as a ``distribution'' $\{ u \}$ over solutions of the system of equations.
It is not an actual distribution, and in particular we cannot sample from $\{ u \}$ and get an individual solution, but we can compute
low order moments of $\{ u \}$. Specifically, we make the following definition:

\begin{definition}[Pseudo-expectations]
Let $\R[u]$ denote the ring of polynomials with real coefficients in variables $u=u_1\ldots u_n$.
Let $\R[u]_{k}$ denote the set of polynomials in $\R[u]$ of degree at most $k$.
A \emph{degree-$k$ pseudoexpectation operator for $\R[u]$} is a linear operator $\cL$ that maps
polynomials in $\R[u]_{k}$ into $\R$ and satisfies that $\cL(1) = 1$ and $\cL(P^2) \geq 0$ for every polynomial $P$ of degree at most $k/2$.
\end{definition}

For every distribution $\cD$ over $\R^n$ and $k\in\N$, the operator $\cL$ defined as $\cL(P)=\E_{\cD} P$ for all $P\in\R[x]$ is degree $k$
pseudo-expectation operator.
We will use notation that naturally extends the notation for actual expectations.
We denote pseudoexpectation operators as $\pE_\cD$, where $\cD$ acts as index to distinguish different operators.
If $\pE_\cD$ is a degree-$k$ pseudoexpectation operator for $\R[u]$, we say that $\cD$ is a \emph{degree-$k$ pseudodistribution} for the indeterminates $u$.
In order to emphasize or change indeterminates, we use the notation $\pE_{v\sim \cD} P(v)$.
In case we have only one pseudodistribution $\cD$ for indeterminates $u$, we denote it by $\{u\}$.
In that case, we also often drop the subscript for the pseudoexpectation and write $\pE P$ or $\pE P(u)$ for $\pE_{\{u\}}P$.

We say that a degree-$k$ pseudodistribution $\{u\}$ satisfies the constraint $\{ P = 0 \}$
if $\pE P(u)Q(u) = 0$ for all $Q$ of degree at most $k - \deg P$.
Note that this is a stronger condition than simply requiring $\pE P(u) = 0$.
We say that $\{ u \}$ satisfies $\{ P \ge  0 \}$  if it satisfies the constraint $\{ P - S = 0 \}$ where $S$ is a sum-of-squares polynomial $S\in\R_{k}[u]$.
It is not hard to see that if $\{ u \}$ was an actual distribution, then these definitions imply that all points in the support of the distribution satisfy the constraints.
We write $P\succeq 0$ to denote that $P$ is a sum of squares of polynomials, and similarly we write $P \sgeq Q$ to denote $P-Q \sgeq 0$.

A degree $k$ pseudo-distribution can be represented by the list of $n^{O(k)}$ values of the expectations
of all monomials of degree up to $k$.
It can also be written as an $n^{O(k)}\times n^{O(k)}$ matrix $M$ whose rows and columns correspond to monomials of degree up to $k/2$;
the condition that $\pE P(u)^2 \geq 0$ translates to the condition that this matrix is positive semidefinite.
The latter observation can be used to prove the fundamental fact about pseudo-distributions, namely that we can efficiently optimize over them.
This is captured in the following theorem:

\begin{theorem}[The SOS Algorithm~\cite{Shor87,Nesterov00,Parrilo00,Lasserre01}] \label{thm:SOS}
For every $\e>0$, $k,n,m,M\in\N$ and $n$-variate polynomials $P_1,\ldots,P_m$ in $\R_k[u]$, whose coefficients are in $\{0,\ldots,M\}$,
if there exists a degree $k$ pseudo-distribbution $\{ u \}$ satisfying the constraint $\{ P_i = 0\}$ for every $i \in [m]$,
then we can find in $(n\polylog(M/\e))^{O(k)}$ time a pseudo-distribution $\{ u' \}$ satisfying  $\{ P_i \leq \e \}$ and $\{ P_i \geq - \e \}$
for every $i\in [m]$.
\end{theorem}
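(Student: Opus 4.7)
The plan is to recast the search for a degree-$k$ pseudo-expectation operator $\cL$ satisfying $\{P_i=0\}$ as a semidefinite program (SDP) of size $n^{O(k)} \polylog M$ and then invoke a standard SDP solver.

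First I would parameterize the space of candidate pseudo-expectation operators. Since $\cL$ is linear on $\R[u]_k$, it is determined by its values $m_\alpha \seteq \cL(u^\alpha)$ on monomials $u^\alpha$ with $|\alpha|\leq k$; there are $\binom{n+k}{k} = n^{O(k)}$ such monomials. The normalization $\cL(1)=1$ becomes a single linear equation $m_0 = 1$. Each constraint $\{P_i = 0\}$ expands into the linear conditions $\cL(P_i \cdot u^\beta) = 0$ for all $|\beta|\leq k - \deg P_i$, which are $n^{O(k)}$ linear equations in the variables $\{m_\alpha\}$ whose coefficients are integers bounded by $M \cdot n^{O(k)}$.

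Next I would encode the positivity condition. The key observation is that if we form the \emph{moment matrix} $M$ indexed by multi-indices $\alpha,\beta$ with $|\alpha|,|\beta| \leq k/2$, with entries $M_{\alpha,\beta} \seteq m_{\alpha+\beta}$, then for any $P \in \R[u]_{k/2}$ with coefficient vector $v_P$ we have $\cL(P^2) = v_P^\top M v_P$. Hence $\cL(P^2)\geq 0$ for all $P$ of degree $\leq k/2$ if and only if $M \sgeq 0$. Thus the feasibility problem is exactly an SDP: find real numbers $\{m_\alpha\}$ satisfying the linear constraints above such that the associated $n^{O(k)} \times n^{O(k)}$ symmetric matrix $M$ is positive semidefinite.

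Finally I would invoke a standard polynomial-time SDP solver (e.g.\ the ellipsoid method with a separation oracle for the PSD cone, or an interior-point method). Given our hypothesis that a feasible $\cL$ exists, such a solver returns, in time polynomial in the bit complexity $n^{O(k)}\polylog(M/\e)$ of the SDP and in $\log(1/\e)$, a vector $\{m'_\alpha\}$ whose moment matrix is PSD and which satisfies each linear constraint up to additive error $\e$ (after suitable scaling of the constraints). Defining $\cL'$ via $\cL'(u^\alpha) = m'_\alpha$ gives a degree-$k$ pseudo-expectation operator on $\R[u]$; the approximate satisfaction of the linear constraints $\cL'(P_i u^\beta) = 0$ for all $|\beta|\leq k-\deg P_i$ is exactly the statement that $\{u'\}$ satisfies $\{P_i \leq \e\}$ and $\{P_i \geq -\e\}$ in the sense defined above. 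There is no serious mathematical obstacle here — the heart of the argument is the quadratic-form identity $\cL(P^2)=v_P^\top M v_P$ — and the only care required is bookkeeping the bit complexity and verifying that approximate SDP feasibility translates into the desired approximate pseudo-distribution.
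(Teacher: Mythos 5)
The paper does not prove this theorem; it cites the foundational works of Shor, Nesterov, Parrilo, and Lasserre, and in the sentence following the statement explicitly brushes aside numerical-accuracy concerns. Your proposal correctly reconstructs the standard argument behind those works: identify a degree-$k$ pseudo-expectation operator with its moment vector $(m_\alpha)_{|\alpha|\le k}$ of length $\binom{n+k}{k}=n^{O(k)}$, observe that positivity $\cL(P^2)\ge 0$ for all $\deg P\le k/2$ is equivalent to positive semidefiniteness of the moment matrix $M_{\alpha,\beta}=m_{\alpha+\beta}$ via the identity $\cL(P^2)=v_P^\top M v_P$, encode each constraint $\{P_i=0\}$ as the linear equations $\cL(P_i\, u^\beta)=0$ for $|\beta|\le k-\deg P_i$, and solve the resulting SDP.

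One caveat is worth naming, because it affects whether the ellipsoid or interior-point machinery actually runs in the claimed time: as you have written the SDP, the feasible region is a priori unbounded (a pseudo-distribution's moments need not be bounded in terms of $n,k,M$), and polynomial-time SDP solvers require the feasible set to be confined to a ball of radius $2^{\poly}$ (together with some interior-point or well-conditioning assumption) before one can guarantee termination with the stated bit complexity. The theorem statement itself does not supply such a bound. In the paper's actual applications this is harmless because the SOS program always includes a constraint of the form $\lVert u\rVert_2^2=1$, which in turn bounds every moment $|m_\alpha|\le 1$; this is exactly why the authors immediately remark that numerical accuracy ``will never play an important role.'' With that understanding, your proof is the correct, standard one, and there is no genuinely different route to compare against since the paper gives none.
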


Numerical accuracy will never play an important role in our results, and so we can just assume that we can always find in $n^{O(k)}$
time a degree-$k$ pseudo-distribution satisfying given polynomial constraints, if such a pseudo-distribution exists.

\subsection{Noisy tensor decomposition}

\begin{figure}[t]
\fbox{\begin{minipage}{\textwidth}
\noindent \textbf{Input:} Accuracy parameter $\e$. A degree $d$ polynomial $P$ such that
\[
\norm{A^\top u}_d^d - \tau \norm{u}_2^d \sleq P \sleq  \norm{A^\top u}_d^d + \tau \norm{u}_2^d \,
\]
where $d$ is even.

\medskip

\noindent\textbf{Operation:}
\begin{enumerate}

\item \label{itm:runSOS} Use the SOS algorithm to find the degree-$k$ pseudo-distribution $\{ u \}$ that maximizes $P(u)$ while satisfying $\norm{u}^2 \equiv 1$.

\item \label{itm:pickW} Pick the polynomial $W$ to be a product of $O(\log n)$ random linear functions.

\item \label{itm:output}  Output the top eigenvector of the matrix $M$ where $M_{i,j} = \pE W(u)^2 u_iu_j$.

\end{enumerate}
\end{minipage}}
\caption{Basic Tensor Decomposition algorithm. \footnotesize{The parameters $k,d,\tau$ are chosen as a function of the accuracy parameter $\e$ and
the top eigenvalue $\sigma$ of $A^\top A$. The algorithm outputs a vector $u$ that is $\e$-close to a column of $A$
with inverse polynomial probability.}}
\label{fig:basic-alg}
\end{figure}

Our basic noisy tensor decomposition algorithm is described in Figure~\ref{fig:basic-alg}.
This algorithm finds  (a vector close to) a column of $A$ with inverse polynomial probability.
Using similar ideas, one can extend it to an algorithm that outputs all vectors with high probability; we provide the details in \nips{the full version.}{Section~\ref{sec:tensor-decomp-proof}.}
Following the approach of \cite{BarakKS14}, our analysis of this algorithm proceeds in two phases:

\begin{description}

\item[(i)] We show that if the pseudo-distribution $\{ u \}$  obtained in Step~\ref{itm:runSOS} is an \emph{actual}
distribution, then the vector output in Step~\ref{itm:output} is close to one of the columns of $A$.

\item[(ii)] We then show that the arguments used in establishing \textbf{(i)} generalize to the case
of pseudo-distributions as well.

\end{description}

\paragraph{Part \textbf{(i)}} The first part is actually not so surprising.
For starters, every unit vector $u$ that maximizes $P$ must be highly correlated with some column $a$ of $A$.
Indeed,  $\norm{A^\top a}_d^d \geq 1$ for every column $a$ of $A$, and hence the maximum of $P(u)$ over a unit $u$
is at least $1-\tau$. But if $\iprod{u,a}^2 \leq 1-\e$ for every column $a$ then  $P(u)$ must be much smaller than $1$.
Indeed, in this case
\begin{equation}
\norm{A^\top u}_d^d = \sum_i \iprod{a^i,u}^d  \leq \max_i \iprod{a^i,u}^{d-2} \sum \iprod{a^i,u}^2 \;. \label{eq:gsdfgjhk}
\end{equation}
Since $\sum \iprod{a^i,u}^2 \leq \sqrt{\sigma}$, this implies that, as long as $d \gg \tfrac{\log \sigma}{\e}$,
 $\norm{A^\top u}_d^d$ (and thus also $P(u)$) is much smaller than $1$.

Therefore, if $\{ u \}$ obtained in Step~\ref{itm:runSOS} is an actual distribution, then it would be  essentially supported on the set
$\cA = \{ \pm a^1 , \ldots, \pm a^m \}$ of the columns of $A$ and their negations.
Let us suppose that $\{ u \}$ is simply the uniform distribution over $\cA$.
(It can be shown that this essentially is the hardest case to tackle.)
In this case the matrix $M$ considered in Step~\ref{itm:output} can be written as
\[
M = \tfrac{1}{m} \sum_{i=1}^m W(a^i)^2  (a^i)(a^i)^\top  \;,
\]
where $W(\cdot)$ is the polynomial selected in Step~\ref{itm:pickW}.
(This uses the fact that this polynomial is a product of linear functions and hence satisfies $W(-a)^2 = W(a)$ for all $a$.)
If $W(\cdot)$ satisfies
\begin{equation}
|W(a^1)| \gg \sqrt{m} |W(a^i)|  \label{eq:Wisolates}
\end{equation}
for all $i\neq 1$ then $M$ is very  close to (a constant times) the matrix $(a^1)(a^1)^\top$, and hence its top eigenvector is close
to $a^1$ and we would be done.
We want to show that the event (\ref{eq:Wisolates}) happens with probability at least inverse polynomial in $m$.
Recall that $W$ is a product of $t=c\log n$ random linear functions for some constant $c$ (e.g., $c=100$ will do).
That is, $W(u) = \prod_{i=1}^t \iprod{v^i,u}$, where $v^1,\ldots,v^t$ are standard random Gaussian vectors.
Since $\E \iprod{v^j,a^i}^2 =1$ and these choices are independent, $\E W(a^i)^2=1$  for all $i$.
However, with probability $\exp(-O(t)) = m^{-O(1)}$ it will hold that $|\iprod{v^j,a^1}| \geq 2$ for all $j=1\ldots t$.
In this case $|W(a^1)|\geq  2^t$, while we can show that even conditioned on this event, with high probability we
will have $|W(a^i)| < 1.9^t \ll |W(a^1)|/\sqrt{m}$ for all $i$, in which case (\ref{eq:Wisolates}) holds.\footnote{%
This argument assumes that no other column is $0.9$ correlated with $a^1$.
However our actual analysis does not use this assumption,
since if two column vectors are closely correlated, we are fine with outputting any linear combination of them.}

\textbf{Part (ii).} The above argument establishes \textbf{(i)}, but this is all based on a rather bold piece of wishful thinking---
that the object $\{ u \}$ we obtained in Step~\ref{itm:runSOS} of the algorithm
was actually a genuine distribution over unit vectors maximizing $P$.
In actuality, we can only obtain the much weaker guarantee that $\{ u \}$ is a degree $k$ \emph{pseudo-distribution} for some $k= O(\log n)$.
(An actual distribution corresponds to a degree-$\infty$ pseudo-distribution.)
The technical novelty of our work lies in establishing \textbf{(ii)}.
The key observation is that in all our arguments above, we never used any higher moments of $\{ u \}$, and that all the inequalities we showed
boil down to the simple fact that a square of a polynomial is never negative.
(Such proofs are known as \emph{Sum of Squares (SOS) proofs}.)

We will not give the full analysis here, but merely show a representative example of how one ``lifts'' arguments into the SOS
setting. In (\ref{eq:gsdfgjhk}) above we used the simple inequality that for every vector $v\in \R^m$
\begin{equation}
\norm{v}_d^d \leq \norm{v}^{d-2}_{\infty} \norm{v}_2^2 \;, \label{eq:lkgfsdjglkfs}
\end{equation}
applying it to the vector $v= A^\top u$ (where we denote $\norm{v}_\infty = \max_i |v_i|$).
The first (and most major) obstacle in giving a low degree ``Sum of Squares'' proof for (\ref{eq:lkgfsdjglkfs}) is that this is not a
polynomial inequality.
To turn it into one, we replace the $L_{\infty}$ norm with the $L_k$ norm for some large $k$ ($k=O(\log m)$ will do).
If we replace $\norm{v}_\infty$ with $\norm{v}_k$ in (\ref{eq:lkgfsdjglkfs}), and raise it to the $k/(d-2)$-th  power then we obtain the inequality
\begin{equation}
\left( \norm{v}_d^d\right)^{k/(d-2)} \leq \norm{v}_k^k \left(\norm{v}_2^2\right)^{k/(d-2)} \;, \label{eq:norm-inequality}
\end{equation}
which is a valid inequality between polynomials in $v$ whenever $k$ is an integer multiple of $d-2$ (which we can ensure).

We now need to find a sum-of-squares proof for this inequality, namely that the right-hand side of (\ref{eq:norm-inequality}) is equal to the left-hand side plus a sum of squares, that is, we are to show that for $s=k/(d-2)$,
$$	\bigg(\sum_i v_i^d\bigg)^s
	\sleq
	\bigg(\sum_i v_i^{(d-2)s}\bigg)
	\bigg(\sum_i v_i^2\bigg)^s.
$$
By expanding the $s$-th powers in this expression, we rewrite this polynomial inequality as
\begin{equation}\label{eq:norm-inequality-expanded}
	\sum_{\lvert  \alpha  \rvert=s}\binom{s}{\alpha}v^{d\alpha}
	\sleq
	\bigg(\sum_i v_i^{(d-2)s}\bigg) \sum_{|\alpha|=s}\binom{s}{\alpha}v^{2\alpha}
	=
 \sum_{|\alpha|=s}\binom{s}{\alpha}\,v^{2\alpha}	\sum_i v_i^{(d-2)s},
\end{equation}
where the summations involving $\alpha$ are over degree-$s$ multiindices $\alpha\in\{0,\dots,s\}^n$, and $\binom{s}{\alpha}$ denotes the multinomial coefficient $\binom{n}{\alpha}=\frac{s!}{\alpha_1!\dots\alpha_m!}$.
We will prove \pref{eq:norm-inequality-expanded} term by term, i.e., we will show that $v^{d\alpha}\preceq  v^{2\alpha}\sum_i v_i^{(d-2)s}$ for every multiindex $\alpha$.
Since $v^{2\alpha}\succeq 0$, it is enough to show that $v^{(d-2)\alpha}\preceq \sum_i v_i^{(d-2)s}$.
This is implied by the following general inequality\nips{ (proven in the full version of this paper)}{,
 which we prove in Appendix~\ref{sec:monomial-ineq}}:

\begin{lemma}
  \label{lem:monomial-ineq}
  Let $w_1,\ldots,w_n$ be polynomials.
  Suppose $w_1\succeq 0,\ldots,w_n\succeq 0$.
  Then, for every multiindex $\alpha$,
  \nips{$$w^\alpha \preceq \sum_i w_i^{\lvert  \alpha \rvert}\mper$$}{$w^\alpha \preceq \sum_i w_i^{\lvert  \alpha \rvert}\mper$}
\end{lemma}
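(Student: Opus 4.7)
The plan is to prove the lemma by induction on the number of nonzero coordinates of $\alpha$, reducing the general case to a two-variable sum-of-squares weighted AM-GM inequality.

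The first step is to establish the two-variable version: for nonnegative integers $a, b$ with $r := a + b \geq 1$ and polynomials $p, q \succeq 0$,
\begin{equation*}
r \cdot p^{a} q^{b} \preceq a \cdot p^{r} + b \cdot q^{r}.
\end{equation*}
I prove this via the algebraic identity
\begin{equation*}
a p^{r} + b q^{r} - r p^{a} q^{b} = (p - q)^{2} \cdot R(p, q),
\end{equation*}
where $R(p, q)$ is a polynomial in $p, q$ with nonnegative integer coefficients. To derive $R$, rewrite the left-hand side as $a p^{a}(p^{b} - q^{b}) - b q^{b}(p^{a} - q^{a})$ using $r = a + b$, pull out one factor of $(p - q)$ via the standard expansion $p^{j} - q^{j} = (p - q) \sum_{i=0}^{j-1} p^{j-1-i} q^{i}$, observe that the remaining bracket vanishes at $p = q$, and extract a second factor of $(p - q)$ by regrouping the resulting monomials. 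Since $R$ has nonnegative coefficients, $R(p, q) \succeq 0$ whenever $p, q \succeq 0$ (products and nonnegative combinations of SOS polynomials are SOS); combined with $(p - q)^{2} \succeq 0$ this yields the two-variable bound.

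With the two-variable inequality in hand, I induct on the support size $k = \lvert \{i : \alpha_i > 0\} \rvert$ of $\alpha$. For $k \leq 1$: either $\alpha = 0$ and the claim is trivial, or $w^\alpha = w_i^{\lvert \alpha \rvert}$ for some $i$, in which case $\sum_j w_j^{\lvert \alpha \rvert} - w_i^{\lvert \alpha \rvert} = \sum_{j \neq i} w_j^{\lvert \alpha \rvert}$ is SOS, since each $w_j^{\lvert \alpha \rvert}$ is a product of the SOS polynomial $w_j$ with itself. For the inductive step, pick two indices $i_{1}, i_{2}$ with $a := \alpha_{i_{1}} > 0$ and $b := \alpha_{i_{2}} > 0$, set $r = a + b$, and apply the two-variable inequality to $(p, q) = (w_{i_{1}}, w_{i_{2}})$. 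Multiplying through by the SOS polynomial $\prod_{j \neq i_{1}, i_{2}} w_{j}^{\alpha_{j}}$ yields
\begin{equation*}
r \cdot w^{\alpha} \preceq a \cdot w^{\alpha^{(1)}} + b \cdot w^{\alpha^{(2)}},
\end{equation*}
where $\alpha^{(1)}$ and $\alpha^{(2)}$ are obtained from $\alpha$ by moving all of the mass at coordinates $i_{1}, i_{2}$ onto $i_{1}$ and $i_{2}$ respectively. Both $\alpha^{(t)}$ have total degree $\lvert \alpha \rvert$ but strictly smaller support than $\alpha$, so the induction hypothesis gives $w^{\alpha^{(t)}} \preceq \sum_{j} w_{j}^{\lvert \alpha \rvert}$. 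Substituting back gives $r \cdot w^{\alpha} \preceq r \sum_{j} w_{j}^{\lvert \alpha \rvert}$, and dividing by the positive integer $r$ preserves $\preceq$, completing the step.

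The main technical obstacle is verifying the two-variable identity with $R$ having nonnegative coefficients, which is a short telescoping calculation but is the only place where genuinely new algebra enters. All remaining manipulations are routine applications of the closure of the SOS cone under sums, products, and multiplication by positive scalars.
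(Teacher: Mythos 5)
Your proof is correct, and it takes a genuinely different route from the paper's. The paper first proves the full $n$-variable SOS version of AM--GM (its Lemma~\ref{lem:AMGM}: $\tfrac1n(w_1^n+\dots+w_n^n)\succeq w_1\cdots w_n$) by a symmetrization/telescoping argument --- interpolating between the power sum and the product through the chain $R_0\succeq R_1\succeq\dots\succeq R_{n-1}$ of permutation averages, and certifying each step via the factorization $(w_a^{n-k}-w_b^{n-k})(w_a-w_b)=(w_a-w_b)^2(\text{SOS})$. It then deduces the lemma by writing $w^\alpha$ as a product of $|\alpha|$ factors with repetition and applying AM--GM. You instead isolate a two-variable weighted AM--GM, $a\,p^r+b\,q^r-r\,p^aq^b=(p-q)^2R(p,q)$ with $R$ having nonnegative coefficients, and then induct on the support size of $\alpha$, merging two coordinates at a time. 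Both routes ultimately rest on the same telescoping identity $x^j-y^j=(x-y)\sum_i x^{j-1-i}y^i$, pulled out twice so the remaining certificate has nonnegative coefficients. The paper's version yields the classical symmetric AM--GM as a standalone lemma (a statement of independent interest, and historically where the first such SOS proof by Hurwitz sits); yours is more elementary in that it never invokes the symmetric group and handles the asymmetric weights directly, at the cost of needing the induction to reassemble the multi-variable statement. One small presentational gap in your writeup: you assert that the cofactor $R(p,q)$ has nonnegative coefficients as the outcome of a "short telescoping calculation" without exhibiting it. Concretely, after pulling out the first factor of $(p-q)$ the bracket is $\sum_{i=0}^{r-1}c_i\,p^{r-1-i}q^i$ with $c_i=a$ for $i<b$ and $c_i=-b$ for $i\ge b$; since the $c_i$ sum to zero one more division gives $R=\sum_{j=0}^{r-2}d_j\,p^{r-2-j}q^j$ with $d_j=\sum_{i\le j}c_i$, which equals $(j+1)a$ for $j<b$ and $b(r-1-j)$ for $j\ge b$, both positive. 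Spelling this out would make the argument self-contained.
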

\noindent We note that $d$ is even, so $w_i=v_i^{d-2}\succeq 0$ is a square, as required by  the lemma.

For the case that $\lvert  \alpha \rvert$ is a power of $2$, the inequality in the lemma follows by repeatedly applying the inequality $x\cdot y\preceq \frac 12 x^2+ \frac 12 y^2$, which in turn holds because the difference between the two sides equals $\tfrac 12(x-y)^2$.
As a concrete example, we can derive $w_1^3w_2\preceq w_1^4 + w_2^4$ in this way,
\begin{displaymath}
  w_1^3w_2=w_1^2\cdot w_1w_2
  \preceq \tfrac12 w_1^4 + \tfrac12 w_1^2\cdot w_2^2
  \preceq  \tfrac12 w_1^4 + \tfrac12 \bigl(\tfrac12 w_1^4 + \tfrac12 w_2^4\bigr)
  \preceq w_1^4 + w_2^4\mper
\end{displaymath}
(The first two steps use the inequality $x\cdot y\preceq \tfrac12 x^2+\tfrac12 y^2$.
The last step uses that both $w_1$ and $w_2$ are sum of squares.)

\Dcomment{comment out previous exposition}

\Dnote{}

Once we have an SOS proof for (\ref{eq:norm-inequality}) we can conclude that it holds for pseudo-distributions as well, and in particular that for
every pseudo-distribution  $\{ u \}$ of degree at least $k+ 2k/(d-2)$ satisfying  $\{ \norm{u}_2^2 = 1\}$,
\stoceq{
\pE \left( \norm{A^\top u}_d^d\right)^{k/(d-2)} \leq \pE \norm{A^\top u}_k^k \sigma^{k/(d-2)} \;.
}
We use similar ideas to port the rest of the proof to the SOS setting, concluding that whenever $\{ u \}$ is a pseudo-distribution
that satisfies $\{ \norm{u}_2^2 = 1 \}$ and $\{ P(u)\geq 1-\tau \}$, then with inverse polynomial probability it will hold that
\begin{equation}
\pE W(u)^2 \iprod{u,a}^2 \geq (1-\e) \pE W^2  \label{eq:gfhkjgdf}
\end{equation}
for some column $a$ of $A$ and $\e>0$ that can be made arbitrarily close to $0$.
Once we have (\ref{eq:gfhkjgdf}), it is not hard to show that the matrix $M=\pE W(u)^2 uu^\top$  obtained in Step~\ref{itm:output}
of our algorithm  is close to $aa^\top$.
Hence, we can recover a vector close to $\pm a$ by computing the top eigenvector\footnotemark\ of the matrix $M$.
\footnotetext{
In the final algorithm, instead of computing the top eigenvector of the matrix $M$, we will sample from a Gaussian distribution $\{\xi \}$ that satisfies $\E \xi \xi ^\top =M$.
If $M\approx a a^\top$, then such a Gaussian vector $\xi$ is close to $\pm a$ with high probability.}

\section{Preliminaries} \label{sec:prelims}

\Bnote{}

\nips{}{We recall some of the notation mentioned above. We use $P \sleq Q$ to denote that $Q-P$ is a sum of square polynomials.
For a vector $v\in \R^d$ and $p\geq 1$, we denote $\norm{v}_p = (\sum_{i=1}^d |v_i|^p)^{1/p}$ and $\norm{v} = \norm{v}_2$.
For any $\sigma \geq 1$, a \emph{$\sigma$-dictionary} is an $n\times m$ matrix $A = (a^1|\cdots|a^m)$ such that $\norm{a^i}=1$ for all $i$ and the spectral norm of $A^\top A$ is
at most $\sigma$ or ,equivalently,  $\norm{Au}_2^2 \sleq \sigma \norm{u}_2^2$.
Two sets $S_0,S_1 \subseteq \R^n$ are \emph{$\e$-close in symmetrized Hausdorff distance} if for all $b\in 0,1$, $\min_{s\in S_b} \max_{t \in S_{1-b}} \cor(s,t) \geq 1-\e$, where $\cor(s,t) =\iprod{s,t}^2/(\norm{s}\norm{t})^2$; we often drop the qualifier ``symmetrized Hausdorff distance'' as we will not use another notion of distance between sets of vectors in this paper.}

We use the notation of  pseudo-expectations and pseudo-distributions from \nips{Section 2.1}{\pref{sec:SOS}}.
We now state some basic useful facts about pseudo-distributions,  see~\cite{BarakS14,BarakKS14,BarakBHKSZ12} for a more comprehensive treatment.

One useful property of pseudo-distributions is that we can find actual distribution that match their first two moments.

\begin{lemma}[Matching first two moments] \label{lem:moments}
Let $\{ u \}$ be a pseudo-distribution over $\R^n$ of degree at least $2$.
Then we can efficiently sample from a Gaussian distribution\footnote{A Gaussian distribution with covariance $\Sigma\in\R^{n\times n}$ and mean $\mu\in\R^n$ has density proportional to $x\mapsto\exp(-\langle x-\mu,\Sigma^{-1}(x-\mu) \rangle/2)$. } $\{\xi\}$ over $\R^n$ such that for every polynomial $Q$ of degree at most $2$,
$$\E Q(\xi) = \pE Q(u).$$
\end{lemma}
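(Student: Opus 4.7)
The plan is to define the Gaussian $\{\xi\}$ in the obvious way: set $\mu = \pE u \in \R^n$ and $\Sigma = \pE uu^\top - \mu\mu^\top \in \R^{n\times n}$, and take $\xi\sim\mathcal N(\mu,\Sigma)$. The entries of $\mu$ and $\Sigma$ can be read off from the degree-$2$ pseudo-moments of $\{u\}$, so producing these matrices is immediate given the pseudo-distribution.

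First I would verify that $\Sigma$ is PSD, so that it is a legitimate covariance matrix. For any fixed $v\in\R^n$, the polynomial
\[
P_v(u) = \iprod{v,u-\mu}^2 = \iprod{v,u}^2 - 2\iprod{v,\mu}\iprod{v,u} + \iprod{v,\mu}^2
\]
has degree $2$ in the indeterminates $u$ and is manifestly a square of a degree-$1$ polynomial (the vector $\mu$ is a constant). Because $\{u\}$ has degree at least $2$, the definition of pseudo-expectation gives $\pE P_v(u)\ge 0$. Expanding using linearity,
\[
\pE P_v(u) = v^\top(\pE uu^\top)v - 2v^\top\mu\cdot v^\top\mu + (v^\top\mu)^2 = v^\top\Sigma v,
\]
so $v^\top\Sigma v\ge 0$ for every $v$, i.e.\ $\Sigma\sge 0$. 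Once $\Sigma$ is known to be PSD, sampling from $\mathcal N(\mu,\Sigma)$ is efficient: compute any square root $\Sigma = LL^\top$ (e.g.\ via Cholesky or eigendecomposition) and output $\xi = \mu + Lz$ for $z\sim\mathcal N(0,I)$.

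It remains to check the moment-matching identity. Any $Q\in\R[u]_{\le 2}$ can be written $Q(u) = c + b^\top u + u^\top Bu$ for some $c\in\R$, $b\in\R^n$, $B\in\R^{n\times n}$. By linearity of pseudo-expectation and the identity $\pE u^\top B u = \tr(B\cdot \pE uu^\top) = \tr(B(\Sigma+\mu\mu^\top))$,
\[
\pE Q(u) = c + b^\top\mu + \tr(B\Sigma) + \mu^\top B\mu.
\]
On the other hand, the standard Gaussian moment formulas give exactly $\E Q(\xi) = c + b^\top\mu + \tr(B\Sigma) + \mu^\top B\mu$, so the two agree.

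The only mild obstacle is the possibility that $\Sigma$ is singular, in which case $\mathcal N(\mu,\Sigma)$ is a degenerate Gaussian supported on the affine subspace $\mu + \mathrm{im}(\Sigma)$; the square-root construction $\xi = \mu + Lz$ handles this case uniformly and all first and second moments still match, so no separate argument is needed.
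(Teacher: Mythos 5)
Your proof is correct and takes essentially the same approach as the paper's: both center the pseudo-distribution, observe that the (centered) second-moment matrix is PSD by applying the positivity condition to squares of linear forms, and sample by mapping a standard Gaussian through a square-root factor of that matrix. Your version is just slightly more explicit about handling the mean and verifying the final moment-matching identity.
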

\Dnote{}
\begin{proof}
By shifting, it suffices to restrict attention to the case where $\E u_i = 0$ for all $i$.
Consider the matrix $M$ such that $M = \pE uu^\top$.
The positivity condition implies that $M$ is a positive semidefinite matrix.
Therefore, $M$ admits a Cholesky factorization $M=VV^\top$.
Let $\{\zeta \}$ be the standard Gaussian distribution on $\R^n$ (mean $0$ and variance $1$ in each coordinate) and consider the Gaussian distribution $\{\xi = V \zeta\}$.
We are to show that $\xi$ has the same degree-$2$ moments as the pseudo-distribution $\{u\}$.
Indeed,
\begin{displaymath}
  \E \xi \xi^\top = \E V   \zeta \zeta^\top V^\top = V V^\top = M= \pE u u^\top \mper
\end{displaymath}
Here, we use that $\E \zeta \zeta ^\top$ is the identity because $\zeta$ is a standard Gaussian vector.
\end{proof}

Another property we will use is that we can \emph{reweigh} a pseudo-distribution by a positive polynomial $W$ to obtain a new pseudo-distribution
that corresponds to the operation on actual distributions of reweighing the probability of an element $u$ proportional to $W(u)$.

\begin{lemma}[Reweighing] \label{lem:reweigh} Let $\{ u \}$ be a degree-$k$ pseudo-distribution.
Then for every SOS polynomial $W$ of degree $d<k$ with $\pE W > 0$, there exists a degree-$(k-d)$ pseudo-distribution $\{ u' \}$ such that  for every polynomial $P$ of degree at most $k-d$
\begin{displaymath}
  \pE_{\{u'\}} P(u') = \tfrac1{\pE_{\{u\}} W(u)}\pE_{\{u\}} W(u)P(u)\mper
\end{displaymath}
\end{lemma}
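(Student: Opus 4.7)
The plan is to define the reweighted pseudo-expectation by the formula the lemma demands and then check directly that it is a valid degree-$(k-d)$ pseudo-expectation operator. Concretely, I would set
\[
  \pE_{\{u'\}} P(u') \;\defeq\; \tfrac{1}{\pE_{\{u\}} W(u)}\,\pE_{\{u\}} W(u) P(u)
\]
for every $P\in\R[u]_{k-d}$. This is well-defined because $\pE W > 0$ by hypothesis, and it is linear in $P$ because $\pE_{\{u\}}$ is linear and $W$ is fixed. Normalization is immediate: plugging in $P = 1$ gives $\pE_{\{u'\}} 1 = \pE W/\pE W = 1$.

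The only substantive step is the positivity condition, namely that $\pE_{\{u'\}} Q(u')^2 \geq 0$ for every $Q \in \R[u]_{(k-d)/2}$. Since $W$ is a sum of squares of degree $d$, write $W = \sum_i R_i^2$ where each $R_i$ has degree at most $d/2$. Then
\[
  W\cdot Q^2 \;=\; \sum_i (R_i Q)^2,
\]
and each $R_i Q$ has degree at most $d/2 + (k-d)/2 = k/2$. Applying the degree-$k$ positivity of the original pseudo-expectation $\pE_{\{u\}}$ to each square $(R_i Q)^2$ and summing gives $\pE_{\{u\}} W(u)Q(u)^2 \geq 0$. Dividing by the positive scalar $\pE W$ yields $\pE_{\{u'\}} Q(u')^2 \geq 0$, as required.

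The step I expect to be the only real thing to check is precisely this degree bookkeeping in the positivity argument: one has to verify that multiplying a sum-of-squares $W$ of degree $d$ by $Q^2$ still produces a sum of squares whose individual summands have degree at most $k$, so that the positivity axiom of the degree-$k$ pseudo-distribution $\{u\}$ applies. Everything else (linearity, normalization, and the verification of the claimed identity for arbitrary $P$) is formal. Hence the lemma follows from at most a handful of lines, and the construction is explicit, so the efficient computability of $\pE_{\{u'\}}$ from $\pE_{\{u\}}$ and $W$ is automatic.
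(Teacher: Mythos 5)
Your proof is correct and follows essentially the same route as the paper's: define the reweighted operator by the stated formula, note linearity and normalization are immediate, and reduce positivity to the observation that $W$ being a sum of squares makes $WQ^2$ a sum of squares whose pseudo-expectation is nonnegative. The only difference is that you spell out the degree bookkeeping (writing $W=\sum_i R_i^2$ and checking $\deg(R_iQ)\le k/2$) that the paper leaves implicit, which is a small improvement in rigor rather than a different argument.
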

\begin{proof}
The functional $\pE_{\{u'\}}$ is linear and satisfies $\pE_{\{u'\}} 1 =1$, and so we just need to verify the positivity property.
For every polynomial $P$ of degree at most $(k-\deg W)/2$,
\[
\pE_{\{u'\}} P(u')^2 = (\pE_{\{u\}} W(u)P(u)^2 ) / (\pE_{\{u\}} W(u) )
\]
but since $W$ is a sum of squares, $WP^2$ is also a sum of squares and hence the denominator of the left-hand side is non-negative, while the numerator is by assumption positive.
\end{proof}

\section{Dictionary Learning}
\label{sec:dictionary}
\label{sec:dict-learn}

We now state our formal theorem for dictionary learning.
The following definition of nice distributions captures formally the conditions needed for recovery.
(It is equivalent up to constants to the definition of \nips{Section 1.1}{\pref{sec:intro-nice}}, see \pref{rem:niceness-definition} below.)

\begin{definition}[Nice distribution] \label{def:nice}
Let $\tau \in (0,1)$ and $d \in \N$ with $d$ even.
A distribution $\{ x \}$ over $\R^m$ is $(d,\tau)$-nice if it satisfies the following properties:
\begin{enumerate}
\item \label{itm:nice-lowerbound}  $\E x_i^d = 1$ for all $i\in[m]$,
\item \label{itm:nice-evensmall}  $\E x^\alpha \le \tau $ for all degree-$d$ monomials $x^\alpha \notin \{x_1^d,\ldots,x_m^d\}$, and
\item \label{itm:nice-oddvanish} $\E x^\alpha =0$ for all non-square degree-$d$ monomials $x^\alpha$.
\end{enumerate}
Here, $x^\alpha$ denotes the monomial $x_1^{\alpha_1}\cdots x_m^{\alpha_m}$.
Furthermore, we require that $x_i^d$ to have polynomial variance so that $\E x_i^{2d}=n^{O(1)}$.
To avoid some technical issues, $(d,\tau)$-nice distributions are also assumed to be $(d',\tau)$-nice after rescaling for all even $d'\le d$.
Concretely, when we say that $\{x\}$ is a $(d,\tau)$-nice distribution, we also imply that for every positive even $d'< d$, there exists a rescaling factor $\lambda$ such that the distribution $\{\lambda \cdot x\}$ satisfies the three properties above (plus polynomial variance bound).
\end{definition}
\Dnote{}

Let us briefly discuss the meaning of these conditions.
The condition $\E x_i^{2d}=n^{O(1)}$ is a weak non-degeneracy condition, ruling out distributions where the main contribution to some low order moments comes from events that happen with super-polynomially small probability.
Condition~\ref{itm:nice-lowerbound} stipulates that we are in the symmetric case, where all coefficients have more or less the same magnitude.
(We can remove symmetry by either dropping this condition or allowing the dictionary vectors to have different norms; see \pref{rem:different-norms}.)
\Dnote{}
\Dnote{}
Condition~\ref{itm:nice-evensmall} captures to a certain extent both the sparsity conditions and that that the random variables $x_i$ and $x_i$ for $i\neq j$ are not too correlated.
Condition~\ref{itm:nice-oddvanish} stipulates that there is significant ``cancellations'' between the negative and positive coefficients.
While it is satisfied by many natural distributions, it would be good to either show that it can be dropped, or that it is inherently necessary.
The requirement of having expectation zero---perfect cancellation---can  be somewhat relaxed to having a sufficiently small bound (inverse polynomial in $n$) on the magnitude of the non-square moments.

We can now state our result for dictionary learning in quasipolynomial time.
The result for polynomial time is stated in Section~\ref{sec:polytime}.

\begin{theorem}[Dictionary learning, quasipolynomial time]
\label{thm:dict-learn}
There exists an algorithm that for every desired accuracy $\e>0$ and overcompleteness $\sigma\ge 1$ solves the following problem \Dfnote{}
for every $(d,\tau)$-nice distribution with $d \ge d(\e,\sigma)=O(\e^{-1}\log \sigma )$ and $\tau\le \tau(\e,\sigma)=(\e^{-1}\log \sigma )^{O(\e^{-1}\log \sigma )}$ in time $n^{(1/\e)^{O(1)}  (d+\log m) }$:
Given $n^{O(d)}/\poly(\tau)$ samples from a distribution $\{y=Ax\}$ for a $\sigma$-overcomplete dictionary $A$ and $(d,\tau)$-nice distribution $\{x\}$, output a set of vectors that is $\e$-close to the set of columns of $A$ (in symmetrized Hausdorff distance).

\end{theorem}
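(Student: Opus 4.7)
}
The plan is to reduce dictionary learning to noisy tensor decomposition (\pref{thm:noisy-decomp}). Given $N = n^{O(d)}/\poly(\tau)$ samples $y_1,\dots,y_N$ from $\{y = Ax\}$, I will form the empirical polynomial
\[
P(u) \;=\; \tfrac{1}{N}\sum_{i=1}^N \iprod{y_i,u}^d \;=\; \tfrac{1}{N}\sum_{i=1}^N \iprod{x_i, A^\top u}^d,
\]
and feed $P$ into the noisy tensor decomposition algorithm. The two things I must verify are (a) that the \emph{expected} polynomial $\bar P(u) := \E_x \iprod{x, A^\top u}^d$ is SOS-close to $\norm{A^\top u}_d^d$ with error controlled by the niceness parameter $\tau$, and (b) that with $N = n^{O(d)}/\poly(\tau)$ samples, the empirical polynomial $P$ is SOS-close to $\bar P$. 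Together these will give the input hypothesis
\[
\norm{A^\top u}_d^d - \tau' \norm{u}_2^d \sleq P \sleq \norm{A^\top u}_d^d + \tau' \norm{u}_2^d
\]
required by \pref{thm:noisy-decomp}, with $\tau'$ a suitable constant multiple of $\tau \sigma^{d/2}$ (which can be made smaller than $\tau(\e,\sigma)$ from \pref{thm:noisy-decomp} by shrinking our $\tau$).

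For step (a), I expand via the multinomial theorem: writing $v = A^\top u$, one has $\bar P(u) = \sum_{|\alpha|=d}\binom{d}{\alpha}v^\alpha\,\E x^\alpha$. Niceness condition~\ref{itm:nice-oddvanish} kills every non-square monomial, and condition~\ref{itm:nice-lowerbound} ensures the diagonal contribution is exactly $\sum_i v_i^d = \norm{A^\top u}_d^d$. The remaining terms are indexed by square multiindices $\alpha = 2\beta$ with $|\beta|=d/2$ and $\beta\neq \tfrac d2 e_i$; each has coefficient $\E x^{2\beta}\in[0,\tau]$ by condition~\ref{itm:nice-evensmall}, and the polynomial $v^{2\beta} = (v^\beta)^2$ is manifestly a square. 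Hence the difference $\bar P - \norm{A^\top u}_d^d$ is a non-negative combination of squares bounded (termwise and hence in the SOS order) by $\tau$ times a polynomial SOS-dominated by $(\sum_i v_i^2)^{d/2} = \norm{A^\top u}_2^d \preceq \sigma^{d/2}\norm{u}_2^d$, where the last step uses the $\sigma$-dictionary property $A^\top A \sleq \sigma I$.

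For step (b), I need concentration not merely pointwise but in the SOS sense: the polynomial $P - \bar P$ (viewed as a symmetric $d$-tensor, or equivalently as the matrix on degree-$d/2$ monomials) should have spectral norm at most $\tau$ with high probability. Each sample contributes $\iprod{y_i,u}^d/N$, i.e.\ a rank-one $d$-tensor of norm at most $\norm{y_i}^d/N \le \poly(n)/N$ under the polynomial moment bound $\E x_i^{2d} = n^{O(1)}$. A standard matrix-Bernstein / polynomial concentration argument (applied to the flattening of $P - \bar P$ into an $n^{d/2} \times n^{d/2}$ matrix) then gives $\norm{P-\bar P}_{\mathrm{spec}} \le \tau$ with probability $0.95$ once $N = n^{O(d)}/\poly(\tau)$. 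A spectral-norm bound on a symmetric even-degree form translates into the SOS-sandwich $-\tau\norm{u}_2^d \sleq P - \bar P \sleq \tau\norm{u}_2^d$, which is exactly the form needed.

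Combining (a) and (b), $P$ satisfies the hypothesis of \pref{thm:noisy-decomp} with noise parameter $O(\tau\sigma^{d/2})$; choosing $d = d(\e,\sigma)$ and $\tau = \tau(\e,\sigma)$ to absorb the $\sigma^{d/2}$ factor, the algorithm returns a set $\e$-close to the columns of $A$ in time $n^{(1/\e)^{O(1)}(d+\log m)}$ as claimed. The main obstacle I expect is step (b): pointwise concentration of $\iprod{y,u}^d$ is easy, but to use \pref{thm:noisy-decomp} one needs the stronger SOS-sandwich, which requires controlling the empirical average as an operator (equivalently, as a polynomial identity modulo squares), not just as a scalar function of $u$; this is where the assumption $\E x_i^{2d} = n^{O(1)}$ and the sample size $n^{O(d)}/\poly(\tau)$ enter essentially. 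The remaining bookkeeping (e.g.\ the minor rescaling niceness across even degrees $d' \le d$, and tracking $\sigma^{d/2}$ factors to set the quantitative dependence of $\tau$ on $\e$ and $\sigma$) is routine.
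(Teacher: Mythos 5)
Your proposal follows essentially the same route as the paper: form the empirical polynomial $P=\tfrac1N\sum_i\iprod{y_i,u}^d$, show $\E_x\iprod{Ax,u}^d$ is SOS-sandwiched around $\norm{A^\top u}_d^d$ using the niceness conditions (this is the paper's \pref{lem:reduce-tensor}), argue concentration of $P$ around its expectation, and feed the result to the noisy tensor decomposition algorithm. The one place where you diverge, step (b), is handled more simply in the paper: rather than flattening to an $n^{d/2}\times n^{d/2}$ matrix and invoking matrix Bernstein (which is delicate here since the summands are heavy-tailed, having only polynomially bounded variance $\E x_i^{2d}=n^{O(1)}$ rather than subgaussian tails, and since $\norm{y_i}^d$ is more like $n^{O(d)}$ than $\poly(n)$), the paper shows each of the $n^{O(d)}$ coefficients of $P$ concentrates to within $\tau/n^d$ of its mean via Chebyshev plus a union bound, and then converts this coefficient-wise bound into the SOS sandwich $\pm(P-\bar P)\preceq\tau\norm{u}_2^d$ using the fact that every degree-$d$ monomial is SOS-dominated by $\norm{u}_2^d$. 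Your step (a) is also slightly loose in eliding the multinomial coefficients $\binom{d}{2\beta}$, which contribute the $d^d$ factor appearing explicitly in \pref{lem:reduce-tensor}; as you correctly note, this is absorbed by the exponentially small choice of $\tau(\e,\sigma)$, so the overall argument goes through.
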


In the \emph{tensor decomposition problem}, we are given a polynomial of the form $\norm{A^\top u}_d^d\in\R[u]$ (or equivalently a tensor of the form $\sum_i a_i^{\otimes d}$) and our goal is to recover the vectors $a_1,\ldots,a_m$ (up to signs).
It turns out the heart of the dictionary learning problem is solving a variant of the tensor decomposition problem, where we are not given the polynomial $\norm{A^\top u}_d^d$ but a polynomial close to it in spectral norm.
(The magnitude of this error is related to the niceness of the distribution, which means that we cannot assume it to be arbitrarily small.)

\begin{theorem}[Noisy tensor decomposition]
\torestate{
\label{thm:tensor-decomp}
There exists an algorithm that for every desired accuracy $\e>0$ and overcompleteness $\sigma\ge 1$ solves the following problem  for every degree $d \ge d(\e,\sigma)=O(1/\e)\cdot \log \sigma $ and noise parameter $\tau\le \tau(\e)=\Omega(\e)$ in time $n^{(1/\e)^{O(1)} (d + \log m)}$:
Given a degree-$d$ polynomial $P\in\R[u]$ that is $\tau$-close to $\lVert  A^\top u \rVert_d^d$ in spectral norm for a $\sigma$-overcomplete dictionary $A$, i.e.,
\[
\norm{A^\top u}_d^d + \tau \norm{u}_2^d  \sgeq P(u) \sgeq  \norm{A^\top u}_d^d - \tau \norm{u}_2^d \;,
\]
output a set of vectors that is $\e$-close to the set of columns of $A$ (in symmetrized Hausdorff distance).
}
\end{theorem}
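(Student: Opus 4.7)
My plan is to follow the two-phase strategy outlined in \pref{sec:overview}: first exhibit a degree-$k$ pseudo-distribution that behaves like a distribution concentrated on the columns of $A$, then use a random reweighing by a product of linear forms to isolate one column at a time and recover it via Gaussian rounding (\pref{lem:moments}).

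Set $k = C \log m / \e$ for a large enough absolute constant $C$, and invoke the SOS algorithm (\pref{thm:SOS}) to find a degree-$k$ pseudo-distribution $\{u\}$ satisfying $\{\lVert u\rVert_2^2 = 1\}$ that maximizes $\pE P(u)$. Feasibility, and the lower bound $\pE P(u) \ge 1 - \tau$, follow by plugging in the Dirac mass at any column $a^i$, which satisfies $P(a^i) \ge \lVert A^\top a^i\rVert_d^d - \tau \ge 1 - \tau$. The goal is to establish an SOS version of the statement that $\{u\}$ is ``essentially supported'' on $\{\pm a^1,\ldots,\pm a^m\}$.

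The key SOS inequality to prove is a polynomial analogue of $\lVert A^\top u\rVert_d^d \le \max_i \langle a^i,u\rangle^{d-2} \cdot \sigma$: by raising to the $s$-th power with $s = k/(d-2)$ and replacing the $\infty$-norm by the $k$-norm, I obtain \pref{eq:norm-inequality}, whose SOS proof reduces (term-by-term in the multinomial expansion) to \pref{lem:monomial-ineq}. Combining this with the hypothesis $P \sleq \lVert A^\top u\rVert_d^d + \tau\lVert u\rVert_2^d$, taking pseudo-expectations against $\{u\}$, and using $\pE P(u)^s \ge (1-\tau)^s$ (this step itself needs an SOS-friendly argument relating $\pE P^s$ to $(\pE P)^s$, or equivalently choosing $\tau$ small enough so that only a lower bound on $\pE P$ is used after an additional reweighing step), I will conclude an inequality of the form
\[
\pE \sum_{i} \langle a^i, u\rangle^{(d-2)s} \cdot \langle a^i, u\rangle^{2} \; \ge \; (1-O(\e)) \cdot \pE \lVert A^\top u \rVert_k^k \cdot \sigma^{s},
\]
which, morally, says the pseudo-distribution $\{u\}$ places most of its $\ell_2$-mass on directions where some $\langle a^i, u\rangle$ is close to $\pm 1$.

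Next, pick $W(u) = \prod_{j=1}^{t} \langle v^j, u\rangle$ with $t = O(\log m)$ and $v^1,\ldots,v^t$ i.i.d.\ standard Gaussians in $\R^n$, and reweigh $\{u\}$ by $W^2$ via \pref{lem:reweigh}. A direct computation shows that conditional on the event $\cE_i = \{|\langle v^j,a^i\rangle| \ge 2 \text{ for all } j\}$---which occurs with probability $m^{-O(1)}$ for each $i$---the ``weight'' $W(a^i)^2 \ge 4^t$ dominates $W(a^{i'})^2$ for all $i' \ne i$ by a factor of $m \cdot \poly$ with high probability over $v^1,\ldots,v^t$. Lifting this to the pseudo-distribution setting (using that all the relevant comparisons are SOS-representable via the inequality established in the previous paragraph), I will show that with probability at least $m^{-O(1)}$ over the choice of $W$, the reweighed pseudo-expectation satisfies
\[
\pE W(u)^2 \langle a^i, u\rangle^2 \; \ge \; (1-\e) \, \pE W(u)^2
\]
for some $i$. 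Setting $M = \pE W(u)^2 uu^\top / \pE W(u)^2$ and applying \pref{lem:moments} yields a Gaussian with covariance $M$; since $M$ is $\e$-close to a rank-one projector $a^i (a^i)^\top$, a sample from this Gaussian is $\e$-close to $\pm a^i$ with constant probability. Repeating with $\poly(m)$ independent choices of $W$ and clustering the resulting vectors (keeping only those that appear with noticeable frequency) produces a set $\e$-close to $\{a^1,\ldots,a^m\}$ in symmetrized Hausdorff distance, completing the proof.

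The main obstacle I anticipate is the SOS lifting of the ``isolation'' argument for $W$: the elementary probabilistic statement that with probability $m^{-O(1)}$ the product $W^2$ is concentrated on a single column relies on showing that $\pE W(u)^2 \langle a^i, u\rangle^2$ is governed by the actual random variables $\langle v^j,a^i\rangle$, and this requires manipulating pseudo-expectations of degree up to $2t + O(1) = O(\log m)$. This is precisely why $k$ must grow logarithmically in $m$ and why the running time is $n^{O(\log m)}$ in the quasipolynomial regime; controlling this lifting cleanly, while using only the niceness-free noisy-tensor hypothesis \pref{eq:tensor-noise}, is the technical heart of the argument.
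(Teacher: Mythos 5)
Your plan reproduces the high-level overview from Section~2 (Figure~\ref{fig:basic-alg}), and the core ingredients---the \Holder-type norm inequality and its term-by-term SOS proof via \pref{lem:monomial-ineq}, reweighing by a random polynomial, and Gaussian rounding via \pref{lem:moments}---are indeed the main tools the paper uses. But there are two genuine gaps. First, your scheme for recovering \emph{all} columns does not work: you fix one pseudo-distribution $\{u\}$ (the maximizer of $\pE P(u)$) and then repeat with $\poly(m)$ independent choices of $W$ and cluster. The SOS solver may return the moments of the Dirac mass at a single column $a^1$, which is feasible and attains $\pE P \ge 1-\tau$; no choice of $W$ applied to that pseudo-distribution will ever produce any other column. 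The paper's actual algorithm (\pref{sec:tensor-decomp-proof}) avoids this with an iterative exclusion scheme: after each vector $s$ is found, the constraint $\{\langle s,u\rangle^2 \le 1-\gamma\}$ is added to the SOS feasibility program before re-solving, and the triangle-inequality bookkeeping in that proof is what bounds the number of iterations by~$m$.

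Second, you maximize $\pE P(u)$ and then need $\pE P^s \gtrsim (1-\tau)^s$ for $s = k/(d-2)$; as you yourself flag, $\pE P^s \ge (\pE P)^s$ does not hold for pseudo-distributions, and neither of your suggested fixes is rigorous. The paper sidesteps this by posing a \emph{feasibility} problem with the SOS constraint $\{P(u) \ge 1-\tau\}$ (together with $\{\norm{u}_2^2 = 1\}$), rather than maximizing $\pE P$. Since $\norm{A^\top u}_d^d$ is a sum of even powers and $P \sgeq \norm{A^\top u}_d^d - \tau\norm{u}_2^d$, the constraint $\{\norm{A^\top u}_d^d \ge e^{-\delta d}\}$ can be raised to the power $s$ \emph{within} the SOS proof system (this is \pref{lem:tensor-decomp-main}), and an averaging argument then yields $\pE\langle c,u\rangle^k \ge e^{-\e k}$ for some column $c$---which is exactly the input required by the sampling theorem \pref{thm:sample}. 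So the shape of the argument is right, but both the maximization-instead-of-feasibility and the repeat-and-cluster-instead-of-iterative-exclusion are errors that need to be repaired.
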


\begin{remark}[Different notions of niceness]
\label{rem:niceness-definition}
In \nips{Section~1.1}{\pref{sec:intro-nice}} we defined $(d,\tau)$-niceness in a different way.
Instead of requiring $\E x^\alpha \le \tau$ for every monomial $x^\alpha\notin \{x_1^d,\ldots,x_m^d\}$, we only required this condition for some of these monomials, namely monomials of the form $x^\alpha=x_{i}^{d/2} x_j^{d/2}$.
It turns out that these two definitions are equivalent up to a factor $d$ in the exponent of $\tau$.
(This loss of a factor of $d$ in the exponent is OK, since in our applications $\tau$ will anyway be exponentially small in $d$.)
To see the equivalence of the definitions, note that every degree-$d$ square monomial $x^\alpha \notin \{x_1^d,\ldots,x_m^d\}$ involves at least two distinct variables, say $x_i$ and $x_j$, and therefore
\begin{math}
  x^\alpha = \E x_i^2x_j^2x^{\alpha'}
  \mcom
\end{math}
where $x^{\alpha'}$ is a monomial of degree $d-4$ (so that $\sum_k \alpha_k' = d-4$).
By \Hoelder's Inequality, we can bound its expectation
\[
\E x_i^2x_j^2 x^{\alpha'} \le  \left( \E x_i^{d/2} x_j^{d/2} \right)^{4/d} \left( \E x^{\beta}\right)^{(d-4)/d} \;,
\]
for $\beta = \tfrac{d}{d-4}\alpha'$.
Since $\sum \beta_k = d$, the Arithmetic-Mean Geometric-Mean Inequality together with our normalization $\E x_k^d=1$ implies
\[
\E x^{\beta} \leq \sum_k \tfrac {\beta_k } d \cdot  \E x_k^d = 1 \;,
\]
thus proving that $\E x^{\alpha} \leq ( \E x_i^{d/2} x_j^{d/2} )^{4/d}$ for every degree-$d$ square monomial $x^\alpha\notin \{x_1^d,\ldots,x_m^d\}$.
\end{remark}

\subsection{Dictionary learning via noisy tensor decompostion}

We will prove \pref{thm:tensor-decomp} (noisy tensor decomposition) in \pref{sec:sample} and \pref{sec:tensor-decomp-proof}.
At this point, let us see how it yields \pref{thm:dict-learn} (dictionary learning, quasipolynomial time).
The following lemma gives the connection between tensor decomposition and dictionary learning.

\begin{lemma}\label{lem:reduce-tensor}
  Let $\{x\}$ be a $(d,\tau)$-nice distribution over $\R^m$ and $A$ a $\sigma$-overcomplete dictionary.
  Then,\footnote{The factor $d^d$ can be somewhat reduced, e.g., to  $d^{d/2}$.
  However, this improvement would be hidden by $O(\cdot)$ notation at a later point.
  For simplicity, we will work with the simple $d^d$ bound at this point.
}
  \[
  \lVert  A^\top u \rVert_d^d + \tau \sigma^d d^d \lVert  u \rVert_2^d \succeq \E_x \langle  Ax,u \rangle^d  \succeq  \lVert  A^\top u \rVert_d^d \;.
  \]
\end{lemma}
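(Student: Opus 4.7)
The plan is to expand the inner product via the multinomial theorem in the linear forms $v_i = \langle a^i,u\rangle$ (so $v = A^\top u$), take expectation over $\{x\}$, and read off both inequalities by exploiting the three niceness conditions. Writing
\[
\E_x \langle Ax,u\rangle^d \;=\; \sum_{|\alpha|=d}\binom{d}{\alpha}(\E x^\alpha)\,v^\alpha,
\]
condition~\ref{itm:nice-oddvanish} kills every non-square $\alpha$; condition~\ref{itm:nice-lowerbound} turns the ``diagonal'' terms $\alpha = de_i$ into $\sum_i v_i^d = \|A^\top u\|_d^d$; and condition~\ref{itm:nice-evensmall} bounds each remaining (off-diagonal, square) coefficient by $\tau$. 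Every surviving $v^\alpha$ has $\alpha = 2\beta$ and hence $v^\alpha = (v^\beta)^2$, so each off-diagonal summand is a nonnegative scalar times a manifest square. This already delivers the lower bound $\E_x\langle Ax,u\rangle^d \succeq \|A^\top u\|_d^d$ with no further work.

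For the upper bound, the task reduces to SOS-dominating the off-diagonal remainder $R := \sum_{2\beta \ne de_i}\binom{d}{2\beta}(\E x^{2\beta})(v^\beta)^2$ by $\tau\sigma^d d^d\|u\|_2^d$. I would first compare multinomial coefficients via $\binom{d}{2\beta}\le d^{d/2}\binom{d/2}{\beta}$ (which follows from $d!/(d/2)!\le d^{d/2}$ together with $\beta_i!/(2\beta_i)!\le 1$), then extend the sum from off-diagonal to all square multi-indices (which only adds SOS terms) to recognize it as a full binomial expansion:
\[
R \;\preceq\; \tau\,d^{d/2}\sum_{|\beta|=d/2}\binom{d/2}{\beta}\,v^{2\beta} \;=\; \tau\,d^{d/2}(v_1^2+\cdots+v_m^2)^{d/2} \;=\; \tau\,d^{d/2}(u^\top AA^\top u)^{d/2}.
\]

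The one genuinely non-trivial step, which I expect to be the main (though still routine) obstacle, is to upgrade the numerical bound $u^\top AA^\top u \le \sigma\|u\|_2^2$ into a polynomial SOS inequality and then raise it to the $(d/2)$-th power in the SOS sense. For the base case I would note that $\sigma\|u\|_2^2 - u^\top AA^\top u$ is a PSD quadratic form in $u$ (using that $AA^\top$ and $A^\top A$ have the same nonzero eigenvalues) and hence SOS by Cholesky. For the power, I would apply the identity $Y^k-X^k = (Y-X)\sum_{i=0}^{k-1}Y^iX^{k-1-i}$ with $k=d/2$, $X = u^\top AA^\top u$, and $Y = \sigma\|u\|_2^2$: each $Y^iX^{k-1-i}$ is a product of SOS polynomials and hence SOS, and $Y-X$ is SOS by the base case, so the right-hand side is SOS. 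This yields $(u^\top AA^\top u)^{d/2}\preceq \sigma^{d/2}\|u\|_2^d$, and substituting back gives $R\preceq \tau d^{d/2}\sigma^{d/2}\|u\|_2^d \preceq \tau\sigma^d d^d\|u\|_2^d$ (using $\sigma,d\ge 1$), matching the statement.
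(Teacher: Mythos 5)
Your proof is correct and follows essentially the same approach as the paper: expand $\E_x\langle Ax,u\rangle^d$ multinomially in the linear forms $v_i=\langle a^i,u\rangle$, use the three niceness conditions to kill the non-square monomials and bound the off-diagonal coefficients by $\tau$, compare term by term against the multinomial expansion of $\tau d^d\|v\|_2^d$, and finish by SOS-bounding $\|A^\top u\|_2^d$ via the PSD inequality $AA^\top \preceq \sigma I$. The paper does the same thing but states the coefficient comparison and the norm bound more tersely ("all other coefficients are nonnegative," "using the relation $\|A^\top u\|_2^d \preceq \sigma^d\|u\|_2^d$"); you have simply supplied the explicit multinomial-coefficient estimate and the telescoping-identity argument for powers of an SOS quadratic, which the paper leaves implicit.
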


\begin{proof}
Consider the polynomial $p(v) = \lVert  v \rVert_d^d + \tau d^d \lVert  v \rVert_2^d - \E_x \langle  x,v \rangle^d$ in the monomial basis for the variables $v_1,\ldots,v_m$.
All coefficients corresponding to non-squared monomials are zero (by the third property of nice  istibutions).
All other coefficients are nonnegative (by the first and second property of nice distributions).
We conclude that $p$ is a sum of squares.
The relation $\lVert  A^\top u \rVert_d^d + \tau \sigma d^d \lVert  u \rVert_2^d \succeq \E_x \langle  Ax,u \rangle^d$ follows by substituting $v=A^\top u$ and using the relation $\lVert  A^\top u \rVert_2^d \preceq \sigma^d \lVert  u \rVert_2^d$.

For the lower bound, we see that the polynomial $q(v)=\E_x \langle  x,v \rangle^d - \lVert  v \rVert_d^d$ is a nonnegative combination of square monomials.
Thus, $q(v)\sge 0$ and the desired bound follows by substituting $v=A^\top u$.
\qedhere
\end{proof}

\begin{proof}[Proof of \pref{thm:dict-learn}]
If we take a sufficiently large number of samples $y_1,\ldots,y_N$ from the distribution $\{ y = Ax \}$
(e.g., $N \geq n^{O(d)}/\tau^2$ will do), then with high probability every coefficient of the polynomial $P = \tfrac{1}{N} \sum \iprod{y_i,u}^d\in\R[u]$ would be $\tau/n^d$-close to the corresponding coefficient of $\E \iprod{y,u}^d$.
Therefore, $\pm (P-\E\langle Ax,u\rangle^d)\preceq \tau \cdot \lVert  u \rVert_2^d$.
\Dnote{}
Together with \pref{lem:reduce-tensor} it follows that
\[
\lVert  A^\top u \rVert_d^d + 2\tau \sigma ^d d^d \lVert  u \rVert_2^d \succeq P \succeq  \lVert  A^\top u \rVert_d^d - 2 \tau \sigma^d d^d \lVert  u \rVert_2^d \;.
\]
Therefore, we can apply the algorithm in \pref{thm:tensor-decomp} (noisy tensor decomposition) for noise parameter $\tau'=2\tau k^d d^d$ to obtain a set $S$ of unit vectors that is $\e$-close to the set of columns of $A$ (in symmetrized Hausdorff distance).
\Dnote{}
\Dnote{}
\end{proof}

\section{Sampling pseudo-distributions} \label{sec:sample}

In this section we will develop an efficient algorithm that behaves in certain ways like a hypothetical sampling procedure for low-degree pseudo-distributions.
(Sampling procedures, even inefficient or approximate ones, cannot exist in general for low-degree pseudo-distributions \cite{Grigoriev01,Schoenebeck08}.)
This algorithm will be a key ingredient of our algorithm for \pref{thm:tensor-decomp} (noisy tensor decomposition, quasipolynomial time).

Here is the property of a sampling procedure that our algorithm mimics:
Suppose we have a probability distribution $\{u\}$ over unit vectors in $\R^m$ that satisfies $\E\langle  c,u \rangle^k\ge e^{-\e k}$ for some unit vector $c\in\R^m$, small $\e>0$, and $k\gg 1/\e$ (so that $e^{-\e k}$ is very small).
This condition implies that if we sample a vector~$u$ from the distribution then with probability at least $e^{-\e k}/2$ the vector satisfies $\langle c,u  \rangle^k \ge e^{-\e k}/2$, which means $\langle  c,u \rangle^2\ge e^{-2 \e }/2^{-1/k}\ge 1-O(\e)$.
(Since $e^{-\e k}$ was very small to begin with, the additional factor $2$ for the correlation and the probability is insubstantial.)

The algorithm in the following theorem achieves the above property of sampling procedures with the key advantage that it applies to any low-degree pseudo-distributions.

\begin{theorem}[Sampling pseudo-distributions]
  \label{thm:sample}
  For every even $k\ge 0$, there exists a randomized algorithm with running time $n^{O(k)}$ and success probability $2^{-k/\poly(\e)}$ for the following problem:
  Given a degree-$k$ pseudo distribution $\{u\}$ over $\R^n$ that satisfies the polynomial constraint $\lVert  u \rVert_2^2=1$ and the condition $\pE \langle  c,u \rangle^k\ge e^{-\e k}$ for some unit vector $c\in\R^n$, output a unit vector $c'\in\R^n$ with $\langle c,c'  \rangle\ge 1-O(\e)$.
\end{theorem}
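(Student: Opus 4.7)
The plan is to reweigh the pseudo-distribution $\{u\}$ by a random SOS polynomial of degree $k-2$ and then apply the Gaussian moment-matching of \pref{lem:moments}. To motivate the reweighing, consider first the (hypothetical) case where $c$ were known. Let $\alpha_j = \pE \langle c,u\rangle^j$ for even $j \le k$. By Cauchy-Schwarz for pseudo-expectations, $\alpha_j^2 \le \alpha_{j-2}\alpha_{j+2}$, so the ratios $\alpha_{j+2}/\alpha_j$ are non-decreasing in $j$. The constraint $\lVert u\rVert^2 = 1$ makes $\langle c,u\rangle^{k-2}(1-\langle c,u\rangle^2)$ SOS, so $\alpha_{k-2} \ge \alpha_k \ge e^{-\epsilon k}$, and the telescoping identity $\alpha_{k-2} = \prod_{j=0,2,\dots,k-4}(\alpha_{j+2}/\alpha_j)$ yields that the largest ratio satisfies $\alpha_k/\alpha_{k-2} \ge \alpha_{k-2}^{2/(k-2)} \ge e^{-O(\epsilon)} \ge 1 - O(\epsilon)$. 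Reweighing $\{u\}$ by the SOS polynomial $\langle c,u\rangle^{k-2}$ via \pref{lem:reweigh} would therefore produce a pseudo-distribution $\{u'\}$ with $\pE_{u'}\langle c,u'\rangle^2 \ge 1 - O(\epsilon)$.

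To handle the fact that $c$ is unknown, I would randomize: sample i.i.d.\ Gaussians $v^1,\ldots,v^t \sim N(0,I)$ with $t = (k-2)/2$ and reweigh by $W(u)^2 = \prod_{j=1}^t \langle v^j, u\rangle^2$. The ``good event'' $\cG$ is that $|\langle v^j, c\rangle| \ge s$ for all $j$, with $s = \Theta(1/\sqrt{\epsilon})$; it has probability at least $(e^{-s^2/2})^t = 2^{-O(k/\epsilon)} = 2^{-k/\poly(\epsilon)}$, matching the claimed success probability. Conditional on $\cG$, decomposing $v^j = \lambda_j c + w^j$ with $w^j \perp c$ and integrating out $w^j \sim N(0, I-cc^\top)$ replaces $W^2$ in expectation by the univariate polynomial $h(\beta) = (1 + (s^2-1)\beta^2)^t$ in $\beta = \langle c,u\rangle$. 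Via the binomial expansion of $h$ and $\alpha_{2j} = \prod_{i<j} r_i$ where $r_j = \alpha_{2j+2}/\alpha_{2j}$, the marginal ratio $\pE h\cdot\beta^2 / \pE h = \sum_j w_j r_j / \sum_j w_j$ becomes a convex combination of the $r_j$'s with weights $w_j = \binom{t}{j}(s^2-1)^j\alpha_{2j}$.

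The main technical obstacle is to show that this marginal ratio is at least $1 - O(\epsilon)$ and that the bound survives passing to an actual draw of $(v^j)$. For the first part, I would use that $r_j$ is non-decreasing together with $\sum_j(1-r_j) \le -\sum_j \log r_j = -\log\alpha_{k-2} \le \epsilon k$; monotonicity of $(1-r_j)$ then forces $1-r_j \le O(\epsilon t)/(j+1)$, so $1-r_j = O(\epsilon)$ on any range $j \gtrsim t/2$. Meanwhile, with $s^2 = \Theta(1/\epsilon)$ the binomial-like weights $w_j$ concentrate around $j^\star \approx t(1-\Theta(\epsilon))$ with standard-deviation-scale spread $O(\sqrt{\epsilon t})$, which for $k \gg 1/\epsilon$ sits safely inside the good region. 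To lift from expectation to an actual draw I apply Markov's inequality to the non-negative quantity $\pE W^2(1-\beta^2)$ (whose conditional mean is $O(\epsilon)\cdot\E[\pE W^2 \mid \cG]$) and pair it with a Paley-Zygmund-style lower bound on $\pE W^2$, yielding $\pE_{u'}\langle c,u'\rangle^2 \ge 1-O(\epsilon)$ with constant conditional probability.

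Finally, \pref{lem:moments} produces a Gaussian $\xi \sim N(0,M')$ with $M' = \pE_{u'} u' u'^\top$ satisfying $\Tr M' = 1$ and $c^\top M' c \ge 1-O(\epsilon)$. Since $\E\lVert \xi - \langle c,\xi\rangle c\rVert^2 = \Tr M' - c^\top M' c \le O(\epsilon)$ and $\langle c,\xi\rangle$ is a Gaussian of variance $\ge 1-O(\epsilon)$, Markov's inequality together with Gaussian anti-concentration give $c' = \xi/\lVert\xi\rVert$ with $|\langle c,c'\rangle| \ge 1-O(\epsilon)$ with constant probability; the sign ambiguity is resolved by outputting both $\pm c'$. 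Multiplying the probabilities of $\cG$, the conditional reweighing success, and the Gaussian-sampling success gives overall success $2^{-k/\poly(\epsilon)}$, as required.
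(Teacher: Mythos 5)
Your proposal follows the same overall architecture as the paper's proof: reweigh by a random product of squared Gaussian linear forms, condition on the event that all linear forms are strongly correlated with the unknown direction $c$, pass from expectation over the randomness of the reweighing polynomial to a typical draw via a second-moment (Paley--Zygmund) argument, and finish by matching first two moments with a Gaussian and normalizing. Where you genuinely diverge from the paper is in the key technical step showing that the reweighed pseudo-distribution satisfies $\pE W\langle c,u\rangle^2 \ge (1-O(\e))\pE W$. The paper exhibits an explicit univariate polynomial $p(\alpha)=\alpha^2(\alpha^2+\tfrac1M)^{k/2}+(1-\tfrac1M)^{k/2}-(1-\tfrac2M)(\alpha^2+\tfrac1M)^{k/2}$, argues it is nonnegative on $\R$ by splitting at $\alpha^2=1-2/M$, and invokes the fact that univariate nonnegative polynomials are sums of squares; this gives a one-line SOS certificate valid for any pseudo-distribution satisfying $\{\lVert u\rVert^2=1\}$. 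You instead set $\alpha_j=\pE\langle c,u\rangle^j$, observe the log-convexity $\alpha_j^2\le\alpha_{j-2}\alpha_{j+2}$ from Cauchy--Schwarz for pseudo-expectations, telescope to control the ratios $r_j=\alpha_{2j+2}/\alpha_{2j}$, and then express $\pE\bar W\langle c,u\rangle^2/\pE\bar W$ as a convex combination $\sum w_j r_j/\sum w_j$. This is a legitimate alternative route; the log-convexity observation in particular is a clean structural fact not used in the paper. The paper's route is somewhat cleaner in that the SOS certificate is self-contained, whereas yours pushes the work into an analysis of the weights $w_j$.

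One part of your weight analysis is stated a bit loosely. You say the weights $w_j=\binom{t}{j}(s^2-1)^j\alpha_{2j}$ ``concentrate around $j^\star\approx t(1-\Theta(\e))$ with standard-deviation-scale spread $O(\sqrt{\e t})$,'' but that is the behavior of the pure binomial profile $\binom{t}{j}(s^2-1)^j$; the factors $\alpha_{2j}$ are non-increasing and can vary by a factor as large as $e^{\e k}$, which can shift the mode substantially toward smaller $j$. (Consider, for instance, the 1D pseudo-distribution with $\alpha_0=1$ and $\alpha_{2j}=e^{-\e k}$ for all $j\ge 1$: here $w_0=1$ is not negligible relative to the naive binomial peak.) What actually rescues the argument is not tight binomial concentration but a crude tail bound: using $e^{-\e k}\le\alpha_{2j}\le 1$ (the lower bound because $\alpha_{2j}\ge\alpha_k$ by the constraint $\lVert u\rVert^2=1$, the upper bound similarly), one gets $\sum_{j<t/10}w_j\le t\binom{t}{t/10}(s^2-1)^{t/10}$ while $\sum_j w_j\ge w_t\ge (s^2-1)^t e^{-\e k}$, so the fraction of weight on small $j$ is at most roughly $e^{\e k}\cdot 2^t(s^2-1)^{-0.9t}$, which is superexponentially small once $s^2\gtrsim 1/\e$. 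Combined with your bound $1-r_j\le 2\e t/(j+1)\le O(\e)$ for $j\gtrsim t$, this does give $\sum w_j(1-r_j)\le O(\e)\sum w_j$. So the conclusion is correct, but I'd recommend replacing the ``concentration'' language with the explicit small-$j$ tail bound above, which makes essential use of the lower bound on $\alpha_{2j}$.
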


The result  follows from the following lemmas.
\Dnote{}

\begin{lemma}
\label{lem:sample:reweigh}
  Let $c\in\R^n$ be a unit vector and let $\{u\}$ be a degree-$(k+2)$ pseudo-distribution over $\R^n$ that satisfies the polynomial constraint $\lVert  u \rVert_2^2=1$.
  Suppose $\pE \langle  c,u \rangle^k\ge e^{-\e k}$ for $\e>0$.
  Then, there exists a degree-$k$ sum-of-squares polynomial $W$ such that
  \begin{displaymath}
    \pE W\cdot \langle c,u \rangle^2 \ge \left (1-O\left(\e\right) \right) \pE W\mper
  \end{displaymath}
  Furthermore, there exists a randomized algorithm that runs in time $n^{O(k)}$ and computes such a polynomial $W$ with probability at least $2^{-O(k/\poly(\e))}$.
\end{lemma}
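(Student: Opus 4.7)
The plan is to prove the existence of $W$ by taking the natural candidate $W(u) := \langle c,u\rangle^k$, a degree-$k$ sum-of-squares polynomial (since $k$ is even), and then to supply the randomized algorithm via Gaussian moment matching (\pref{lem:moments}).

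For existence, write $m_j := \pE \langle c,u\rangle^j$; the inequality to establish reads $m_{k+2}\ge (1-O(\epsilon))m_k$. Two ingredients give this. First, pseudo-expectation Cauchy--Schwarz applied to $p := \langle c,u\rangle^{j-1}$ and $q := \langle c,u\rangle^{j+1}$ yields log-concavity of the even moments,
\[
m_{2j}^2 \;=\; (\pE pq)^2 \;\le\; (\pE p^2)(\pE q^2) \;=\; m_{2j-2}\cdot m_{2j+2}\mcom
\]
valid whenever $2j+2\le k+2$. Second, the SOS identity $\|u\|^2\|c\|^2 - \langle c,u\rangle^2 = \sum_{i<j}(c_iu_j-c_ju_i)^2 \succeq 0$, together with the constraint $\|u\|^2\equiv 1$ and $\|c\|=1$, gives $\pE \langle c,u\rangle^2\cdot p(u)^2 \le \pE p(u)^2$ for every polynomial $p$; taking $p(u):=\langle c,u\rangle^j$ yields monotonicity $m_{2j+2}\le m_{2j}$. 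Thus the ratios $r_j := m_{2j+2}/m_{2j}$ lie in $(0,1]$ and are non-decreasing in $j$, with product $\prod_{j=0}^{k/2-1} r_j = m_k \ge e^{-\epsilon k}$. Since $r_{k/2-1}$ is at least the geometric mean of these $k/2$ ratios, $r_{k/2}\ge r_{k/2-1}\ge m_k^{2/k}\ge e^{-2\epsilon}\ge 1-2\epsilon$, which is exactly $m_{k+2}\ge (1-O(\epsilon))m_k$.

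For the algorithmic statement, I plan to mimic the unknown direction $c$ by a random draw from the matching Gaussian: using \pref{lem:moments}, sample $v\sim N(0,M)$ with $M := \pE uu^\top$, and set $W(u) := \langle v,u\rangle^k$, which is degree $k$ and SOS. Applying the ratio chain with $j=0$ gives $c^\top M c = m_2 \ge m_k^{2/k}\ge e^{-2\epsilon} = \Omega(1)$, so the draw $v$ has constant variance along $c$. The hard part will be a Gaussian anti-concentration analysis showing that, with probability at least $2^{-O(k/\poly(\epsilon))}$ over $v$, the component $\langle v,c\rangle$ dominates $v^\perp := v - \langle v,c\rangle c$ strongly enough---when measured against the pseudo-distribution $\{u\}$---that $\langle v,u\rangle^k$ is, up to a small additive $O(\epsilon)\cdot\pE\langle v,u\rangle^k$ slack, a positive scalar multiple of $\langle c,u\rangle^k$, so that $W$ inherits the inequality established in the existence step. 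Because $c$ is unknown we cannot test a candidate $W$ directly; the algorithm will simply output its sampled $W$ and rely on the calling routine (e.g., \pref{thm:sample}) to aggregate success over independent draws.
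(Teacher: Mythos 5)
Your existence argument is correct and arguably cleaner than the paper's. Taking $W=\langle c,u\rangle^k$ and proving $m_{k+2}\ge(1-O(\e))m_k$ via the pseudo-expectation Cauchy--Schwarz chain $m_{2j}^2\le m_{2j-2}m_{2j+2}$ (log-convexity of the even-moment sequence, not ``log-concavity'' as you write, though the inequality is right), the monotonicity $m_{2j+2}\le m_{2j}$, and the geometric-mean observation is a genuinely different route. The paper instead takes $\overline W=(\langle c,u\rangle^2+1/M)^{k/2}$ for a carefully tuned $M$, proves a univariate sum-of-squares inequality \pref{eq:1}, and balances two error terms. Your version is shorter and exposes the structure nicely. (One degree check worth stating explicitly: the last Cauchy--Schwarz step $m_k^2\le m_{k-2}m_{k+2}$ uses $\deg\langle c,u\rangle^{k+2}=k+2$, which is exactly the degree the lemma assumes.)

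The algorithmic part, however, has a genuine gap that is not merely a detail to fill in. You claim $c^\top Mc = m_2\ge m_k^{2/k}\ge e^{-2\e}=\Omega(1)$; the inequality goes the other way. Since $r_0\le r_1\le\cdots$ and $\prod_{j<k/2} r_j=m_k$, the \emph{smallest} ratio $r_0=m_2$ is at most the geometric mean $m_k^{2/k}$, not at least. Indeed $m_2$ can be as small as $e^{-\Omega(\e k)}$: take $\{u\}$ uniform over $\{\pm c,\pm v_1,\dots,\pm v_{m-1}\}$ with the $v_i\perp c$; then $m_2=m_k=1/m$, which satisfies the hypothesis whenever $m\le e^{\e k}$ but makes $M=\pE uu^\top$ essentially isotropic with only a $1/m$ share of its trace along $c$. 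A draw $v\sim N(0,M)$ then has variance $1/m$ along $c$ and total variance $1-1/m$ in the orthogonal complement, so $v$ aligns with $c$ only with probability exponentially small in $n$ (indeed the whole point of the reweighing lemma is that the un-reweighed second-moment matrix $M$ is \emph{not} concentrated near $cc^\top$; if it were, the final lemma of \pref{sec:sample} would already suffice). Consequently the target success probability $2^{-O(k/\poly(\e))}$ cannot be met by a single random linear form $W=\langle v,u\rangle^k$. The paper's proof circumvents exactly this obstacle by taking $W$ to be a \emph{product} of $k/2$ independent squared linear forms $\langle\xi^{(j)},u\rangle^2/M$, where each $\xi^{(j)}$ only needs to cross a moderate one-sided threshold $\langle\xi^{(j)},c\rangle\ge\tau_{M+1}$ (probability $2^{-O(M^2)}$, independent across $j$); conditionally the expectation of the product is $(\langle c,u\rangle^2+\tfrac1M\|u\|^2)^{k/2}$, which tilts mass toward $c$ multiplicatively in every factor, and a second-moment argument (\pref{lem:second-moment-bound}) converts the bound in expectation into a probability of $2^{-O(kM^2)}$. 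That independence-across-factors structure is essential and is missing from your single-direction plan, which is why the ``Gaussian anti-concentration analysis'' you defer cannot be carried out as sketched.
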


\begin{proof}

Let us first analyze the random polynomial $w=\langle  \xi,u \rangle^2$ for an $n$-dimensional standard Gaussian vector $\xi$.
Let $\tau_M$ be such that a standard Gaussian variable $\xi_0$ conditioned on $\xi_0\ge \tau_M$ has expectation $\E_{\xi_0\ge \tau_M}\xi_0^2=M$.
This threshold satisfies $\tau_M\le M$ and thus $\Pr\{\xi_0\ge \tau_M\}\ge 2^{-O(M^2)}$.
Conditioned on the event $\langle c,\xi \rangle\ge \tau_{M+1}$, the expectation of the random polynomial $w$ satisfies
\begin{displaymath}
  \E_{\{\xi\,\mid\, \langle  c,\xi \rangle\ge \tau_{M+1}\}} w
  = (M+1) \cdot \langle c, u\rangle ^2  + \lVert  u \rVert_2^2 -\langle  c, u \rangle^2
  = M\cdot \langle  c,u \rangle^2 + \lVert  u \rVert_2^2\mper
\end{displaymath}
(Here, we use that $\xi=\langle  c,\xi \rangle c + \xi'$, where $\xi'$ is a standard Gaussian vector in the subspace orthogonal to $c$ so that $\E\langle  \xi',u \rangle^2=\lVert  u \rVert_2^2-\langle  c,u \rangle^2$.)
\Dnote{}

Let $\super w 1,\ldots,\super w {k/2}$ be independent samples from the distribution $\{w\mid \langle  c,\xi \rangle\ge \tau_{M+1}\}$.
Then, let $W=\super w 1 \cdots \super w {k/2} / M^{k/2}$.
The expectation of this random polynomial satisfies
\begin{displaymath}
  \E W = \left ( \langle  c,u \rangle^2 + \tfrac 1{M}\cdot \lVert  u \rVert_2^2  \right )^{k/2}\mper
\end{displaymath}
Let $\overline W=(\langle c,u \rangle^2 + \ffrac 1{M})^{k/2}$.
Since the pseudo-distribution $\{u\}$ satisfies the constraint $\lVert  u \rVert_2^2=1$, it also satisfies the constraint $\E W=\overline W$.
We claim that,
\begin{equation}\label{eq:1}
  \overline W \cdot \langle c, u \rangle ^2 \succeq \left(1-\tfrac 2{M}\right) \cdot \overline W - \left(1- \tfrac 1{M} \right)^{k/2}\mper
\end{equation}
Consider the univariate polynomial
\begin{displaymath}
  p(\alpha) = \alpha^2\cdot (\alpha^2+\tfrac 1 {M})^{k/2}+(1-\tfrac 1{M})^{k/2} - (1-\tfrac 2 M)(\alpha^2+\tfrac 1 {M})^{k/2}\mper
\end{displaymath}
This polynomial is nonnegative on $\R$, because for $\alpha^2\ge 1-\ffrac 2 M$, the first term cancels the last term, and for $\alpha^2<1-\ffrac 2M$, the second term cancels the last term.
Since $p$ is univariate and nonnegative on $\R$, it follows that $p$ is a sum of squares.
Hence, equation \pref{eq:1} follows by substituting $\alpha=\langle  c,u \rangle$.

The following bound shows that there exists a polynomial $W$ that satisfies the conclusion of the lemma,
\begin{align}
  \E_W \pE W \cdot \langle  c,u \rangle^2 & \ge (1-\tfrac 2M)\E_W \pE W - e^{-k/2M}\notag\\
  &  \ge (1- \tfrac 2M - e^{-1/2\e M}) \E_W \pE W\notag\\
  & \ge \left(1- O\left( \e \right) \right) \E_W \pE W\mper\label{eq:2}
\end{align}
The first step uses \pref{eq:1} and the bound $(1-1/M)\le e^{-1/M}$.
The second step uses that $\E_W \pE W =\pE \overline W \ge \pE \langle  c,u \rangle^k \ge e^{-\e k}$ (premise of the lemma).
For the third step, we choose $M=(1/\e) \cdot \log(1/\e)$ to trade-off the two error terms $2/M$ and $e^{-1/2\e M}$.
\Dnote{}

To show the second part of the lemma, we give a randomized algorithm that runs in time $n^{O(k)}$ and computes a polynomial $W_0$ with the desired properties with probability $2^{-O(k/\poly(\e))}$.
The algorithm samples independent standard Gaussian vectors $\super \xi1,\ldots,\super \xi{k/2}$ and outputs the polynomial $W_0=\tfrac 1{M^{K/2}} \langle  \super \xi 1, u \rangle^2\cdots\langle  \super \xi {k/2},u \rangle^2$.
We are to show that $\pE W_0 \langle  c,u \rangle^2\ge (1-O(\e))\pE W_0$ with probability $2^{-O(k/\poly(\e))}$ over the choice of $W_0$.
The distribution $\{W\}$ has density $2^{-O(M^2)}$ in the distribution $\{W_0\}$, in the sense that there exists an event $\cE$ with $\Pr_{\{W_0\}} \cE\ge 2^{-O(M^2)}$ and $\{W\}=\{W_0\mid \cE\}$.
(The event is $\cE = \{\langle  \super \xi 1,c \rangle,\ldots, \langle  \super \xi {k/2},c \rangle\ge \tau_{M+1}\}$).

We will first bound the second moment  $\E_{W}(\pE W)^2$.
The main step is the following bound on the expectation of the random polynomial $w(u)w(u') \in \R[u,u']_4$,
\begin{align}
  \E_{\{\xi\mid \langle  c,\xi \rangle\ge \tau_{M+1}\}}  w(u) w(u')
  & =   \E_{\{\xi\mid \langle  c,\xi \rangle\ge \tau_{M+1}\}}  \bigl(\langle  c,\xi \rangle \langle  c,u \rangle + \langle  \xi',u \rangle\bigr)^2  \bigl(\langle  c,\xi \rangle \langle  c,u' \rangle + \langle  \xi',u' \rangle\bigr)^2\notag\\
  &  \preceq  2^{100M^2} \bigl(\langle  c,u \rangle^2 + \tfrac 1 {M} \lVert  u \rVert\bigr)^2 \bigl(\langle  c,u' \rangle^2 + \tfrac 1 {M} \lVert  u' \rVert\bigr)^2\label{eq:3}
\end{align}
In the second step, $\xi'$ is a standard Gaussian vector in the subspace orthogonal to $c$.
The third step uses the crude upper bound $\E_{\xi\mid \langle  c,\xi \rangle\ge \tau_{M+1}} \langle c,\xi\rangle ^4 \le 2^{10M^2}$ for $M\ge 1$.
\Dnote{}

The inequality \pref{eq:3} implies the second moment bound $\E_W(\pE W)^2\le 2^{100 kM^2} (\E_W\pE W)^2$.
\Dnote{}
By \pref{lem:second-moment-bound} and \pref{eq:2}, it follows that
\begin{displaymath}
  \Pr_{W} \Set{ \pE W \cdot (1-\langle  c,u \rangle^2) \le O(\e)\pE W} \ge \e^2\cdot 2^{-100 kM^2}=2^{-O(kM^2)}\mper
\end{displaymath}
Since $\{W\}$ has density $2^{-O(M^2)}$ in $\{W_0\}$, it also follows that
\begin{displaymath}
  \Pr_{W_0} \Set{ \pE W_0 \cdot \langle  c,u \rangle^2 \ge (1-O(\e))\pE W_0} \ge 2^{-O(kM^2)}\mper\qedhere
\end{displaymath}

\end{proof}

\begin{lemma}
  \label{lem:second-moment-bound}
  \label{lem:second-moment-truncation}
  Let $\{A,B\}$ be a distribution that satisfies $0\le A\le B$.
  Suppose $\E A \le \e \E B$ and $\E B^2 \le t(\E B)^2$.
  Then, $\Pr\{A\le e^{\delta} \e B \}\ge \delta^2/9t$ for all $0\le \delta\le 1$.
\end{lemma}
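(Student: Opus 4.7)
} The plan is to apply a Paley--Zygmund-type second-moment argument to the truncated random variable $Z = \max(0, e^{\delta}\e B - A)$, which precisely captures the event we want to lower bound: $\{Z>0\}$ is (up to a measure-zero set) the event $\{A \le e^{\delta}\e B\}$. So it suffices to prove $\Pr\{Z>0\}\ge \delta^2/(9t)$, which by Cauchy--Schwarz follows from $(\E Z)^2/\E Z^2 \ge \delta^2/(9t)$.

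First I would lower bound $\E Z$. Since $Z \ge e^{\delta}\e B - A$ pointwise, linearity of expectation and the hypothesis $\E A\le \e \E B$ give $\E Z \ge e^{\delta}\e \E B - \e \E B = (e^{\delta}-1)\e \E B$. Using the elementary inequality $e^{\delta}-1\ge \delta$ valid for all $\delta\ge 0$, this yields $\E Z \ge \delta\e \E B$.

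Next I would upper bound $\E Z^2$. Because $A\ge 0$ and $B\ge 0$, we have $0\le Z \le e^{\delta}\e B$, so $\E Z^2 \le e^{2\delta}\e^2 \E B^2 \le e^{2\delta}\e^2 t (\E B)^2$ by the second-moment hypothesis. Combining with the lower bound on $\E Z$,
\begin{displaymath}
  \Pr\{Z > 0\} \;\ge\; \frac{(\E Z)^2}{\E Z^2} \;\ge\; \frac{\delta^2 \e^2 (\E B)^2}{e^{2\delta}\e^2 t (\E B)^2} \;=\; \frac{\delta^2}{e^{2\delta} t}\mper
\end{displaymath}
Finally, for $\delta\in[0,1]$ we have $e^{2\delta}\le e^2 \le 9$, so $\Pr\{Z>0\}\ge \delta^2/(9t)$, and since $\{Z>0\}\subseteq \{A\le e^{\delta}\e B\}$ the conclusion follows.

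I do not anticipate any real obstacle: the only sanity checks are that $e^{\delta}-1\ge \delta$ on $[0,1]$ (immediate from the Taylor series) and that $e^2 < 9$ (which is where the specific constant $9$ in the statement comes from). The use of $Z = \max(0, e^{\delta}\e B - A)$ rather than $Y = e^{\delta}\e B - A$ is the one small point worth noting, since $Y$ can be negative and we need both a nonnegative random variable and a pointwise upper bound in terms of $B$ in order for the Cauchy--Schwarz step to yield a useful inequality; truncating at $0$ only increases $\E Y$ (because we drop negative mass) and preserves the bound $Z\le e^{\delta}\e B$.
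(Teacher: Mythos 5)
Your proof is correct, and it is a second-moment argument of the same flavor as the paper's, but with a genuinely different choice of auxiliary random variable. The paper works with $\Ind_{\mathrm{good}}\, B$ where $\Ind_{\mathrm{good}}$ is the indicator of $\{A \le e^{\delta}\e B\}$: it first shows via a Markov-style step that $\E \Ind_{\mathrm{bad}}\, B \le e^{-\delta}\E B$ (hence $\E \Ind_{\mathrm{good}}\, B \ge (1-e^{-\delta})\E B$), and then applies Cauchy--Schwarz in the form $\E \Ind_{\mathrm{good}}\, B \le \sqrt{p_{\mathrm{good}}\,\E B^2}$ to conclude $p_{\mathrm{good}} \ge (1-e^{-\delta})^2/t$. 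You instead set $Z = \max(0, e^{\delta}\e B - A)$ and apply Paley--Zygmund directly, getting $\Pr\{Z>0\}\ge \delta^2/(e^{2\delta} t)$. Both bounds are slightly stronger than the stated $\delta^2/(9t)$ before being relaxed; the paper's intermediate bound $(1-e^{\delta})^2/t$ is in fact the stronger of the two (since $1-e^{-\delta}\ge \delta e^{-\delta}$), but both comfortably imply the lemma on $[0,1]$. Your route is arguably cleaner: the truncated slack $Z$ directly encodes the event, so both the mean lower bound and the second-moment upper bound follow in one line each, whereas the paper needs the extra step of bounding the mass of $B$ on the bad event. One small point worth keeping explicit (as you did) is that $\{Z>0\}$ is the strict-inequality event $\{A < e^{\delta}\e B\}$, which is a subset of the claimed event, so the inclusion goes the right way.
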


\begin{proof}

Let $\Ind_{\mathrm{good}}$ be the 0/1 indicator of the event $\{A\le e^{\delta}\e B\}$ and let $p_{\mathrm{good}}=\E\Ind_{\mathrm{good}}$.
Let $\Ind_{\mathrm{bad}}=1-\Ind_{\mathrm{good}}$ be the 0/1 indicator of the complement.
The expecation of $\Ind_{\mathrm{good}} B$ satisfies the lower bound $\pE \Ind_{\mathrm{good}} B\ge (1-e^{-\delta})\pE B$ because  $\e \E B\ge \E A\ge e^\delta \e \E\Ind_{\mathrm{bad}}  B$ and thus $\pE \Ind_{\mathrm{bad}} B\ge e^{-\delta}\pE B$.
At the same time, we can upper bound the expectation of $\Ind_{\mathrm{good}} B$ in terms of $p_{\mathrm{good}}$ using Cauchy--Schwarz and the second moment bound $\E B^2\le t(\E B)^2$,
\begin{displaymath}
  \E \Ind_{\mathrm{good}} B \le (\E \Ind_{\mathrm{good}}^2 \cdot \E B^2)^{1/2}\le (p_{\mathrm{good}}\cdot t)^{1/2} \E B.
\end{displaymath}
It follows that $p_{\mathrm{good}}\ge (1-e^{-\delta})^2/t\ge \delta^2/9t$.\qedhere

\end{proof}

\begin{lemma}
  Let $c\in \R^n$ be a unit vector and let  $\{u\}$ be a degree-$2$ pseudo-distribution over $\R^n$ that satisfies the constraint $\lVert u \rVert_2^2=1$.
  Suppose $\pE \langle c,u\rangle ^2\ge 1-\e$ for $\e>0$.
  Then, there exists a distribution $\{v\}$ over unit vectors in $\R^n$ such that $\Pr\{\langle  c,v \rangle^2\ge 1-2\e\}=\Omega(1)$.
  Moreover, there exists a randomized polynomial-time algorithm to sample from such a distribution $\{v\}$.
\end{lemma}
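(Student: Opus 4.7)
\medskip

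\noindent\emph{Proof plan.}
The plan is to produce $\{v\}$ by sampling the degree-$2$ Gaussian surrogate guaranteed by \pref{lem:moments} and then normalizing.
Concretely, first apply \pref{lem:moments} to the pseudo-distribution $\{u\}$ to obtain a Gaussian distribution $\{\xi\}$ on $\R^n$ whose first two moments match those of $\{u\}$; in particular $\E\lVert \xi\rVert_2^2=\pE \lVert u\rVert_2^2 = 1$ (using that $\{u\}$ satisfies $\lVert u\rVert_2^2=1$) and $\E\langle c,\xi\rangle^2 = \pE\langle c,u\rangle^2 \ge 1-\e$.
Then define the random unit vector $v = \xi / \lVert\xi\rVert_2$ (which is well-defined almost surely since $\E\lVert\xi\rVert_2^2 = 1 > 0$).
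Sampling $\xi$ reduces to computing the Cholesky factor of $M=\pE uu^\top$ as in the proof of \pref{lem:moments}, so the whole procedure runs in randomized polynomial time.

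Next, I would reduce the conclusion to an application of \pref{lem:second-moment-bound}.
Set $B=\lVert \xi\rVert_2^2$ and $A = \lVert\xi\rVert_2^2 - \langle c,\xi\rangle^2$, so $0\le A\le B$ (because $c$ is a unit vector, Cauchy--Schwarz gives $\langle c,\xi\rangle^2 \le \lVert\xi\rVert_2^2$).
Note that the event $\{\langle c,v\rangle^2 \ge 1-2\e\}$ is exactly the event $\{A \le 2\e \cdot B\}$.
From the moment matching we get $\E A = \pE\lVert u\rVert_2^2 - \pE\langle c,u\rangle^2 \le \e$ and $\E B = 1$, hence $\E A \le \e\cdot \E B$, which matches the hypothesis of \pref{lem:second-moment-bound}.

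It then remains to establish a constant second-moment bound $\E B^2 \le t\cdot (\E B)^2$ for some absolute constant $t$.
Writing the covariance of $\xi$ as $\Sigma = \pE uu^\top$ and diagonalizing, one has $\lVert\xi\rVert_2^2 = \sum_i \lambda_i Z_i^2 + (\text{constant shift from the mean})$ where the $Z_i$ are independent standard Gaussians and $\sum_i \lambda_i = \mathrm{tr}(\Sigma) = 1$.
A direct Gaussian-moment computation (together with $\mathrm{tr}(\Sigma^2) \le (\mathrm{tr}(\Sigma))^2 = 1$, and absorbing the mean contribution which only makes $\lVert\xi\rVert_2^2$ more concentrated) yields $\E B^2 \le 3 = 3\cdot (\E B)^2$, so $t=3$.
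Applying \pref{lem:second-moment-bound} with $\delta = \ln 2$ (so that $e^\delta \e = 2\e$) gives
\[
\Pr\bigl\{ \langle c,v\rangle^2 \ge 1-2\e\bigr\} \;=\; \Pr\{A \le 2\e B\} \;\ge\; \frac{(\ln 2)^2}{27} \;=\; \Omega(1)\mper
\]

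The analysis above is essentially a routine application of the two preceding lemmas, and I do not anticipate a serious obstacle.
The only point that requires a line of calculation is the fourth-moment bound on $\lVert\xi\rVert_2^2$; this is where the assumption that $\{u\}$ satisfies $\lVert u\rVert_2^2=1$ is crucial, as it forces $\mathrm{tr}(\Sigma)=1$ and hence gives the needed constant factor $t=O(1)$ independent of $n$.
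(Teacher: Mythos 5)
Your proposal is correct and follows the same approach as the paper: sample the Gaussian surrogate from \pref{lem:moments}, normalize, set $A=\lVert\xi\rVert_2^2-\langle c,\xi\rangle^2$ and $B=\lVert\xi\rVert_2^2$, and invoke \pref{lem:second-moment-bound}. One small imprecision: if $\mu=\pE u\neq 0$ then $\lVert\xi\rVert_2^2=\lVert\mu\rVert_2^2+2\langle\mu,g\rangle+\lVert g\rVert_2^2$ with $g\sim N(0,\Sigma)$, so the mean contributes a linear Gaussian term, not merely a constant shift; nonetheless writing $a=\lVert\mu\rVert_2^2$, $b=\mathrm{tr}(\Sigma)$ with $a+b=1$, one gets $\E\lVert\xi\rVert_2^4\le a^2+6ab+3b^2\le 3$, so your $t=3$ bound and the rest of the argument go through.
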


\begin{proof}

Let $\{\xi\}$ be the Gaussian distribution with the same first two moments as $\{u\}$ (so that $\E Q(v) = \pE Q(u)$ for every degree-$2$ polynomial $Q$).
(See \pref{lem:moments}.)
\Dnote{}
We choose $v=\xi/\lVert  \xi \rVert_2$.
Since the first two moments of $\{\xi\}$  and $\{u\}$ match, we have $\E (\lVert  \xi \rVert_2^2-\langle c, \xi\rangle^2) \le \e \E \lVert  \xi \rVert_2^2 $.
Since $\{\xi\}$ is a Gaussian distribution, it satisfies $\E\lVert  \xi \rVert_2^4\le O(\E \lVert  \xi \rVert_2^2)^2$.
\Dnote{}
By \pref{lem:second-moment-bound}, it follows that the event $\{\langle  c,\xi \rangle^2 \ge (1-2\e)\lVert  \xi \rVert_2^2\}$ has constant probability.
This event is equivalent to the event $\{ \langle  c, v \rangle^2\ge 1-2\e\}$.
\qedhere

\end{proof}

\section{Noisy tensor decomposition} \label{sec:tensor-decomp-proof}

In this section we will prove \pref{thm:tensor-decomp} (noisy tensor decomposition, quasi-polynomial time).

\restatetheorem{thm:tensor-decomp}

\noindent
The proof combines the following lemma with \pref{thm:sample} (sampling pseudo-distributions).
The lemma formalizes the following fact in terms of low-degree pseudo-distributions:
the polynomial $\lVert  A^\top u \rVert_d^d\in\R[u]$ assumes large values over the sphere only at points close to one of the columns of $A$.
Note that the conclusion of the lemma allows us to reconstruct a column of $A$ in time $n^{O(k)}$ using \pref{thm:sample} (sampling pseudo-distributions).

\begin{lemma} \label{lem:tensor-decomp-main}
  Let $A$ be a $\sigma$-overcomplete dictionary and let $\{u\}$ be a degree-$3k$ pseudo-distribution over $\R^n$ that satisfies the polynomial constraints $\{\lVert  A^\top u \rVert_d^d\ge e^{-\delta d}, \lVert  u \rVert_2^2=1\}$.
  Then, there exists a column~$c$ of $A$ such that $\pE \langle c,u \rangle^k\ge e^{-\e k}$ for $\e = O(\delta + \tfrac{\log \sigma}{d} + \tfrac{\log m}{k})$.
\end{lemma}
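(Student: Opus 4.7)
The plan is to give an SOS proof of the elementary inequality $\|v\|_d^d \leq \|v\|_\infty^{d-2}\|v\|_2^2$ with $L_\infty$ replaced by $L_k$, apply it to $v=A^\top u$, take pseudo-expectations using the two polynomial constraints, and finally average over the columns of $A$.

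Let $s = k/(d-2)$, which by slightly adjusting $k$ we may assume to be an integer (the resulting loss is absorbed into the final $O(\cdot)$). First I would establish the SOS inequality
\[
(\|v\|_d^d)^s \;\preceq\; \Bigl(\textstyle\sum_i v_i^k\Bigr)\,(\|v\|_2^2)^s \qquad \text{in } \R[v_1,\dots,v_m],
\]
exactly as in the overview: expanding both sides as in \pref{eq:norm-inequality-expanded} reduces the task to $v^{d\alpha}\preceq v^{2\alpha}\sum_i v_i^{(d-2)s}$ for every multi-index $\alpha$ with $\lvert\alpha\rvert=s$, and this in turn follows from $v^{(d-2)\alpha}\preceq \sum_i v_i^{(d-2)s}$, which is \pref{lem:monomial-ineq} applied to the SOS polynomials $w_i=v_i^{d-2}$ (here I use that $d$ is even). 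Substituting $v=A^\top u$ and using that $\sigma\|u\|_2^2 - \|A^\top u\|_2^2$ is SOS (so that the telescoping identity $\sigma^s\|u\|_2^{2s} - (\|A^\top u\|_2^2)^s = (\sigma\|u\|_2^2 - \|A^\top u\|_2^2)\sum_{j=0}^{s-1}(\sigma\|u\|_2^2)^j(\|A^\top u\|_2^2)^{s-1-j}$ exhibits $\sigma^s\|u\|_2^{2s} - (\|A^\top u\|_2^2)^s$ as a product of two SOS polynomials), I obtain the master inequality
\[
(\|A^\top u\|_d^d)^s \;\preceq\; \sigma^s\,\|A^\top u\|_k^k\,\|u\|_2^{2s}.
\]
All SOS certificates used here have total degree at most $\max(ds,\,k+2s) = dk/(d-2) \leq 2k$ for $d\geq 4$, well inside the degree-$3k$ budget of $\{u\}$.

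Now I would take pseudo-expectations. The constraint $\|u\|_2^2=1$ lets me strip the factor $\|u\|_2^{2s}$ iteratively, giving
\[
\pE(\|A^\top u\|_d^d)^s \;\leq\; \sigma^s\,\sum_{i=1}^m \pE\langle a^i,u\rangle^k.
\]
For the lower bound: the constraint $\|A^\top u\|_d^d\geq e^{-\delta d}$ means $\|A^\top u\|_d^d - e^{-\delta d}$ equals an SOS polynomial, and since $\|A^\top u\|_d^d$ is itself SOS ($d$ even), the telescoping identity $P^s - c^s = (P-c)\sum_{j=0}^{s-1}P^{s-1-j}c^j$ exhibits $(\|A^\top u\|_d^d)^s - e^{-\delta ds}$ as a product of SOS polynomials, whence $\pE(\|A^\top u\|_d^d)^s \geq e^{-\delta ds}$.

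Combining the two estimates and applying pigeonhole to the $m$ columns, some column $a^i$ must satisfy
\[
\pE\langle a^i,u\rangle^k \;\geq\; \tfrac{1}{m}\,\sigma^{-s}\,e^{-\delta d s} \;=\; \exp\!\Bigl(-\tfrac{\delta d}{d-2}\,k - \tfrac{\log\sigma}{d-2}\,k - \log m\Bigr),
\]
which is $e^{-\varepsilon k}$ for $\varepsilon = O(\delta + \log\sigma/d + \log m/k)$ when $d\geq 4$. The main subtlety I anticipate is degree bookkeeping: we need $k/(d-2)$ to be an integer and every SOS manipulation to fit inside degree $3k$ (which is precisely why the hypothesis requests degree $3k$ rather than just $k$); both are routine to check. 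The genuinely new content is the SOS version of the $L_d$-vs-$L_k$ norm inequality in the first step, everything else being a packaging of elementary identities.
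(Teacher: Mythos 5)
Your proof is correct and follows essentially the same route as the paper's: both hinge on the SOS version of the interpolation inequality $(\|v\|_d^d)^{k/(d-2)} \preceq \|v\|_k^k(\|v\|_2^2)^{k/(d-2)}$ (which the paper also proves via the expansion and \pref{lem:monomial-ineq}), substitute $v = A^\top u$ using $\|A^\top u\|_2^2 \preceq \sigma\|u\|_2^2$ and the constraint $\|u\|_2^2 = 1$, then average over the $m$ columns. The only difference is presentational: the paper phrases the intermediate step as the pseudo-distribution satisfying the constraint $\{\|A^\top u\|_k^k \geq e^{-\delta' k}\}$ and relegates the SOS manipulations to the overview, whereas you carry out the degree bookkeeping and the telescoping factorizations $a^s - b^s = (a-b)\sum_j a^j b^{s-1-j}$ explicitly — a harmless and arguably clearer elaboration of the same argument.
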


\noindent

\begin{proof}
First, we claim that the pseudo-distribution $\{ u\}$ also satisfies the constraint $\{\lVert  A^\top u \rVert_k^k \ge e^{-\delta' k}\}$ where
$\delta' = \tfrac{d}{d-2}\delta + \frac {\log \sigma} {d-2} $.
The proof of this claim follows by a sum-of-squares version of the following form of \Hoelder's inequality,
\begin{displaymath}
  \lVert  v \rVert_d \le \lVert  v \rVert_k^{1-2/d} \cdot  \lVert  v \rVert_2^{2/d}\mper
\end{displaymath}
(This inequality holds for all norms $\lVert \cdot  \rVert_k$ with $k\ge 1$, including $\lVert  \cdot \rVert_\infty$.)
In particular, if $k$ is an integer multiple of $d-2$, the following relation of degree $k+2k/(d-2)$ holds among polynomials in $\R[v]$,
\begin{displaymath}
  (\lVert  v \rVert_d^d)^{k/(d-2)} \preceq ( \lVert  v \rVert_2^{2})^{k/(d-2)}\cdot \lVert  v \rVert_k^k\mper
\end{displaymath}
See the overview section for a proof of this fact.
\Dnote{}
By substituting $v=A^\top u$ and using the facts that  $\lVert  A^\top u \rVert_2^2 \sleq \sigma \norm{u}^2$ and that $\{u\}$ satisfies the constraint $\{\lVert  u \rVert^2=1\}$, we get that $\{u\}$ satisfies $\{\lVert  A^\top u \rVert_k^k \ge (\lVert  A^\top u \rVert_d^d)^{k/(d-2)}/\sigma^{k/(d-2)}\}$, which implies the claim because $\{\lVert  A^\top u \rVert_d^d\ge e^{-\delta d}\}$.

By an averaging argument, there exists some column $c$ of $A$ that satisfies $\pE \langle  c,u \rangle^k\ge \pE \lVert  A^\top u \rVert_k^k/m \ge e^{-\delta' k}/m=e^{-\e k}$ for $\e=\delta ' + \frac{\log m}{k}$ as desired.
\end{proof}

\paragraph{Proof of \pref{thm:tensor-decomp} from \pref{lem:tensor-decomp-main} and~\pref{thm:sample}}

Our tensor decomposition algorithms constructs a set of unit vectors $S\subseteq \R^m$ in an iterative way.
We will determine the choice of the parameters $k\ge 1$ and $\gamma>0$ later.

\medskip

\begin{compactitem}
\item Start with $S=\emptyset$.
\item While there exists a degree-$k$ pseudo-distribution $\{u\}$ that satisfies the constraints  $\{ P(u) \geq 1-\tau,  \norm{u}_2^2=1 \}$ and $\{ \iprod{s,u}^2 \leq 1-\gamma \}$ for every $s\in S$:
  \begin{compactitem}
  \item Use the algorithm in \pref{thm:sample} (sampling pseudo-distributions) to obtain in time $n^{k/\poly(\e)}$ a unit vector $c'\in \R^m$ that satisfies $P(c')\ge e^{-\e d} -\tau$ for $\e = O(\frac \tau d + \frac{\log \sigma}{d} \frac {\log m}{k} )$ (by \pref{lem:tensor-decomp-main}) and $\langle c',s \rangle^2\le 1-\gamma/10$ for every vector $s\in S$.
  \item Add the vector $c'$ to the set $S$.
  \end{compactitem}
\end{compactitem}

\medskip

\noindent
Let us first explain why we can find a vector $c'$ that satisfies the above conditions if there exists such a pseudo-distribution $\{u\}$.
Recall that the input polynomial $P$ satisfies $\pm (P-\lVert  A^\top u \rVert_d^d) \preceq \tau \lVert  u \rVert_2^d$.
Therefore, the above pseudo-distributions satisfy $\{ \lVert  A^\top u \rVert_d^d \ge 1-2\tau = e^{-\delta d}\}$ for $\delta=O(\tau/d)$.
Hence, \pref{lem:tensor-decomp-main} implies that a column $c$ of $A$ satisfies $\pE \langle  c,u \rangle^k\ge e^{-\e' k}$ for $\e'=O((\frac \tau d + \frac{\log \sigma}{d} \frac {\log m}{k})$.
Thus, the algorithm of \pref{thm:sample} will output a unit vector $c'$ with $\langle  c,c' \rangle^k \ge e^{-O(\e') k}=e^{-\e k}$ with probability at least $2^{-k/\poly(\e)}$.
Note that the condition $\langle  c,c' \rangle^k\ge e^{-\e k}$ implies that $P(c')\ge e^{-\e d }-\tau$.
By repeating the algorithm $2^{k/\poly(\e)}$ times we can ensure that with high probability one of the vectors found in this way satisfies the desired condition.
We claim that the condition $\langle  c,c' \rangle^2 \ge 1-O(\e)$ implies that $\langle  c',s \rangle\le 1-\gamma/10$ for all $s\in S$ (assuming a suitable choice of $\gamma$).
Let $\gamma'=\frac12\lVert  (c')^{\otimes 2} - s^{\otimes 2} \rVert^2$.
We are to show $\gamma'\ge \gamma/10$.
By the triangle inequality, $\lVert c^{\otimes 2}-s^{\otimes 2} \rVert^2\le O(\e) + 4\gamma'$.
Together with an SOS version of the triangle inequalitity,
\begin{math}
  \lVert s^{\otimes 2}-u^{\otimes 2} \rVert^2
  \preceq 8\gamma' + O(\e) + 2 \lVert c^{\otimes 2} -u^{\otimes 2} \rVert^2\mper
\end{math}
Since $\{u\}$ satisfies $\{\langle  s,u \rangle^2 \le 1-\gamma\}$ it follows that $\{2\gamma \le 8\gamma' + O(\e) + 2 \lVert c^{\otimes 2} -u^{\otimes 2} \rVert^2\}$, which implies the constraint $\{\langle  c,u \rangle^2 \le 1-\gamma/2 + 2\gamma' + O(\e)\}$ (using the constraint $\{\lVert  u \rVert^2=1\}$).
However, since $c$ satisfies $\pE \langle  c,u \rangle^k\ge e^{-\e k}$, we have $\gamma/2 -2\gamma' - O(\e) \le  O(\e)$, which means that $\gamma'\ge \gamma/4 -O(\e)\ge \gamma/10$ as desired.
(Here, we are assuming that $\gamma$ was chosen so that $\gamma/\e$ is a large enough constant.)

Next we claim that every vector in $s\in S$ is close to one of the columns of $A$.
Indeed, every such vector satisfies $\lVert  A^\top s \rVert_d^d \ge e^{-\e d }-2\tau $, which by \pref{lem:tensor-decomp-main} implies that $\langle  s,c \rangle^2\ge 1-O(\e + \tau/d + (\log \sigma) /d ) = 1-O(\e)$ for a column $c$ of $A$.

Next we claim that if the algorithm terminates then for every column $c$ of $A$ there exists a vector $s\in S$ with $\langle  c,s \rangle^2 \ge 1-\gamma$.
Indeed, if there exists a column that violates this condition, then it would satisfy all constraints for the pseudo-distribution, which means that the algorithm does not terminate at this point.

To finish the proof of the theorem it remains to bound the number of iterations of the algorithm.
We claim that the number of iterations is bounded by the number $m$ of columns of $A$ because in each iteration the vectors in $S$ will cover at least one more of the columns of $A$.
As observed before, every vector $s\in S$ is close to a column~$c_s$ of $A$ in the sense that $\lVert  s^{\otimes 2}-c_s^{\otimes 2} \rVert^2 = O(\e)$.
However, since $c'$ satisfies $\langle  c',s \rangle^2 \le 1-\gamma/10$, we have by triangle inequality $\gamma/5 \le \lVert  (c')^{\otimes 2} -s^{\otimes 2} \rVert^2 \le 2 \lVert  (c')^{\otimes 2} - c_s^{\otimes 2} \rVert^2  + 2 \lVert  s^{\otimes 2} - c_s^{\otimes 2} \rVert^2$, which means that $\lVert  (c')^{\otimes 2}-c_s^{\otimes 2} \rVert^2\ge \gamma/10 - O(\e)$.
Therefore, the vector $c'$ is not close to any of the vectors $c_s$ for $s\in S$, which means that it has to be close to another column of $A$.
(Here, we are again assuming that $\gamma$ was chosen so that $\gamma/\e$ is a large enough constant.)
\qed

\begin{remark}[Handling columns with varying norms] \label{rem:different-norms}
Many of our techniques also apply to dictionaries with columns of different $\ell_2$ norms.
In particular, using the same algorithm, we can reconstruct  in this case a single vector close to one of the columns.
More generally, we can reconstruct a set of vectors that is close to the set of columns with maximum norm.

By adapting the algorithm somewhat we can also achieve recovery guarantees for columns with significantly smaller norm than the maximum norm.
Concretely, we can modify the algorithm so that we ask for pseudo-distributions satisfying $P(u) \geq \rho$, where $\rho$ is a parameter that we gradually decrease so we can get all the vectors.
However, we need to also change the right-hand side of the constraint $\iprod{u,s}^2 \leq 1-\gamma $ to a value that decreases with $\rho$.
Otherwise, the algorithm might not terminate, as there can be exponentially vectors that are somewhat far from a column vector $c$, and all of them will have fairly large value for $P(\cdot)$.
Such a modified algorithm can still obtain all the column vectors (up to a small error) if we assume that the they are sufficiently \emph{incoherent}.
That is, $\iprod{a,a'} \leq \mu$ for every distinct columns $a$, $a'$ of $A$ with $\mu$ depending on the norm ratios.
Similar (and in fact often stronger) assumptions were made in prior works on dictionary learning.
(However, we need these assumptions only when the vectors have different norms.)
\end{remark}

\section{Polynomial-time algorithms}
\label{sec:polytime}

In this section we show how we can improve our tensor decomposition algorithm when we have access to examples of very sparse linear combinations of the dictionary columns,
culminating in Theorem~\ref{thm:dict-learn-poly} that gives a polynomial-time algorithm for the dictionary problem for the case the distribution is $(d,\tau)$-nice for $\tau = n^{-\Omega(1)}$.

\subsection{Sampling pseudo-distributions}

The following theorem refines \pref{thm:sample} (sampling pseudo-distributions) reconstructing a vector $c'$ that is close to a target vector $c$.
We make an additional assumption about having access to samples from a distribution $\{W\}$ over sum-of-squares polynomials.
This distribution comes with a noise parameter $\tau$ that controls how well the distribution correlated with the target vector $c$.
If this noise parameter is sufficiently small, samples from distribution allow the algorithm to work under a more refined but milder condition on the pseudo-distribution $\{u\}$.
For our dictionary learning algorithm, we can satisfy this condition when the noise parameter $\tau$ of the distribution $\{W\}$ satisfies $\tau\ll m^{1/k}$.
(The noise parameter $\tau$ roughly coincides with the niceness parameter of the distribution $\{x\}$.)

\Dnote{}

\begin{theorem}[refined sampling from pseudo-distributions]
\label{thm:refined-reweighing}
  For every $k\ge 1$, there exists a $n^{O(k)}$-time algorithm with the following guarantees:
  Suppose the input of the algorithm is a pseudo-distribution  $\{u\}$ over $\R^n$ and a sum-of-squares polynomial $W\in\R[u]$ satisfying the following properties for some unit vector $c\in \R^n$:
  \begin{itemize}
  \item The sum-of-squares polynomial $W$ is chosen from a distribution $\{W\}$ with with mean $\bar W = \E_W W$  and second moment $\E_W W(u)W(u')\preceq M\cdot \bar W(u) \cdot \bar W(u')$ satisfying
    \begin{equation}
      \label{eq:nice-reweighing}
      \langle  c,u \rangle^{2(1+k)} \preceq \bar W \preceq \cramped{(\langle  c,u \rangle^2 + \tau \lVert  u \rVert_2^2)^{1+k}}\mper
    \end{equation}
  \item The pseudo-distribution $\{u\}$ has degree $2(1+2k)$ and satisfies the polynomial constraint $\lVert  u \rVert_2^2=1$ and the conditions
    \begin{equation}
      \label{eq:refined-condition}
      \pE \langle  c,u \rangle^{2(1+2k)}\ge e^{-\e k}\pE \langle  c,u \rangle^{2}\text{ and }\pE \langle  c,u \rangle^{2}\ge \tau^{k}.
    \end{equation}
  \end{itemize}
  Then, the output of the algorithm is a unit vector $c'\in\R^n$ such that with probability at least $\tau^2/M 2^{O(k)/\poly(\e)}$,
  \begin{displaymath}
    \langle  c,c' \rangle^2\ge e^{-O(\e+3(1/k+1)\tau)}\mper
  \end{displaymath}
\end{theorem}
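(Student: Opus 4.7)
I will extend the strategy of \pref{lem:sample:reweigh} from the specific random polynomial $W$ constructed there (a product of squared Gaussian linear forms) to the abstract distribution $\{W\}$ satisfying the stated sandwich and second-moment hypotheses. The algorithm takes the input $W$, reweighs the pseudo-distribution $\{u\}$ by $W$ via \pref{lem:reweigh} to obtain a degree-$2$ pseudo-distribution $\{u''\}$ with $\pE_{u''} Q = \pE W\cdot Q/\pE W$ for quadratic $Q$, and outputs $c'=\xi/\Norm{\xi}$, where $\xi$ is sampled from the Gaussian distribution matching the first two moments of $\{u''\}$ via \pref{lem:moments}. The analysis reduces to a first-moment bound on $\pE W\cdot \langle  c,u \rangle^{2}$ versus $\pE W$ together with a second-moment concentration argument to handle the randomness in $W$.

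\textbf{First-moment bound.} Set $\alpha=\langle  c,u \rangle$ and $m_j=\pE \alpha^{2j}$.  Using the lower sandwich $\bar W\sgeq \alpha^{2(1+k)}$ gives $\pE\bar W\ge m_{1+k}$ and $\pE\bar W\cdot \alpha^2\ge m_{2+k}$, while the upper sandwich $\bar W\sleq (\alpha^2+\tau\Norm{u}^2)^{1+k}$ combined with $\Norm{u}^2=1$ yields
\[
  \pE\bar W\cdot (1-\alpha^2)\;\le\; \sum_{j=0}^{1+k}\binom{1+k}{j}\tau^{1+k-j}(m_j-m_{j+1})\mper
\]
The SOS Cauchy--Schwarz inequality gives $m_j^2\le m_{j-1}m_{j+1}$, so the successive ratios $s_j=m_{j+1}/m_j$ are nondecreasing in $j$. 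Combined with the hypothesis $\prod_{j=1}^{2k}s_j=m_{1+2k}/m_1\ge e^{-\e k}$ and each $s_j\le 1$, this forces $s_j\ge e^{-\e k/j}$ for $j\ge 1$; in particular $s_{1+k}\ge e^{-\e}$ controls the dominant $j=1+k$ term and $s_k\ge e^{-\e}$ controls the $j=k$ term, which carries the multiplier $(1+k)\tau$. The $j=0$ contribution is $\tau^{1+k}(1-m_1)$, which, divided by $\pE\bar W\ge m_{1+k}\ge e^{-\e k}m_1\ge e^{-\e k}\tau^k$, is bounded by $O(\tau)$ thanks to the premise $m_1\ge \tau^k$. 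Careful bookkeeping of all $1+k$ terms gives the first-moment bound $\pE\bar W\cdot \alpha^2\ge (1-\delta)\pE\bar W$ with $\delta=O(\e+3(1/k+1)\tau)$.

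\textbf{Second moment, Paley--Zygmund, rounding.} The hypothesis $\E_W W(u)W(u')\sleq M\,\bar W(u)\bar W(u')$, upon applying the tensor pseudo-expectation $\pE_u\pE_{u'}$ with two independent copies of $\{u\}$, yields $\E_W(\pE W)^2\le M\,(\pE\bar W)^2$. Applied to the non-negative random variables $X=\pE W$ and $Y=\pE W\cdot \alpha^2\le X$, with $\E Y\ge (1-\delta)\E X$, the one-sided second-moment argument of \pref{lem:second-moment-bound} shows that with probability at least $\tau^2/(M\cdot 2^{O(k/\poly(\e))})$ over the choice of $W$, we have $\pE W\cdot \alpha^2\ge (1-2\delta)\pE W$. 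For such a good $W$, the reweighed pseudo-distribution $\{u''\}$ from \pref{lem:reweigh} satisfies $\pE_{u''}\alpha^2\ge 1-2\delta$, and the Gaussian-rounding step (the final lemma of \pref{sec:sample}) produces $c'=\xi/\Norm{\xi}$ with $\langle  c,c' \rangle^2\ge 1-O(\delta)\ge e^{-O(\delta)}$ with constant probability, combining to the claimed overall success probability.

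\textbf{Main obstacle.} The hard part is the first-moment bound. The hypothesis controls only two moments of $\alpha$, namely $m_1$ and $m_{1+2k}$, and one must propagate this into uniform control on every intermediate $m_j$ via the monotonicity of the ratios $s_j$ in order to bound each of the $1+k$ multinomial terms. The $j=0$ term $\tau^{1+k}(1-m_1)/\pE\bar W$ is especially delicate: after dividing by $\pE\bar W\ge m_{1+k}\ge e^{-\e k}\tau^k$, only the premise $m_1\ge \tau^k$ keeps it on the order of $\tau$, and the $j=k$ term similarly requires the ratio bound $s_k\ge e^{-\e}$ rather than the weaker global bound $m_{1+2k}/m_1\ge e^{-\e k}$.
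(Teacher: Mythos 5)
The proof has a genuine gap in the first-moment bound, and the algorithmic plan itself misses a necessary second stage of reweighing.

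Concretely, you assert that after a single reweighing by $W$, the reweighed degree-$2$ pseudo-distribution $\{u''\}$ satisfies $\pE_{u''}\langle c,u''\rangle^2\ge 1-2\delta$ with $\delta=O(\e+3(1/k+1)\tau)$, and then you round directly with \pref{lem:moments}. That bound does not follow from the hypotheses. Tracing your own bookkeeping: the $j=0$ term contributes $\tau^{1+k}(1-m_1)$, and you divide by $\pE\bar W\ge m_{1+k}\ge e^{-\e k}m_1\ge e^{-\e k}\tau^k$ to claim a contribution of ``$O(\tau)$.'' But $\tau^{1+k}/(e^{-\e k}\tau^k)=e^{\e k}\tau$, which is not $O(\tau)$ once $\e k$ is large, and this is exactly the regime the theorem is meant for (think $\e=0.1$, $k=200$, $\tau \approx 1/(2k)$, so $e^{\e k}\tau$ is a large constant while $\e+3(1/k+1)\tau\ll 1$). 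More structurally, the hypotheses only control $\pE\langle c,u\rangle^{2(1+2k)}/\pE\langle c,u\rangle^2\ge e^{-\e k}$, which after reweighing by $W$ yields a lower bound on the \emph{$2k$-th} moment of $\langle c,u'\rangle^2$ of the form $e^{-\e'k}$, not a second moment that is $1-O(\e')$. A pseudo-distribution whose $2k$-th $c$-moment after reweighing is $\approx e^{-\e'k}$ can easily have $\pE_{u'}\langle c,u'\rangle^2$ bounded away from $1$ (for actual distributions, it suffices that only an $e^{-\e'k}$-fraction of the mass is close to $c$); your claim that the new second moment is already near $1$ would imply that one reweighing by $W$ concentrates the entire mass near $c$, which is false.

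The paper avoids this by design: \pref{lem:refined-reweigh} only proves the \emph{exponentially weak} conclusion $\pE W\langle c,u\rangle^{2k}\ge e^{-\e' k}\pE W$ (comparing both sides against $m_1$, using $\pE\bar W\le (1+O((1+k)\tau))m_1+\tau^{k+1}\le e^{O((1+k)\tau)}m_1$ and $\pE\bar W\langle c,u\rangle^{2k}\ge m_{1+2k}\ge e^{-\e k}m_1$, so the $m_1$ cancels and no ratio $m_j/m_{1+k}$ ever appears), and then feeds the reweighed pseudo-distribution into \pref{thm:sample}, which performs a second, independent round of reweighing by random Gaussian products before the final quadratic rounding. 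Your plan skips that second stage; with only one reweighing and immediate Gaussian rounding you cannot get there.

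To salvage the proof you would need to replace your target first-moment inequality $\pE \bar W\langle c,u\rangle^2\ge(1-\delta)\pE\bar W$ with the weaker $\pE\bar W\langle c,u\rangle^{2k}\ge e^{-\e'k}\pE\bar W$ (this is what the hypotheses actually support), apply \pref{lem:second-moment-bound} to this quantity, and then invoke \pref{thm:sample} on the reweighed degree-$2k$ pseudo-distribution rather than rounding directly. The second-moment bound $\E_W(\pE W)^2\le M(\pE\bar W)^2$ from your write-up is fine and is exactly what the paper uses for the concentration step.
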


The following lemma is the main new ingredient of the proof of this theorem.

\begin{lemma}
\label{lem:refined-reweigh}
  Let $\{u\}$ be a degree-$2(1+2k)$ pseudodistribution that satisfies the constraint $\lVert  u \rVert_2^2=1$.
  Let $\{W\}$ be a distribution over sum-of-squares polynomials.
  Suppose $\{u\}$ and $\{W\}$ satisfy the conditions in \pref{thm:refined-reweighing}.
  Then, $\pE_u W \cdot \langle c,u\rangle^{2k} \ge  e^{-\e' k}\pE_u W$ with probability $\tau^2/M2^{O(k)}$ over the choice of $W$ for $\e'=\e+3(1/k+1)\tau$.
\end{lemma}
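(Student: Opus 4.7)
The plan is to mirror the argument of \pref{lem:sample:reweigh} but to carry out the expectation calculation using only the abstract hypotheses on $\bar W$ and on the moments of $\langle c,u\rangle^2$, rather than relying on a concrete Gaussian-conditioning construction. The three main stages will be: (i) bounding the ratio $\E_W \pE_u W\langle c,u\rangle^{2k}/\E_W \pE_u W$ below in expectation, (ii) turning this into a positive-probability event via \pref{lem:second-moment-bound} using the second-moment bound on $\{W\}$, and (iii) choosing the free parameter so that the resulting event implies the stated conclusion with the claimed value of $\e'$.

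For stage (i), the lower bound $\bar W\succeq \langle c,u\rangle^{2(1+k)}$ immediately yields $\E_W \pE_u W\langle c,u\rangle^{2k}\ge \pE\langle c,u\rangle^{2(1+2k)}\ge e^{-\e k}\pE\langle c,u\rangle^2$, so all that remains is an upper bound $\pE\bar W\le e^{O(k\tau)}\pE\langle c,u\rangle^2$. I will get this by binomially expanding the upper bound $\bar W\preceq (\langle c,u\rangle^2+\tau\lVert u\rVert_2^2)^{1+k}$, using the SoS Cauchy--Schwarz identity $\lVert u\rVert_2^2-\langle c,u\rangle^2=\sum_{i<j}(c_iu_j-c_ju_i)^2\succeq 0$ to conclude $\langle c,u\rangle^{2j}\preceq \langle c,u\rangle^{2(j-1)}$ modulo $\{\lVert u\rVert_2^2=1\}$ (so that each $j\ge 1$ term in the expansion is at most $\binom{1+k}{j}\tau^{1+k-j}\pE\langle c,u\rangle^2$), and absorbing the sole exceptional $j=0$ term $\tau^{1+k}$ into $\tau\pE\langle c,u\rangle^2$ using the floor hypothesis $\pE\langle c,u\rangle^2\ge \tau^k$. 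This gives $\pE\bar W\le((1+\tau)^{1+k}+\tau)\pE\langle c,u\rangle^2\le e^{(k+2)\tau}\pE\langle c,u\rangle^2$, and hence $\E_W \pE_u W\langle c,u\rangle^{2k}\ge e^{-\eta}\E_W \pE_u W$ with $\eta=\e k+(k+2)\tau$.

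For stage (ii), I will set $B=\pE_u W$ and $A=B-\pE_u W\langle c,u\rangle^{2k}$ and apply \pref{lem:second-moment-bound}. The hypothesis $\E_W W(u)W(u')\preceq M\bar W(u)\bar W(u')$ directly yields $\E_W B^2\le M(\E_W B)^2$. The only nontrivial verification is $A\ge 0$, which will require showing that $1-\langle c,u\rangle^{2k}$ is SoS modulo $\{\lVert u\rVert_2^2=1\}$; for this I will use the telescope
\[
\lVert u\rVert_2^{2k}-\langle c,u\rangle^{2k}=\bigl(\lVert u\rVert_2^2-\langle c,u\rangle^2\bigr)\sum_{j=0}^{k-1}\lVert u\rVert_2^{2j}\langle c,u\rangle^{2(k-1-j)}
\]
and observe that both factors are sums of squares (the first by Cauchy--Schwarz, the second because each summand is a product of even powers of linear forms), and that the product of two sums of squares is again a sum of squares. \pref{lem:second-moment-bound} with $\e_{\mathrm{Lem}}=1-e^{-\eta}$ and $t=M$ then gives $\Pr_W\{A\le e^{\delta}(1-e^{-\eta})B\}\ge \delta^2/9M$ for every $\delta\in[0,1]$.

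For stage (iii), the desired event $\pE_u W\langle c,u\rangle^{2k}\ge e^{-\e'k}\pE_u W$ is the same as $A\le(1-e^{-\e'k})B$. Writing $\e'k=\eta+(2k+1)\tau$ and rearranging, the required implication reduces to $(e^{\delta}-1)(1-e^{-\eta})\le e^{-\eta}(1-e^{-(2k+1)\tau})$. Choosing $\delta=\tau e^{-\eta}\in[0,1]$ and using the elementary estimates $e^{\delta}-1\le 2\delta$ together with $1-e^{-(2k+1)\tau}\ge (2k+1)\tau/2$ (valid for $(2k+1)\tau\le 1$) verifies this whenever $k\ge 2$; the probability then satisfies $\delta^2/9M=\tau^2e^{-2\eta}/9M\ge \tau^2/(M\cdot 2^{O(k)})$, since $\eta=O(k)$ under the standing assumptions on $\e,\tau$. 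The main obstacle I anticipate is the SoS bookkeeping---verifying the monotonicity $\langle c,u\rangle^{2j}\preceq \langle c,u\rangle^{2(j-1)}$ and the SoS-ness of $1-\langle c,u\rangle^{2k}$ modulo $\{\lVert u\rVert_2^2=1\}$ as honest low-degree SoS identities---so that every manipulation remains valid against the pseudo-distribution $\{u\}$, which has only degree $2(1+2k)$.
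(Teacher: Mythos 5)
Your proposal is correct and follows essentially the same route as the paper's proof: both lower-bound $\pE\bar W\langle c,u\rangle^{2k}$ via $\bar W\succeq\langle c,u\rangle^{2(1+k)}$, upper-bound $\pE\bar W$ via a binomial expansion of $(\langle c,u\rangle^2+\tau\lVert u\rVert_2^2)^{1+k}$ together with the floor condition $\pE\langle c,u\rangle^2\ge\tau^k$, and then invoke \pref{lem:second-moment-bound} with the second-moment hypothesis $\E_W W(u)W(u')\preceq M\bar W(u)\bar W(u')$. The only differences are cosmetic: you bound the binomial sum by $(1+\tau)^{1+k}$ rather than the paper's cruder geometric series $\sum_i(1+k)^i\tau^i$, and you explicitly verify the $0\le A\le B$ hypothesis of \pref{lem:second-moment-bound} by exhibiting an SoS certificate for $1-\langle c,u\rangle^{2k}$ modulo $\{\lVert u\rVert_2^2=1\}$ (a careful step the paper leaves implicit), whereas the paper just states the final $\delta=\tau 2^{-O(k)}$ choice without unwinding the algebra you spell out in stage (iii). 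Both yield $\e'=\e+3(1/k+1)\tau$ and the probability bound $\Omega(\tau^2/M2^{O(k)})$.
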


Note that the conclusion of the lemma implies that we can recover a vector $c'$ with $\langle  c',c \rangle^2\ge 1-O(\e')$ using \pref{thm:sample} in time $n^{k /poly(\e')}$ with probability $2^{O(k)/\poly(\e')}$.
Therefore, \pref{thm:refined-reweighing} follows by combining \pref{lem:refined-reweigh} with \pref{thm:sample}.

\begin{proof}[Proof of \pref{lem:refined-reweigh}]
We will show that the polynomials $\bar W\langle c,u\rangle ^k$ and $\bar W$ have similar pseudo-expectations by comparing them to the polynomials $\langle c,u\rangle ^2$.
We will show that $\pE \bar W \langle  c,u \rangle^k$
For brevity, choose polynomials $\alpha=\langle c,u\rangle^2\in\R[u]$ and $\beta=\lVert  u \rVert^2\in\R[u]$ so that $0\preceq \alpha\preceq \beta$.
Then,
\begin{multline}
  \alpha^{1+k}
  \preceq \bar W
  \preceq (\alpha+\tau \beta)^{1+k}
  =\alpha\sum_{i=0}^{k} \binom{1+k}{i} \alpha^{k-i} (\tau \beta)^{i} + (\tau\beta)^{k+1}
  \\
  \preceq \alpha \beta^k \sum_{i=0}^k (1+k)^i\tau^{i} + \tau^{k+1} \beta^{k+1}
  \preceq \bigl(1+2(t+k)\tau\bigr)\alpha \beta^k + \tau^{k+1}\beta^{k+1}\mper
\end{multline}
Here, the last step uses the assumption $(1+k)\tau\le 1/2$ to bound the series $\sum_{i=1}^k (1+k)^i\tau^i\le 2 (t+k)\tau$.
It follows that
\begin{displaymath}
  \pE \bar W \langle c,u\rangle^k\ge \pE \alpha^{1+2k}\ge e^{-\e k}\pE \alpha\mper
\end{displaymath}
(Here, we used \pref{eq:refined-condition})
At the same time,
\begin{displaymath}
  \pE \bar W
  \le \bigl(1+2(1+k)\tau\bigr)\pE \alpha + \tau^{k+1}
  \le \bigl(1+2(1+k+1)\tau\bigr)\pE \alpha
  \le e^{2(2+k)\tau} \pE \alpha\mper
\end{displaymath}
Here, the second step uses the assumption $\tau^{k+1}\le \tau\pE \langle c,u \rangle^{2}$.
Together, the two bounds imply $\pE \bar W \langle c,u \rangle^{2k}\ge \cramped{e^{- \e k -2(2+k)\tau}}\pE \bar W$.

In order to lower bound the probability of the event $\{ \pE W \langle c,u  \rangle^{2k}\ge e^{-\e'k}\pE W\}$, we will upper bound the second moment $\E (\pE W)^2$ and apply \pref{lem:second-moment-bound}.
By the premise $\E W(u) W(u') \preceq M\cdot \bar W(u)\bar W(u')$, we get
\begin{displaymath}
  \E \bigl(\pE W\bigr)^2 = \pE_{\{u\}\{u'\}} \E W(u) W(u') \preceq \pE _{\{u\}\{u'\}} M\cdot \bar W(u)\bar W(u')=M\cdot \bigl(\pE \bar W\bigr)^2\mper
\end{displaymath}
By \pref{lem:second-moment-bound}, the probabilility of the event $\{ \pE W \langle  c,u \rangle^{2k} \ge (e^{-\e k - 2(2+k)\tau }-\delta)\pE W \}$ is at least $\Omega(\delta^2/M)$.
We choose $\delta=\tau 2^{-O(k)}$ to lower bound the probability of the event $\{ \pE W \langle  c,u \rangle^{2k} \ge e^{-\e' k}\pE W \}$ for $\e'=\e+3(1/k+1)\tau$ by $\Omega(\tau^2/M 2^{O(k)})$.
\end{proof}

\subsection{Tensor decomposition}

The following lemma shows that a pseudo-distribution $\{u\}$ that satisfies the constraints $\{\lVert  A^\top u \rVert_{2(t+k)}^{2(t+k)} \approx 1,  \lVert u \rVert_2^2=1\}$ also satisfies the condition of \pref{thm:refined-reweighing} for one of the columns of the dictionary $A$.

\begin{lemma}
\label{lem:tensor-poly-time}
    Let $A\in\R^{n\times m}$ be a $\sigma$-overcomplete dictionary and let $\{u\}$ be a degree-$2(k+t)$ pseudo-distribution over $\R^n$ that satisfies $\lVert  u \rVert_2^2=1$.
  Suppose $\{u\}$ also satisfies the polynomial constraint $\lVert  A^\top u \rVert_{2(1+k)}^{2(1+k)}\ge e^{-2(k-1)\e}\sigma$ for $k\ge 1$.
  Then, there exists a column~$c$ of $A$ such that $\pE \langle c,u \rangle^{2k}\ge e^{-2 \e k}\pE \langle c,u \rangle^{2}$ and $\pE\langle c,u \rangle^{2}\ge \e e^{-2k\e}/m$.
\end{lemma}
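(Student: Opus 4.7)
The plan is to prove the lemma by contradiction, partitioning the columns into two ``bad'' sets and showing they cannot account for the mass guaranteed by the hypothesis. Writing $Y_i := \langle a^i,u \rangle^2$, the hypothesis becomes $\sum_{i=1}^m \pE Y_i^{1+k}\ge e^{-2(k-1)\e}\sigma$, and I want to find some $i$ for which both $\pE Y_i^k\ge e^{-2\e k}\pE Y_i$ and $\pE Y_i\ge \e e^{-2k\e}/m$ hold; then $c := a^i$ is the promised column.

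Two SoS facts modulo the constraint $\lVert u\rVert_2^2 = 1$ drive the argument. (i) Since $\lVert a^i\rVert=1$, completing $a^i$ to an orthonormal basis of $\R^n$ writes $\lVert u\rVert_2^2-\langle a^i,u \rangle^2$ as a sum of squares, so $1-Y_i\succeq 0$ modulo the constraint; multiplying by the square $Y_i^j$ gives $Y_i^{j+1}\preceq Y_i^j$ and hence pseudo-moment monotonicity $\pE Y_i^{j+1}\le \pE Y_i^j$. (ii) From $A^\top A\preceq \sigma I$ one has $\sigma\lVert u\rVert_2^2-\sum_i Y_i\succeq 0$ as an SoS, and therefore $\sum_i \pE Y_i\le \sigma$. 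The highest-degree polynomial appearing is $Y_i^k\lVert u\rVert_2^2-Y_i^{k+1}$ of degree $2(k+1)$, so the assumed pseudo-distribution degree $2(k+t)$ with $t\ge 1$ is enough.

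Now suppose for contradiction that no column works, so that $[m] = G_1\cup G_2$ with $G_1 := \{i : \pE Y_i^k < e^{-2\e k}\pE Y_i\}$ and $G_2 := \{i : \pE Y_i < \e e^{-2k\e}/m\}$. Using (i) with the definition of $G_1$, and then (ii),
\[
\sum_{i\in G_1}\pE Y_i^{k+1}\;\le\;\sum_{i\in G_1}\pE Y_i^k\;<\;e^{-2\e k}\sum_{i\in G_1}\pE Y_i\;\le\; e^{-2\e k}\sigma,
\]
while $\sum_{i\in G_2}\pE Y_i^{k+1}\le \sum_{i\in G_2}\pE Y_i<m\cdot \e e^{-2k\e}/m = \e e^{-2k\e}$. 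Adding yields $\sum_i \pE Y_i^{k+1}<e^{-2\e k}(\sigma+\e)$, which combined with the hypothesis $\sum_i \pE Y_i^{k+1}\ge e^{-2(k-1)\e}\sigma = e^{2\e}\cdot e^{-2\e k}\sigma$ forces $\sigma(e^{2\e}-1)\le\e$. But $\sigma\ge 1$ and $e^{2\e}\ge 1+2\e$ imply $\sigma(e^{2\e}-1)\ge 2\e>\e$, the desired contradiction. Any $i\notin G_1\cup G_2$ then produces the column $c=a^i$ satisfying both conclusions of the lemma.

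The only genuinely delicate point I foresee is the SoS bookkeeping: each moment-monotonicity step must be presented as an actual SoS identity modulo the ideal generated by $\lVert u\rVert_2^2-1$ so that it passes under the pseudo-expectation, and every polynomial touched must stay within the degree budget $2(k+t)$. Otherwise the argument is purely a counting/averaging contradiction and, I expect, should fit in a short self-contained paragraph.
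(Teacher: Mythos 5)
Your proof is correct and follows essentially the same contrapositive/averaging argument as the paper: partition the columns into two ``bad'' sets, bound each contribution to $\sum_i \pE\langle a^i,u\rangle^{2(1+k)}$ using moment monotonicity and $\sum_i\pE\langle a^i,u\rangle^2\le\sigma$, and derive a numerical contradiction with the hypothesis. Your version is slightly more explicit than the paper's (it spells out the SoS step $\pE Y_i^{j+1}\le\pE Y_i^j$ that the paper uses tacitly, and notes the implicit requirement $t\ge 1$ on the degree budget), but the underlying decomposition and arithmetic are identical.
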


\noindent
\emph{Remark.}
For the lower bound on $\pE\langle c,u \rangle^{2}$, we typically only need that it is polynomial.
The algorithm in \pref{thm:refined-reweighing} allows us to recover a vector close to $c$ in time $n^t$ assuming that $\tau^k\ll 1/m$.

\begin{proof}

We will prove the contrapositive.
Let $a_1,\ldots,a_m \in\R^n$  be the columns of $A$ and let $\e'=\e e^{-2k\e}$.
Suppose every column~$c$ satisfies  either  $\pE \langle c,u \rangle^{2(1+k)}< e^{-2\e k}\pE \langle c,u \rangle^{2}$ or $\pE\langle c,u \rangle^{2}< \e'/m$.
We are to show that the pseudo-distribution $\{u\}$ cannot satisfy the constraint $\lVert  A^\top u \rVert_{2(1+k)}^{2(1+k)}\ge e^{2(k-1)\e}\sigma$.
Indeed, these conditions allow us to upper bound
\begin{displaymath}
\pE \lVert  A^\top u \rVert_{2(1+k)}^{2(1+k)}
=\sum_{i=1}^m \pE \langle  a_i, u \rangle^{2(1+k)}
\le e^{-2\e k}\pE \lVert  A^\top u \rVert_2^2 + \e'
\le  (1+\e)e^{-2k\e}\sigma.
\end{displaymath}
It follows that the pseudo-distribution $\{u\}$ cannot satisfy the constraint $\lVert  A^\top u \rVert_{2(1+k)}^{2(1+k)}\ge e^{2(k-1)\e}\sigma$.

\end{proof}

\subsection{Dictionary learning}

The following lemma shows that up to polynomial reweighing the distribution $\{y=A x\}$ gives us access to a distribution $\{W\}$ that satisfies the condition of \pref{thm:refined-reweighing}.

\begin{lemma}
\label{lem:reweighing}
  Let $A\in\R^{n\times m}$ be a $\sigma$-overcomplete dictionary and  and let $\{x\}$ be a $(k,\tau)$-nice distribution over $\R^m$ with $k\ge 4$.
  For $i\in [m]$, let $\cD_i$ be the distribution obtained from reweighing the distribution $\{w=c\langle  Ax ,u \rangle^2\}$ by $x_i^2$, where $c=\E x_i^2/\E x_i^4$.
  Then, $ \langle  \super a i, u \rangle^2 \preceq \E_{\cD_i} w \preceq \langle  \super a i, u \rangle^2 + \tau \sigma \lVert  u \rVert_2^2$.
\end{lemma}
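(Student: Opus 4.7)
The plan is to compute $\E_{\cD_i} w$ explicitly as a polynomial in $u$ and then read off the two SOS inequalities term by term. By the definition of reweighing and the choice $c=\E x_i^2/\E x_i^4$,
\begin{displaymath}
  \E_{\cD_i} w
  = \frac{\E\bigl[x_i^2 \cdot c\langle Ax, u\rangle^2\bigr]}{\E x_i^2}
  = \frac{\E\bigl[x_i^2 \langle Ax,u \rangle^2\bigr]}{\E x_i^4}\mper
\end{displaymath}
Expanding $\langle Ax,u\rangle^2=\sum_{j,\ell\in[m]}x_j x_\ell \langle a^j,u\rangle\langle a^\ell,u\rangle$, the numerator becomes a linear combination of the quadratics $\langle a^j,u\rangle\langle a^\ell,u\rangle$ with coefficients $\E[x_i^2 x_j x_\ell]$, reducing the problem to understanding these degree-$4$ moments.

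The first step is to kill off-diagonal terms via property~(\ref{itm:nice-oddvanish}) of Definition~\ref{def:nice}. For any pair $(j,\ell)$ with $j\ne\ell$, the monomial $x_i^2 x_j x_\ell$ has at least one variable of odd degree (both $x_j$ and $x_\ell$ when neither equals $i$; and both $x_i$ and $x_{j'}$, with $j'\in\{j,\ell\}\setminus\{i\}$, when exactly one equals $i$), so it is non-square and its expectation vanishes. Only diagonal terms $j=\ell$ survive, giving
\begin{displaymath}
  \E\bigl[x_i^2\langle Ax,u\rangle^2\bigr]
  = \E x_i^4 \cdot \langle a^i,u\rangle^2 + \sum_{j\ne i}\E\bigl[x_i^2 x_j^2\bigr]\cdot\langle a^j,u\rangle^2 \mcom
\end{displaymath}
so that $\E_{\cD_i} w = \langle a^i,u\rangle^2 + \sum_{j\ne i}\gamma_{ij}\langle a^j,u\rangle^2$ with $\gamma_{ij}:=\E x_i^2 x_j^2/\E x_i^4$.

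The second step is to bound $0\le\gamma_{ij}\le\tau$ for $j\ne i$. The lower bound is immediate. For the upper bound I exploit scale invariance: $\gamma_{ij}$ is unchanged under any rescaling $x\mapsto\lambda x$, and the final clause of Definition~\ref{def:nice} applied with $d'=4\le k$ provides a rescaling making $\{x\}$ a $(4,\tau)$-nice distribution, so that $\E x_i^4=1$ and $\E x_i^2 x_j^2\le\tau$ in the rescaled distribution, hence $\gamma_{ij}\le\tau$ in the original.

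Finally I read off both SOS inequalities. The lower bound $\langle a^i,u\rangle^2\preceq\E_{\cD_i} w$ is immediate because $\sum_{j\ne i}\gamma_{ij}\langle a^j,u\rangle^2$ is a nonnegative combination of squares. For the upper bound,
\begin{displaymath}
  \langle a^i,u\rangle^2+\tau\sigma\lVert u\rVert_2^2 - \E_{\cD_i} w
  = \tau\sigma\lVert u\rVert_2^2 - \sum_{j\ne i}\gamma_{ij}\langle a^j,u\rangle^2
  \succeq \tau\bigl(\sigma\lVert u\rVert_2^2 - \lVert A^\top u\rVert_2^2\bigr)
  \succeq 0\mcom
\end{displaymath}
where the first $\succeq$ uses $\gamma_{ij}\le\tau$ and $\lVert A^\top u\rVert_2^2=\sum_j \langle a^j,u\rangle^2$, and the second is the SOS form of $\sigma$-overcompleteness obtained from a Cholesky factorization of $\sigma I_n - AA^\top\succeq 0$. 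Everything reduces to a direct moment computation followed by elementary SOS manipulations, so I do not anticipate a real obstacle; the only slightly delicate point is the appeal to the rescaling clause of Definition~\ref{def:nice}, and the scale invariance of $\gamma_{ij}$ makes this costless.
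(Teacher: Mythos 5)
Your proof is correct and follows essentially the same route as the paper's: expand $\E_{\cD_i} w$, use the vanishing of non-square degree-$4$ moments to drop off-diagonal terms, bound the diagonal coefficients by $1$ (for $j=i$) and $\tau$ (for $j\neq i$), and conclude with $\lVert A^\top u\rVert_2^2 \preceq \sigma\lVert u\rVert_2^2$. You are somewhat more explicit than the paper about the scale-invariance of $\gamma_{ij}$ and the appeal to the rescaling clause of Definition~\ref{def:nice} to pass from $(k,\tau)$-niceness to the needed degree-$4$ moment bounds, a point the paper's proof leaves implicit.
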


\begin{proof}
The expectation of $w$ after reweighing by $x_i^2$ satisfies
\begin{displaymath}
  \E_{\cD_i} w
  = \tfrac 1 {\E_{\{x\}} x_i^2}\E_{\{x\}} x_i^2 \cdot c \langle  A x, u \rangle^2
  =  \sum_{j} \tfrac 1{\E_{\{x\}} x_i^4} \E_{\{x\}} x_i^2x_j^2 \langle  \super a i,u \rangle^2 \langle  \super a j,u \rangle^2
\end{displaymath}
The last step uses that all non-square moments of $\{x\}$ vanish.
The desired bounds follow because the coefficient of $\langle  \super a i,u \rangle^2$ is $1$ and  for all indices $j\neq i$, the coefficients of $\langle  \super a j,u \rangle^2$ are all between $0$ and $\tau$.
For the final bounds, we also use $\lVert  A^\top u \rVert_2^2\preceq \sigma \lVert  u \rVert^2$.
\end{proof}

\begin{theorem}[Dictionary learning, polynomial time, single dictionary vector]
\label{thm:dict-poly-single}
  There exists an algorithm that solves the following problem for every desired accuracy $\e>0$, overcompletess $\sigma \ge 2$, in time $n^{O(k)}$ with success probability $n^{-O(k)/\poly(\e)}$ for noise $\tau\le O(\e)$, where $k= (1/\e) \log \sigma + \tfrac{\log m}{\log (1/\tau)}$:
  Given $k$ samples from a distribution of the form $\{y=Ax\}$ and a degree-$k$ pseudo-distribution $\{u\}$ that satisfies $\{\lVert  A u \rVert_{k}^k\ge e^{-\e k},\lVert  u \rVert_2^2=1\}$, where $A$ is a $\sigma$-overcomplete dicionary and $\{x\}$ is a $(4,\tau)$-nice distribution, output a unit vector $c'$ such that there exists a column $c$ of $A$ with $\langle c,c'\rangle ^2\ge 1-O(\e)$ and $\pE \langle  c,u \rangle^{k}\ge e^{-O(\e)k}\pE \langle  c,u \rangle^{2}$.
\end{theorem}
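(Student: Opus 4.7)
The plan is to combine the three tools developed in this section. First, apply \pref{lem:tensor-poly-time} to the pseudo-distribution $\{u\}$: after matching up indices between the theorem and the lemma, the hypothesis $\lVert A^\top u \rVert_k^k\ge e^{-\e k}$ is exactly what the lemma requires, and it produces a column $c$ of $A$ satisfying both $\pE\langle c,u \rangle^{k}\ge e^{-O(\e)k}\pE\langle c,u \rangle^{2}$ and $\pE\langle c,u \rangle^{2}\ge \Omega(\e\cdot e^{-O(\e)k}/m)$. The first inequality already supplies the ``moment'' half of the theorem's conclusion. With our choice $k\ge (\log m)/\log (1/\tau)$, the second inequality exceeds $\tau^k$, so both conditions of \pref{eq:refined-condition} are in force for this~$c$, and the algorithmic work is to turn this $c$ into an explicit unit vector $c'$.

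Second, use the $k$ samples together with \pref{lem:reweighing} to construct a random sum-of-squares polynomial $W$ of degree $2(1+k)$ whose mean obeys the sandwich \pref{eq:nice-reweighing} for the column~$c$. Concretely, for each sample $y_t=Ax_t$, the polynomial $w_t(u)=c\langle y_t, u\rangle^2$ is a degree-$2$ random SOS polynomial, and \pref{lem:reweighing} identifies an (implicit) reweighting of its law by $x_{t,i}^2$, namely $\cD_i$, whose mean lies in the interval $[\langle a^i, u \rangle^2,\langle a^i, u \rangle^2+\tau \sigma \lVert u \rVert^2]$. Taking $W$ to be the product of $1+k$ independent draws from $\cD_i$ for $c=a^i$ then makes $\bar W$ lie between $\langle c, u \rangle^{2(1+k)}$ and $(\langle c, u \rangle^2+\tau \sigma \lVert u \rVert^2)^{1+k}$, which is exactly \pref{eq:nice-reweighing} (with $\tau$ absorbing the extra factor $\sigma$, still $O(\e)$). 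Independence of the $1+k$ factors, combined with the polynomial-variance clause $\E x_i^{2d}\le n^{O(1)}$ built into niceness, supplies the second-moment bound $\E W(u)W(u')\preceq M\bar W(u)\bar W(u')$ with $M=n^{O(1)}$ required by \pref{thm:refined-reweighing}.

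Third, feed the pair $(\{u\},\{W\})$ into \pref{thm:refined-reweighing}. Its guarantee produces a unit vector $c'\in \R^n$ with $\langle c,c' \rangle^2\ge e^{-O(\e+3(1/k+1)\tau)}\ge 1-O(\e)$ in time $n^{O(k)}$, with probability $\tau^2/(M\cdot 2^{O(k)/\poly(\e)})$. Combined with the moment output of step one, this delivers both halves of the theorem's conclusion; booking the overhead of repeating the draw to amplify this probability lands at the stated $n^{-O(k)/\poly(\e)}$ success probability and $n^{O(k)}$ running time.

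The main obstacle is step two: we cannot literally realize the reweighting $\cD_i$, since we observe $y_t$ but not $x_t$, so we have no way to weight a sample by $x_{t,i}^2$. We sidestep this by noting that \pref{thm:refined-reweighing} (via \pref{lem:refined-reweigh}) uses $\bar W$ only through its comparison to $\langle c,u \rangle^{2(1+k)}$, and only requires the \emph{existence} of some column $c$ for which this sandwich holds. Accordingly we actually form $W(u)=\prod_{t=1}^{1+k} c\langle y_t,u \rangle^2$ from the unreweighted samples, and view the law of $W$ as a mixture (over tuples of latent indices weighted by the $x_{t,i}^2$ marginals) whose ``diagonal'' component at the column $c$ from step one coincides with $\cD_c^{\otimes (1+k)}$ and carries mixture weight at least $(\Omega(1/m))^{1+k}$. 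Conditioning on this component realizes the required sandwich, and the resulting $m^{-O(k)}$ loss in success probability is absorbed into the claimed $n^{-O(k)/\poly(\e)}$ bound. Verifying that the second-moment factor $M$ from independence and the niceness variance bound is small enough for \pref{thm:refined-reweighing} to succeed is the main technical calculation left to execute.
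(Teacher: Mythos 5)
Your overall assembly is on the right track: invoke \pref{lem:tensor-poly-time} to locate the target column $c$ and supply the moment condition \pref{eq:refined-condition}, invoke \pref{lem:reweighing} to identify the ideal reweighted distribution $\cD_c$ whose mean satisfies the sandwich \pref{eq:nice-reweighing}, and then push the pair into \pref{thm:refined-reweighing}. You also correctly identify the central obstruction: the reweighting by $x_{t,i}^2$ is not observable, since the algorithm sees $y_t=Ax_t$ but not $x_t$. That is exactly the crux, and the paper takes the same high-level route.

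Where your argument has a genuine gap is in the last step, where you claim to resolve the obstruction by ``view[ing] the law of $W$ as a mixture \dots whose diagonal component at the column $c$ coincides with $\cD_c^{\otimes(1+k)}$ and carries mixture weight at least $(\Omega(1/m))^{1+k}$.'' This mixture decomposition does not exist in general. For the reweighted law $\cD_c^{\otimes k'}$ to appear as a component with weight $p$ inside the un-reweighted law of the samples, the Radon--Nikodym derivative $r(\bar x)^2/\E r(\bar x)^2$ (with $r(\bar x)^2=x_{1,1}^2\cdots x_{k',1}^2$) would have to be bounded above by $1/p$ almost surely; but a $(4,\tau)$-nice coefficient distribution can be unbounded (Bernoulli--Gaussian being the running example), so no such uniform bound holds, and the weight $(\Omega(1/m))^{k'}$ does not correspond to anything provable from the niceness axioms. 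The mixture picture would be valid for a toy model where each $x_t$ has a single active coordinate chosen uniformly, but that is not the hypothesis.

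What actually closes the gap is not a pointwise mixture decomposition but a second-moment comparison. Let $p(W)$ be the success probability of the refined-reweighing algorithm on a fixed polynomial $W$. One writes $\E_{\cD}p(W)=\tfrac{1}{\E r^2}\E_{\{\bar x\}} r(\bar x)^2\,p(W)$ and applies Cauchy--Schwarz to split off $\E_{\{W\}}p(W)$:
\begin{displaymath}
\E_{\cD}p(W)\ \le\ \tfrac{1}{\E r^2}\Bigl(\E_{\{\bar x\}} r(\bar x)^4\,p(W)\cdot \E_{\{W\}}p(W)\Bigr)^{1/2}\mcom
\end{displaymath}
and then uses the polynomial-variance clause of $(4,\tau)$-niceness to bound $\E r^4 \le n^{O(k)}\,(\E r^2)^2$, giving $\E_{\{W\}}p(W)\ge n^{-O(k)/\poly(\e)}$. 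This is a genuinely different (and necessary) step from your mixture argument: it trades a hard pointwise bound, which is false, for a soft moment bound that the niceness hypothesis was engineered to supply. Without this step your success-probability accounting does not go through, so as written the proposal does not establish the stated probability bound.
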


\begin{proof}

We run the algorithm in \pref{thm:refined-reweighing} on the pseudo-distribution $\{u\}$ and the following distribution $\{W\}$ over squared polynomials:
Choose $k'=k/2-1$ independent samples $y_1,\ldots,y_{k'}$ from the distribution $\{y=Ax\}$ and form the degree-$(k-2)$ polynomial $W=\langle  y_1,u \rangle^2\cdots\langle  y_{k'},u \rangle^2$.
This distribution $\{W\}$ does not satisfy the condition in \pref{thm:refined-reweighing} but it turns out to be sufficiently close to a distribution that satisfies the condition.
Let us first verify that the pseudo-distribution $\{u\}$ satisfies the condition of \pref{thm:refined-reweighing} for a vector $c$ as in the theorem above.
Indeed, by \pref{lem:tensor-poly-time}, there exists a column $c$ of $A$ such that $\pE \langle  c,u \rangle^k\ge e^{-O(\e) k}\pE \langle  c,u \rangle^2$ and $\pE \langle  c,u \rangle^2\ge O(\e) e^{-O(\e k)} / m\ge \tau^k$.
(Since $k\ge (1/\e)\log \sigma$, the pseudo-distribution $\{u\}$ satisfies the constraint $\{\lVert  A^\top u \rVert_k^k\ge e^{-O(\e) k}\sigma\}$ as required by \pref{lem:tensor-poly-time}.)
It follows that if we run the algorithm in \pref{thm:refined-reweighing} for a distribution over polynomials that satisfies condition \pref{eq:nice-reweighing} for this column $c$ of the dictionary $A$, then the algorithm outputs a vector $c'$ with the above properties with significant probability.

We will use \pref{lem:reweighing} to reason about the distribution $\{W\}$.
Without loss of generality, we assume that $c$ is the first column of the dictionary $A$.
Let $\bar x=(x_1,\ldots,x_{k'})$ be $k'$ independent samples from $\{x\}$.
(The distribution $\{W\}$ is the same as $\{ \langle A x_1, u \rangle^2\cdots \langle  A x_{k'},u \rangle^2\}$.)
We claim that the distribution $\{W\}$ satisfies \pref{eq:nice-reweighing} after reweighing by the function $r(\bar x)^2=x_{1,1}^2\cdots x_{k',1}^2$ (the product of the square of the first coordinates of $x_1,\ldots,x_{k'}$).
The distribution after reweighing is, up to scaling of the polynomials, equal to the distribution $\cD=\{W=w_1\cdots w_{k'}\}$, where $w_1,\ldots,w_{k'}$ are independent samples from the distribution $\cD_1$ in \pref{lem:reweighing}.
By \pref{lem:reweighing}, this reweighted distribution satisfies the condition \pref{eq:nice-reweighing}, that is,
\begin{displaymath}
  \langle c,u \rangle^{2k'}  \preceq \E_{\cD} W \preceq (\langle  c,u \rangle^2+\tau \sigma \lVert  u \rVert_2^2)^{k'}\mper
\end{displaymath}
Since we assume $\{x\}$ to be $(4,\tau)$-nice, the variance of $\cD$ is bounded by $n^{O(k)}$.
\Dnote{}

Let $\cA$ be the algorithm in \pref{thm:refined-reweighing}.
Since $\cD$ satisfies the conditions of \pref{thm:refined-reweighing}, if we run $\cA$ on the pseudo-distribution $\{u\}$ and the distribution $\cD$ over polynomials, it will succeed with probability $n^{-O(k)/\poly(\e)}$.
We claim that the success probability on the distribution $\{W\}$ (before reweighing) is comparable.
Let $p(W)$ be the probability that the algorithm succeeds for a particular input polynomial $W$.
Under the distribution $\cD$, algorithm $\cA$ has success probability $\E_{\cD}p(W)\ge n^{-O(k)/\poly(\e)}$.
We relate this success probability to the success probability under $\{W\}$ as follows,
\begin{displaymath}
  n^{-O(k)/\poly(\e)}\le \E_{\cD} p(W)
  = \tfrac 1{\E_{\{\bar x\}} r(x)^2 } \E_{\{\bar x\}} r(\bar x)^2 p(W)
  \le   \tfrac 1{\E_{\{\bar x\}} r(x)^2 } \Paren{\E_{\{\bar x\}} r(\bar x)^4 p(W) \cdot \E_{\{W\}} p(W)}^{1/2}\mcom
\end{displaymath}
where the last step uses Cauchy--Schwarz.
The niceness property of $\{x\}$ implies that $\E_{\bar x} r(\bar x)^4 p(W)\le \E_{\bar x} r(\bar x)^4=  (\E_{\{x\}} x_1^4)^{k'}=n^{O(k)}\cdot (\E_{\{x\}}x_1^2)^{2k'}=n^{O(k)}(\E_{\{\bar x\}} r(\bar x)^2)^2$.
Therefore,  the success probability of $\cA$ under the distribution $\{W\}$ (before reweighing)
satisfies $\E_{\{W\}} p(W)\ge n^{-O(k)/\poly(\e)}$.

\end{proof}

\Dnote{}
\Bnote{}

The following theorem gives a polynomial time algorithm for dicionary learning under $(d,\tau)$-nice distributions for all $\tau=n^{\Omega(1)}$.

\begin{theorem}[Dictionary learning, polynomial time]
\label{thm:dict-learn-poly}
There exists an algorithm that for every desired accuracy $\e>0$ and overcompleteness $\sigma\ge 1$ solves the following problem for every $(d,\tau)$-nice distribution with $d \ge d(\e,\sigma)=O(d^{-1}\log \sigma )$ and $\tau\le \tau(\e,\sigma)=(\e^{-1}\log \sigma )^{O(\e^{-1}\log \sigma )}$ in time $n^{(1/\e)^{O(1)} k}$ for $k=d + O(\tfrac{\log m} {\log (1/\tau)})$:
Given $n^{O(d)}/\poly(\tau)$ samples from a distribution $\{y=Ax\}$ for a $\sigma$-overcomplete dictionary $A$ and $(d,\tau)$-nice distribution $\{x\}$, output a set of vectors that is $\e$-close to the set of columns of $A$ (in symmetrized Hausdorff distance).

\end{theorem}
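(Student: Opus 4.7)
The plan is to iterate the single-vector recovery guarantee of \pref{thm:dict-poly-single} inside a blocking loop, in direct analogy to how \pref{thm:tensor-decomp} was proved from \pref{thm:sample}. Given $N=n^{O(d)}/\poly(\tau)$ samples from $\{y=Ax\}$, I would first form the empirical polynomial $P(u) = \tfrac{1}{N}\sum_{i=1}^N \langle y_i,u\rangle^d$. By standard concentration together with \pref{lem:reduce-tensor}, with high probability $P$ is $\tau_1$-close in spectral norm to $\lVert A^\top u\rVert_d^d$ for some $\tau_1 = O(\tau \sigma^d d^d)$, which is still inverse polynomial in $n$ (at the cost of a possibly worse exponent, absorbed by strengthening the niceness parameter slightly). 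In particular, every actual column $c$ of $A$ satisfies $P(c)\ge 1-\tau_1$.

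Second, I maintain a set $S$ of unit vectors, initially empty, and repeat the following. Using the SOS algorithm at degree $O(k)$ with $k=d+O(\log m/\log(1/\tau))$ as in the statement, search for a pseudo-distribution $\{u\}$ satisfying the polynomial constraints $\{\lVert u\rVert_2^2 = 1,\ P(u)\ge 1-2\tau_1,\ \langle s,u\rangle^2\le 1-\gamma\text{ for all }s\in S\}$, where $\gamma$ is a small constant multiple of $\e$. If no such pseudo-distribution exists, terminate and output $S$. Otherwise, the constraint on $P$ combined with the spectral-norm approximation implies the constraint $\lVert A^\top u\rVert_d^d\ge 1-3\tau_1$, and then the SOS \Hoelder chain from \pref{lem:tensor-decomp-main} upgrades this to $\lVert A^\top u\rVert_k^k\ge e^{-\e k}\sigma$, paying a factor $\sigma^{k/(d-2)}$ from $\lVert A^\top u\rVert_2^2\preceq \sigma\lVert u\rVert_2^2$. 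Now the hypothesis of \pref{thm:dict-poly-single} is in force; running that algorithm, and boosting its $n^{-O(k)/\poly(\e)}$ success probability by taking $n^{O(k)/\poly(\e)}$ independent repetitions, produces with high probability a unit vector $c'$ that is $O(\e)$-close to some column of $A$. The blocking constraints $\langle s,u\rangle^2\le 1-\gamma$ ensure $c'$ cannot itself be close to any $s\in S$, so it is added to $S$.

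Termination and correctness follow exactly as in the proof of \pref{thm:tensor-decomp}: every accepted $c'$ is $O(\e)$-close to a column of $A$; because of the blocking constraints each iteration covers a previously uncovered column, so the loop halts within $m$ rounds; and when it halts, the non-existence of a feasible pseudo-distribution implies (by considering the trivial point-mass distribution at any uncovered column) that every column lies within symmetrized Hausdorff distance $\e$ of $S$.

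The main obstacle in making the parameters line up is ensuring \emph{simultaneously} that (i) the SOS \Hoelder chain loses only a factor $e^{-\e k}$ when passing from $\lVert A^\top u\rVert_d^d$ to $\lVert A^\top u\rVert_k^k$ --- which forces $k\gtrsim (1/\e)\log \sigma$ and a pseudo-distribution degree of roughly $k+2k/(d-2)$ --- and (ii) the refined reweighing of \pref{thm:refined-reweighing} can find a column $c$ with $\pE\langle c,u\rangle^{2(1+2k)}\ge e^{-\e k}\pE\langle c,u\rangle^2$ and, crucially, the polynomial lower bound $\pE\langle c,u\rangle^2 \ge \tau^k$ needed to control the bias introduced by samples of the reweighing distribution $\{w = c\langle Ax, u\rangle^2\}$. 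The averaging argument from \pref{lem:tensor-poly-time} produces a column with $\pE\langle c,u\rangle^2 \ge \Omega(\e e^{-O(\e k)}/m)$, and matching this against $\tau^k$ forces the additive term $k\gtrsim \log m/\log(1/\tau)$ in the degree. The key structural observation that makes the final runtime polynomial is that when $\tau\le n^{-\delta}$ we have $\log(1/\tau)=\Omega(\log m)$, so this additive term is $O(1/\delta)$, the overall degree $k$ remains a constant depending on $\e,\sigma,\delta$, and the SOS search runs in time $n^{(1/\e)^{O(1)} k}$ as claimed.
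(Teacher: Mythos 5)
Your proposal follows essentially the same route as the paper's proof: estimate the empirical polynomial $P$ from samples, invoke \pref{thm:dict-poly-single} on a degree-$k$ pseudo-distribution satisfying $P\ge 1-O(\e)$ and $\lVert u\rVert_2^2=1$ (after upgrading via the SOS \Hoelder chain of \pref{lem:tensor-decomp-main} to the $\lVert A^\top u\rVert_k^k$ constraint required by \pref{lem:tensor-poly-time}), and then iterate with blocking constraints exactly as in the proof of \pref{thm:tensor-decomp}. Your parameter accounting --- the two sources $k\gtrsim(1/\e)\log\sigma$ from the \Hoelder loss and $k\gtrsim\log m/\log(1/\tau)$ from matching the averaging lower bound $\pE\langle c,u\rangle^2\ge\Omega(\e e^{-O(\e k)}/m)$ against $\tau^k$ --- is the correct reading of why the additive term in $k$ appears and why $\tau=n^{-\Omega(1)}$ yields polynomial time, and matches the paper's intent.
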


\begin{proof}

We will show how to use \pref{thm:dict-poly-single} to recover a single vector that is close to one of the columns of $A$.
By repeating this step in the same way as in the proof of \pref{thm:tensor-decomp} (noisy tensor decomposition) we can recover a set of vectors that is close to the set of columns of $A$.

To recover a single vector, we estimate from the samples of $\{y=Ax\}$ a polynomial $P$ that is close to $\lVert  A^\top  u \rVert_d^d$ in the same way as in the proof of \pref{thm:dict-learn}.
(The distance of $P$ from $\lVert  A^\top u \rVert_d^d$ in spectral norm will be $O(\tau d^d)=O(\e)$.)
Next, we compute a degree-$k$ pseudo-distribution $\{u\}$ that satisfies the constraints $\{P\ge 1-\e, \lVert  u \rVert_2^2=1\}$.\footnote{To recover all vectors, we would also add constraints $\{\langle  s,u \rangle^2\le 1-\gamma\}$ for all vectors $s$ that have already been recovered (see proof of \pref{thm:tensor-decomp}).}
The same argument as in the proof of \pref{lem:tensor-decomp-main} shows that $\{u\}$ also satisfies the constraint $\{\lVert  A u \rVert_k^k\ge e^{O(\e)k}\}$, which means that $\{u\}$ satisfies the premise of \pref{thm:dict-poly-single}.
Therefore, the algorithm in \pref{thm:dict-poly-single} recovers a vector close to one of the columns of $A$.

\end{proof}

\Dnote{}

\section{Conclusions and Open Problems}
\label{sec:conclusion}

The \emph{Sum of Squares} method has found many uses across a variety of disciplines, and in this work we demonstrate its potential 
for solving unsupervised learning problems in regimes that have so far eluded other algorithms. 
It is an interesting direction to identify other problems that can be solved using this algorithm.

The generality of the SOS method comes at a steep cost of efficiency. 
It is a fascinating open problem, and one we are quite optimistic about, to use the ideas from the SOS-based algorithm
to design practically efficient algorithms.

\addreferencesection
\bibliographystyle{amsalpha}
\bibliography{bib/soslearning,bib/mr,bib/dblp,bib/orig-dblp,bib/scholar}
\appendix
\newcommand{\RC}{\ensuremath{\mathcal{R}}}
\newcommand{\q}{\ensuremath{q}}
\newcommand{\AMGM}{\ensuremath{Q}}
\newcommand{\parts}[1]{\ensuremath{P_{#1}}}
\newcommand{\sym}{R}

\section{Proof of Lemma~\ref{lem:monomial-ineq}}
\label{sec:monomial-ineq}
Lemma~\ref{lem:monomial-ineq} is a consequence of the following sum-of-squares version of the AM-GM inequality.%
\footnote{The first sum-of-squares proof of the AM-GM inequality dates back to Hurwitz in 1891~\cite{Hurwitz1891}.  For related results and sums-of-squares proofs of more general sets of inequalities, see \cite{ Reznick87, Reznick89,frenkel14}.}

\Dnote{}
\begin{lemma}\label{lem:AMGM}
Let $w_1,\dots,w_n$ be polynomials.  Suppose $w_1,\ldots,w_n\succeq 0$. Then,
$$\frac{w_1^n+\dots +w_n^n}{n} \succeq w_1 w_2\cdots w_n\mper$$
\end{lemma}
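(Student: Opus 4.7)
The plan is to establish an SOS version of Muirhead's inequality by decomposing the majorization $(n,0,\ldots,0)\succeq(1,1,\ldots,1)$ into a chain of single-step ``$t$-transforms,'' each replacing a pair of coordinates $(a,b)$ with $a\geq b+2$ by $(a-1,b+1)$. Since
\[
  \sum_{\sigma}w^{(n,0,\ldots,0)\circ\sigma}=(n-1)!\sum_i w_i^n,\qquad \sum_{\sigma} w^{(1,1,\ldots,1)\circ\sigma}=n!\,w_1\cdots w_n,
\]
with $\sigma$ ranging over permutations of $[n]$, the lemma will follow once I prove, for each single-step $t$-transform $\alpha\mapsto\beta$, the relation $\sum_\sigma w^{\alpha\circ\sigma}\sgeq\sum_\sigma w^{\beta\circ\sigma}$, compose along the chain (which has length at most $n-1$, for instance via the explicit sequence $(n-k,\underbrace{1,\ldots,1}_{k},0,\ldots,0)$ for $k=0,\ldots,n-1$), and divide by $(n-1)!$.

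The technical core is a two-variable polynomial identity: for integers $a\geq b+2\geq 2$,
\[
  \alpha^a\beta^b+\alpha^b\beta^a-\alpha^{a-1}\beta^{b+1}-\alpha^{b+1}\beta^{a-1}=\alpha^b\beta^b\,(\alpha-\beta)^2\sum_{k=0}^{a-b-2}\alpha^k\beta^{a-b-2-k},
\]
which follows by factoring $(\alpha-\beta)(\alpha^{a-b-1}-\beta^{a-b-1})$. Substituting the SOS polynomials $w_i,w_j$ for $\alpha,\beta$, every factor on the right is SOS: $(w_i-w_j)^2$ is a square, and the remaining factors are products of nonnegative integer powers of the SOS polynomials $w_i,w_j$, which are SOS because the product of SOS polynomials is SOS.

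To lift this to the symmetrized sums, I would partition the permutations of $[n]$ according to their values on the coordinates not being transformed. Within each such class, the difference $\sum_\sigma w^{\alpha\circ\sigma}-\sum_\sigma w^{\beta\circ\sigma}$ reduces to the two-variable identity applied to the transformed pair of indices, multiplied by a fixed monomial in the remaining $w_k$'s; that monomial is itself SOS, so each class contributes an SOS certificate and their sum is SOS. Chaining these reductions along the majorization path then completes the proof.

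The main obstacle is combinatorial rather than conceptual: tracking multiplicities in the symmetrization so that the two-variable SOS certificates combine cleanly at each step of the chain, and verifying that the intermediate exponent vectors remain valid (nonnegative, sum equal to $n$) throughout the chosen $t$-transform sequence. The two-variable identity itself is elementary, and the reduction to it is the key structural step.
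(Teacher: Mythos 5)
Your proposal is correct and follows essentially the same route as the paper: the paper's sequence $R_0,\dots,R_{n-1}$ of symmetrized polynomials corresponds exactly to your majorization chain $(n-k,1^k,0^{n-k-1})$, and the paper's certificate for $R_{k-1}-R_k\succeq 0$ is precisely your two-variable identity $(w_a^{n-k}-w_b^{n-k})(w_a-w_b)=(w_a-w_b)^2\sum_j w_a^j w_b^{n-k-1-j}$ applied after factoring out the symmetrized monomial in the untouched coordinates. The only difference is cosmetic: you frame it explicitly as a Muirhead/$t$-transform argument, while the paper defines the $R_k$'s directly and carries out the combinatorial bookkeeping (which you correctly flag as the remaining work) via a clean regrouping of the sum over $S_n$.
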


To see that this lemma implies \pref{lem:monomial-ineq}, write for a multi-index $\alpha$ with $\card{\alpha}=s$ the polynomial
$w^\alpha$ as a product %
$w^\alpha=\prod_{j=1}^{s} w_{i_j},$
where $w_i$ is repeated $\alpha_i$ times.
(E.g., we would write $w_1^2 w_2 w_3^2$  as $w_1 w_1 w_2 w_3 w_3$ and we would have $(i_1,\dots,i_5)=(1,1,2,3,3)$.)
Then applying \pref{lem:AMGM} to the polynomials $w_{i_1},\ldots,w_{i_s}$ gives the inequality asserted in \pref{lem:monomial-ineq},
$$w^\alpha = w_{i_1}\cdots w_{i_s} \preceq  \frac{w_{i_1}^s+\dots+w_{i_s}^{s}}{s} = \sum_i \frac {\alpha_i}{\card{\alpha}} w_i \sleq \sum_i w_i^s,$$
where the second inequality uses that $0\le \alpha_i/\card{\alpha}\le 1$ and the premise $w_i\sgeq 0$.

\medskip

\paragraph{Proof of \pref{lem:AMGM}}

To prove \pref{lem:AMGM}, we will give a sequence of polynomials
$\sym_0,\dots,\sym_{n-1}$ such that
$\sym_0 =(z_1^n+\dots z_n^n)/n$, $\sym_{n-1}=z_1\dots z_n$, and
 $\sym_0 \sgeq \dots \sgeq \sym_{n-1}$.
To this end, let
$$R_k=\frac{1}{n!}\sum_{\sigma\in S_n} w_{\sigma_1}^{n-k} \prod_{j=2}^{k+1} w_{\sigma_j}
,$$
where $S_n$ denotes the symmetric group on $n$ elements.
So, for instance,
\begin{align*}
R_0&=\frac{1}{n!}\sum_{\sigma\in S_n} w_{\sigma_1}^{n}
    =\frac{1}{n}\left(w_1^n+\dots+w_n^n\right),\\
R_1&=\frac{1}{n!}\sum_{\sigma\in S_n} w_{\sigma_1}^{n-1} w_{\sigma_2}
    =\frac{1}{n(n-1)}\left(w_1^{n-1}w_2+w_1^{n-1}w_3+w_1^{n-1}w_4+\dots+w_n^{n-1} w_{n-1}\right),\\
R_2&=\frac{1}{n!}\sum_{\sigma\in S_n} w_{\sigma_1}^{n-2} w_{\sigma_2} w_{\sigma_3}
    =\frac{1}{n\,\binom{n-1}{2}}\left(w_1^{n-2}w_2 w_3+w_1^{n-2}w_2 w_4 +\dots+w_n^{n-2} w_{n-2}w_{n-1}\right),
 \text{\ \ and }\\
R_{n-1}&=\frac{1}{n!}\sum_{\sigma\in S_n} w_{\sigma_1} w_{\sigma_2}\cdots w_{\sigma_n}=w_1 w_2\cdots w_n.
\end{align*}
The following claim will then complete the proof:
\begin{claim}
For any $k\in\{1,\dots,n-1\}$,
$R_{k-1}-R_k$ is a sum of squares.
\end{claim}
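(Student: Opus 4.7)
The plan is to exhibit $R_{k-1}-R_k$ as a manifestly non-negative combination by pairing up permutations in $S_n$ via the fixed-point-free involution $\iota(\sigma)=\sigma\circ(1\ k{+}1)$ (swapping the values in positions $1$ and $k+1$; this is a genuine involution since $k\ge 1$ so $k+1\neq 1$). Under this pairing, $S_n$ partitions into $n!/2$ orbits $\{\sigma,\iota(\sigma)\}$, and I would compute the contribution of each orbit to $R_{k-1}-R_k$ and show it is an SOS polynomial. Summing over orbits and dividing by $n!$ then gives the claim.

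For a fixed $\sigma$, the contribution to $R_{k-1}-R_k$ is
\[
w_{\sigma_1}^{n-k+1}w_{\sigma_2}\cdots w_{\sigma_k} - w_{\sigma_1}^{n-k}w_{\sigma_2}\cdots w_{\sigma_{k+1}} = w_{\sigma_1}^{n-k}(w_{\sigma_1}-w_{\sigma_{k+1}})\cdot w_{\sigma_2}\cdots w_{\sigma_k}.
\]
Because $\iota(\sigma)$ only swaps the values at positions $1$ and $k+1$, its contribution is the same expression with $\sigma_1$ and $\sigma_{k+1}$ exchanged, while the common factor $w_{\sigma_2}\cdots w_{\sigma_k}$ is preserved. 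Adding the two contributions yields
\[
(w_{\sigma_1}-w_{\sigma_{k+1}})\bigl(w_{\sigma_1}^{n-k}-w_{\sigma_{k+1}}^{n-k}\bigr)\,w_{\sigma_2}\cdots w_{\sigma_k}
= (w_{\sigma_1}-w_{\sigma_{k+1}})^2 \Bigl(\sum_{j=0}^{n-k-1} w_{\sigma_1}^{j}w_{\sigma_{k+1}}^{n-k-1-j}\Bigr) w_{\sigma_2}\cdots w_{\sigma_k},
\]
using the identity $a^{n-k}-b^{n-k}=(a-b)\sum_{j=0}^{n-k-1}a^jb^{n-k-1-j}$ with $a=w_{\sigma_1}$, $b=w_{\sigma_{k+1}}$.

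It remains to observe that each summand of this expression is itself SOS. The factor $(w_{\sigma_1}-w_{\sigma_{k+1}})^2$ is a square, and every other factor $w_{\sigma_1}^j w_{\sigma_{k+1}}^{n-k-1-j}w_{\sigma_2}\cdots w_{\sigma_k}$ is a product of polynomials, each of which is SOS by the hypothesis $w_i\succeq 0$. Since products of SOS polynomials are SOS (via $(\sum p_a^2)(\sum q_b^2)=\sum(p_aq_b)^2$), each orbit's contribution is a sum of squares, and hence so is $R_{k-1}-R_k$. I do not anticipate any real obstacle here; the only design choice is picking the right involution, and the edge cases $k=1$ (empty product $w_{\sigma_2}\cdots w_{\sigma_k}=1$) and $k=n-1$ (the geometric-series sum collapses to a single $1$) both go through unchanged.
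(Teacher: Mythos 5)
Your proof is correct and is essentially the paper's own argument, just organized at a slightly finer granularity: where the paper collects all permutations with $\sigma_1=a$, $\sigma_{k+1}=b$ into one bracket and then symmetrizes over $a\leftrightarrow b$, you pair permutations two at a time via the involution $\sigma\mapsto\sigma\circ(1\ k{+}1)$; both routes land on the identical factorization $(w_a-w_b)^2\bigl(\sum_j w_a^j w_b^{n-k-1-j}\bigr)\cdot(\text{product of }w_{\sigma_j}\text{'s})$ and the same SOS conclusion.
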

\begin{proof}
For a given permutation $\sigma\in S_n$, the corresponding monomials in $R_{k}$ and $R_{k-1}$ will share many of the same variables, differing only in the exponents of $w_{\sigma_1}$ and $w_{\sigma_{k+1}}$.
We will thus try to arrange the terms of $R_{k-1}-R_{k}$ so that we can pull out the common variables, which will let us reduce our inequality to one involving only two variables.
\begin{align*}
R_{k-1}-R_k
&=\frac{1}{n!}\sum_{\sigma\in S_n}
    \left(
        \Bigg(w_{\sigma_1}^{n-k+1} \prod_{j=2}^{k} w_{\sigma_j}\Bigg)-
        \Bigg(w_{\sigma_1}^{n-k} \prod_{j=2}^{k+1} w_{\sigma_j}\Bigg)
    \right)\\
&=\frac{1}{n!}\sum_{\sigma\in S_n}
    w_{\sigma_1}^{n-k}
    \left(
         w_{\sigma_1}-w_{\sigma_{k+1}}
    \right)
    \Bigg(\prod_{j=2}^{k} w_{\sigma_j}\Bigg)\\
&=\frac{1}{n!}
\sum_{\substack{a,b\in [n]\\ a\neq b}}
\sum_{\substack{\sigma\in S_n\\ \sigma_1=a\\ \sigma_{k+1}=b}}
    w_{\sigma_1}^{n-k}
    \left(
         w_{\sigma_1}-w_{\sigma_{k+1}}
    \right)
    \Bigg(\prod_{j=2}^{k} w_{\sigma_j}\Bigg)\\
&=\frac{1}{n!}
\sum_{\substack{a,b\in [n]\\ a\neq b}}
    w_{a}^{n-k}
    \left(
         w_{a}-w_{b}
    \right)
\sum_{\substack{\sigma\in S_n\\ \sigma_1=a\\ \sigma_{k+1}=b}}
 \prod_{j=2}^{k} w_{\sigma_j}\\
 &=\frac{1}{n!}
\sum_{\substack{a,b\in [n]\\ a< b}}
    \big(w_{a}^{n-k}-w_b^{n-k}\big)
    \left(
         w_{a}-w_{b}
    \right)\cdot
\left\{\rule{0cm}{1.1cm}\right.
\sum_{\substack{\sigma\in S_n\\ \sigma_1=a\\ \sigma_{k+1}=b}}
 \prod_{j=2}^{k} w_{\sigma_j}
 \left.\rule{0cm}{1.1cm}\right\}.
\end{align*}
Since the $w_{i}$ are sums of squares, the expression inside the braces is as well.
It is therefore enough to show that
    $\left(w_{a}^{n-k}-w_b^{n-k}\right)
    \left(
         w_{a}-w_{b}
    \right)$
is a sum of squares.
This follows from the fact that
$$w_{a}^{n-k}-w_b^{n-k}=(w_a-w_b)\left(w_a^{n-k-1}+w_a^{n-k-2}w_b +\dots+w_b^{n-k-2}w_a+w_b^{n-k-1}\right),$$
and thus
 $$\big(w_{a}^{n-k}-w_b^{n-k}\big)
    \left(
         w_{a}-w_{b}
    \right)
    =
    (w_a-w_b)^2\left(w_a^{n-k-1}+w_a^{n-k-2}w_b +\dots+w_b^{n-k-2}w_a+w_b^{n-k-1}\right).$$
\end{proof}

\end{document}